
\documentclass[10pt,twocolumn,letterpaper]{article}

\usepackage{iccv}              

%
%


\usepackage{amsfonts, amssymb, amsthm, amsmath}
\usepackage{multicol, multirow}
\usepackage{algorithm, algorithmic}
\usepackage{comment}
\usepackage{wrapfig}

\newcommand{\db}{\texttt{DB}}

\newtheorem{definition}{Definition}
\newtheorem{proposition}{Proposition}
\newtheorem{lemma}{Lemma}
\newtheorem{fact}{Fact}
\newtheorem{assumption}{Assumption}

\usepackage{kotex}

%
\definecolor{iccvblue}{rgb}{0.21,0.49,0.74}
\usepackage[pagebackref,breaklinks,colorlinks,allcolors=iccvblue]{hyperref}


\newcommand{\Hquad}{\hspace{0.75em}}

\title{IDFace: Face Template Protection for Efficient and Secure Identification}

\author{Sunpill Kim$^{1\dagger}$\Hquad Seunghun Paik$^{1\dagger}$\Hquad Chanwoo Hwang$^{1}$\Hquad Dongsoo Kim$^{1}$\Hquad Junbum Shin$^{2}$\Hquad Jae Hong Seo$^{1}$\thanks{Corresponding author}\\
$^{1}$Department of Mathematics \& Research Institute for Natural Sciences, Hanyang University\\
$^{2}$CryptoLab Inc.\\
{\tt\small \{ksp0352, whitesoonguh, aa5568, frds37, jaehongseo\}@hanyang.ac.kr, junbum.shin@cryptolab.co.kr}
}

\begin{document}
\maketitle

\def\thefootnote{$\dagger$}\footnotetext{These co-first authors contributed equally to this work}\def\thefootnote{\arabic{footnote}}

\begin{abstract}
As face recognition systems (FRS) become more widely used, user privacy becomes more important. 
A key privacy issue in FRS is protecting the user’s face template, as the characteristics of the user’s face image can be recovered from the template. Although recent advances in cryptographic tools such as homomorphic encryption (HE) have provided opportunities for securing the FRS, HE cannot be used directly with FRS in an efficient plug-and-play manner. 
In particular, although HE is functionally complete for arbitrary programs, it is basically designed for algebraic operations on encrypted data of predetermined shape, such as a polynomial ring. 
Thus, a non-tailored combination of HE and the system can yield very inefficient performance, and many previous HE-based face template protection methods are hundreds of times slower than plain systems without protection. 
In this study, we propose $\mathsf{IDFace}$, a new HE-based secure and efficient face identification method with template protection. 
$\mathsf{IDFace}$ is designed on the basis of two novel techniques for efficient searching on a (homomorphically encrypted) biometric database with an angular metric. 
The first technique is a template representation transformation that sharply reduces the unit cost for the matching test. 
The second is a space-efficient encoding that reduces wasted space from the encryption algorithm, thus saving the number of operations on encrypted templates. 
Through experiments, we show that $\mathsf{IDFace}$ can identify a face template from among a database of 1M encrypted templates in 126ms, showing only \textcolor{red}{$2\times$} overhead compared to the identification over plaintexts.
\vspace{-4mm}
\end{abstract}

\section{Introduction}
Recent advancements of the metric learning on the hypersphere have greatly improved the performance of face recognition systems (FRS)~\cite{deng2019arcface, meng2021magface, wen2022sphereface2, boutros2022elasticface, kim2022adaface, zhou2023uniface, jia2023unitsface, kim2024keypoint}, enabling various real-world applications of them. However, as the advanced FRS extracts more discriminative characteristics from a user's face image to his/her face template, privacy concerns regarding the secure storage of these templates become intensified. The advances on FRSs have focused on improving performance without protecting users' face templates, and using them without appropriate template protection methods can cause severe privacy problems. For example, several studies~\cite{mai2018reconstruction,duong2020vec2face, kansy2023controllable, shahreza2023face, shahreza2023template, jung2024face, shahreza2024face, kim2024scores} showed that face images can be reconstructed from corresponding unprotected face templates, even without seeing the internal parameters of the target FRS. Even worse, the reconstructed faces can be exploited for further malicious purposes, \textit{e.g.}, impersonating the identity of the recovered faces in other FRSs, including commercial ones~\cite{wenger2022assessing, kim2024scores}. Hence, for real-world applications such as identification at the airport or entrance of a building, as well-documented in GDPR~\cite{voigt2017eu}, secure storage of templates becomes one of the most important security goals.

\begin{figure*}[t]
    \vspace{-0.5mm}
    \centering
    \includegraphics[width=\textwidth]{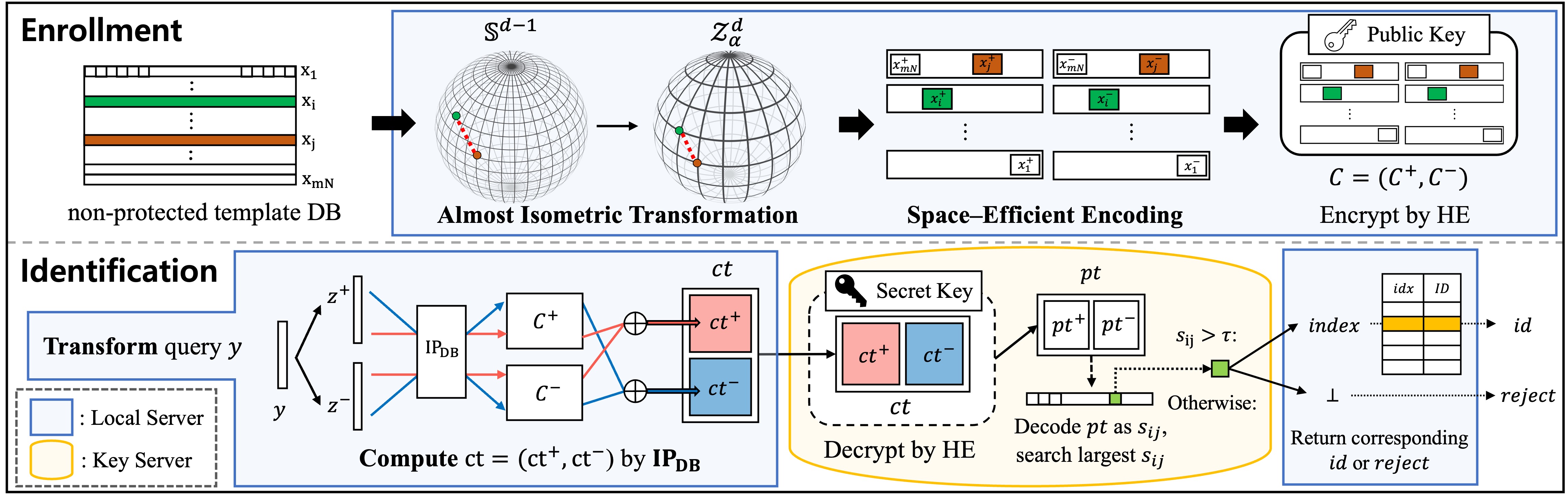}
    \vspace{-7mm}
    \caption{Overview of $\mathsf{IDFace}$. Detailed explanations about each component will be presented in Section~\ref{sec:4}. Best viewed in color.\label{fig1}
    }     
    \vspace{-5mm}
\end{figure*}

Faces are a form of biometrics, and biometric template protection (BTP) has been researched for a long time in various types of biometrics, including irises~\cite{hao2006combining, yang2007secure, lai2017cancellable}, fingerprints~\cite{nandakumar2007fingerprint, tuyls2005practical, jin2017ranking, choi2024blindtouch}, and recently faces~\cite{boddeti2018secure, kim2021ironmask, lai2021efficient, engelsma2022hers, choi2024blind}. The security notions of BTP are well established and have three standard requirements according to ISO/IEC 24745~\cite{ISO24745}: irreversibility, revocability, and unlinkability. For pursuing concrete and provable security, cryptographic tools such as cryptographic hash functions and homomorphic encryptions (HE)~\cite{gentry2009fully} are widely employed on BTP designs. Despite strong security guarantees from these cryptographic tools, simultaneously achieving an appropriate level of security without sacrificing performance is still challenging. One obstacle is that cryptographic tools are designed to deal with specific types of data, \textit{e.g.}, bit-strings or polynomials. In particular, contrary to classical biometrics such as iris and fingerprint, whose templates are usually represented as bit-strings~\cite{daugman2004iris, maltoni2009handbook}, most FRSs deal with real-valued templates. This makes designing BTPs for faces more challenging than other classical biometrics. Moreover, for applying BTPs in closed-set face recognition scenarios where the user's images can be utilized for training or fine-tuning the FRS, additional training for template protections is allowed to improve performance~\cite{kumar2016deep, talreja2019zero, dang2020fehash, osorio2022stable}. However, this approach is not plausible in open-set face recognition scenarios where the train dataset is completely independent from users of a FRS.
 
FRSs are generally deployed in two different scenarios: verification and identification. While one identity is enrolled and verified in the verification scenario, multiple identities are enrolled and the closest one is identified in the identification scenario. In practice, rapid verification/identification procedures are usually more important than rapid enrollment. Contrary to the verification scenario, using heavy cryptographic tools for protection can be a burden in large-scale identification scenarios such as in airports, because their computational cost increases linearly with the number of identities. Recently, there have been several BTP proposals for identification~\cite{drozdowski2019application, Bausp2022improved, jindal2020secure, engelsma2022hers, huang2023efficient, osorio2022stable, drozdowski2021feature, bauspiess2023hebi, choi2024blind} using HE, which enables arbitrary computation on encrypted data. However, despite several optimization techniques through SIMD (Single Instruction, Multiple Data) operations, the resulting performance of theirs remains impractical for large-scale applications.

\subsection{Contribution}

We find that the main cause of inefficiency in all previous approaches is that their use of HE was not tailored to coordinate with face templates. To overcome this limitation, we introduce two novel ideas to properly process templates so that they can be efficiently combined with HE. With these ideas, we propose $\mathsf{IDFace}$, a \textit{plug-and-play} HE-based face template protection method specialized for large-scale identification. Our $\mathsf{IDFace}$ achieves a drastic improvement in efficiency without significant accuracy degradation.
In Fig.~\ref{fig1}, we illustrated the brief overview of the proposed $\mathsf{IDFace}$.

Our first idea is employing a novel transformation from a real-valued unit vector to a ternary vector comprised of $\{0, \pm1\}$, while largely preserving the distance relationship. This facilitates harmonizing the aforementioned cryptographic tools with FRSs without significant accuracy degradation. In particular, our transformation enables computing the inner product from only additions. Since multiplications over encrypted data are much more expensive than additions, it massively accelerates the calculation of the inner product between templates when combined with HE.

Our second idea is introducing a space-efficient encoding for HE-based template protection schemes. We found that, for security reasons, each message slot of HE should occupy a considerably larger number of bits than that necessary for representing feature vectors. This gap becomes larger after applying the proposed transformation. To exploit the full capacity of each message slot, we propose a new encoding of feature templates to process much more of them simultaneously. This maximizes the advantage of SIMD operations while saving a large amount of storage.

We conducted extensive analyses on $\mathsf{IDFace}$. First, we showed that our transformation does preserve the distance relationship to some extent under the suggested parameter settings. In addition, we experimentally verified that the accuracy drop by $\mathsf{IDFace}$ for various FRSs is less than $1\%$ on famous benchmark datasets, including LFW~\cite{huang2008labeled}, CFP-FP~\cite{sengupta2016frontal}, AgeDB~\cite{moschoglou2017agedb}, and IJB-C~\cite{maze2018iarpa}. 
The time spent for identification in $\mathsf{IDFace}$ is 126ms-753ms for a database with 1 million identities, depending on the amount of accuracy degradation, which only takes $\mathbf{2\times}$-$\mathbf{12\times}$ overhead compared to the setting without protection.

\subsection{Related Works}

We briefly review previous HE-based BTP proposals. In this paper, we will focus on HE-based BTPs only; surveys on other types of BTPs, \textit{e.g.}, FC-based~\cite{talreja2019zero, jindal2019securing, kumar2018face, mohan2019significant, kim2021ironmask} or obfuscation-based~\cite{mi2024privacy, jin2024faceobfuscator, wang2023privacy}, are deferred to Appendix~\textcolor{iccvblue}{A}.

HE is a famous tool for privacy-preserving computation, and lots of HE-based BTPs have been proposed~\cite{boddeti2018secure, drozdowski2019application, Bausp2022improved, jindal2020secure, engelsma2022hers, huang2023efficient, osorio2022stable, drozdowski2021feature, bauspiess2023hebi, choi2024blind}. The main advantage of utilizing HE is two-fold: since HE enables the computation of the matching score on an encrypted domain, this approach not only stores data and computes scores securely but also does not result in accuracy degradation. Moreover, thanks to SIMD operations, HE-based BTPs can leverage parallel computation in large-scale identification scenarios.

However, as \cite{drozdowski2019application} reported, a naive application of HE is quite far from the practical applications. Recent studies on designing HE-based BTPs have focused on optimization techniques to reduce the enormous computational cost of computing the matching score. \cite{boddeti2018secure} proposed a method to reduce the number of ciphertext rotations when computing the inner product. \cite{Bausp2022improved, jindal2020secure, choi2024blind} focused on the waste of message slots in each ciphertext.
\cite{engelsma2022hers} proposed a novel idea to encrypt the database for computing matching scores of a bunch of enrolled templates at the same time.
Recently, a series of works~\cite{bassit2022multiplication, bassit2023improved, bassit2025practical} proposed a look-up table-based method during the score calculation, thus eliminating the need for homomorphic multiplications.
In other directions, some proposals attempted to reduce the number of calculated matching scores~\cite{osorio2022stable, drozdowski2021feature, bauspiess2023hebi} or propose an efficient matching algorithm~\cite{ibarrondo2023grote} using scores.
Many of them, including~\cite{boddeti2018secure, engelsma2022hers, Bausp2022improved}, considered dimensionality reduction for further speed-up on the identification process while sacrificing accuracy. Nevertheless, we found that all the previous approaches still suffer from expensive computational costs when a large number of identities, \emph{e.g.} 1 million, are enrolled into the system.
More detailed comparisons between theirs and our $\mathsf{IDFace}$ will be given in Tab.~\ref{tab:comp_res}, Section~\ref{sec:5_1}.

\section{Preliminaries}
\noindent \textbf{Notation.} Throughout this paper, we will use the following notation. A vector is denoted by a lowercase letter shown in bold, such as $\mathbf{x} \in \mathbb{R}^{d}$, the components of which are denoted as a tuple $(x_{1}, x_{2}, \dots, x_{d})$. The standard inner product is denoted by $\langle\cdot,\cdot\rangle$. We denote $\mathbb{S}^{d-1}$ as a unit $d$-dimensional sphere embedded on $\mathbb{R}^{d}$. We denote a set of non-negative numbers less than $n\in \mathbb{N}$ as $[n]$.

\subsection{Biometric Identification System}

We first formalize the biometric identification system with a feature extractor, such as FRSs. For the set $\mathcal{I}$ of (preprocessed) biometrics and a metric space $(\mathcal{X}, d_\mathcal{X})$, the feature extractor is a function from $\mathcal{I}$ to $\mathcal{X}$. The output of the feature extractor is called (unprotected) biometric template. The feature extractor is expected to be a \textit{similarity-preserving mapping} from the biometrics to the templates: two biometric templates from the same identity should be close enough, or vice versa. We remark that many recently proposed face recognition models~\cite{deng2019arcface, meng2021magface, wen2022sphereface2, boutros2022elasticface, kim2022adaface, zhou2023uniface, jia2023unitsface, kim2024keypoint} adopt $\mathcal{X} = \mathbb{S}^{d-1}$ for $d = 512$ with angular metric as $d_\mathcal{X}$. 

The operation of a biometric identification system consists of two phases, enrollment and identification, between two parties: the server and the user. In the enrollment, the user requests to enroll in the system by presenting his/her biometrics and an identifier. Then the server extracts a template from the user's biometrics through the feature extractor and stores it in the server's database with the corresponding identifier. In the identification, the user requests to identify him/herself by presenting the biometrics. The server first extracts a biometric template of the queried biometrics through the feature extractor and finds the identifier in the server's database whose corresponding template is closest to the extracted one. If the distance between these templates is under the predetermined threshold, then the server returns the found identifier; otherwise, the server returns ``reject".

\subsection{Homomorphic Encryption}

HE is a cryptographic tool that enables executing arithmetic operations on an encrypted domain. Although there exist HE schemes supporting arbitrary arithmetic, called fully HE (FHE)~\cite{fan2012somewhat, brakerski2014leveled,cheon2017homomorphic}, an additive homomorphic property suffices for our purpose. 
Our proposed method is not limited to a specific HE scheme; it supports both addition-only homomorphic encryption (AHE) schemes, such as the Paillier cryptosystem~\cite{paillier1999public}, and FHE schemes.

Many FHE schemes support SIMD additions for each message slot, which is denoted by $\oplus$. For any plaintexts $\mathbf{x}$ and $\mathbf{y}$ of dimension $N$, the following equality holds:
\begin{align}
\mathsf{DEC}_{\mathsf{sk}}(\mathsf{ENC}_{\mathsf{pk}}(\mathbf{x+y})) = \mathsf{DEC}_{\mathsf{sk}}(\mathsf{ENC}_{\mathsf{pk}}(\mathbf{x}) \oplus \mathsf{ENC}_{\mathsf{pk}}(\mathbf{y})), \nonumber
\end{align}
where $\mathsf{ENC}_{\mathsf{pk}}$ and $\mathsf{DEC}_{\mathsf{sk}}$ are encryption and decryption algorithms of the FHE scheme, respectively.

Naturally, additions can be extended to scalar multiplications. For the sake of simplicity, we use a special notation $\odot$ for SIMD scalar multiplications on an encrypted domain satisfying the following relation for any scalar $c$:
\begin{align}
    \mathsf{DEC}_{\mathsf{sk}}(c \odot \mathsf{ENC}_{\mathsf{pk}}(\mathbf{x}))
    = \mathsf{DEC}_{\mathsf{sk}}(\mathsf{ENC}_{\mathsf{pk}}(c \cdot \mathbf{x})) \nonumber.
\end{align}

\section{BTP from Database Encryption Scheme}
Our approach for protecting face templates is basically to encrypt the database by AHE. We first formalize the notion of \textit{database encryption} and give a generic construction of the biometric identification scheme from it.

\subsection{Database Encryption Scheme}

We begin with introducing the concept of \textit{database encryption scheme}, whose required functionalities are to encrypt a bulk of templates simultaneously and calculate the matching score between the queried template and each enrolled template. 
The formal definition is given as follows:
\begin{definition}[Database Encryption Scheme]
    The database encryption scheme is a pair of algorithms ($\mathsf{ENC}_{\mathtt{DB}}$, $\mathsf{IP}_{\mathtt{DB}}$), each of whose functionality is as follows:
    \begin{itemize}
        \item $\mathsf{ENC}_{\mathtt{DB}}$ takes a bulk of biometric templates $\mathbf{x}_{1},\dots,\mathbf{x}_{N}$ and a public key $\mathsf{pk}$, returning a ciphertext $\mathcal{C}$.
        \item $\mathsf{IP}_{\mathtt{DB}}$ takes a biometric template $\mathbf{y}$ and a ciphertext $\mathcal{C}$, returning a ciphertext of $\langle \mathbf{x}_{i}, \mathbf{y} \rangle$ for all plaintexts $\mathbf{x}_{i}$ of $\mathcal{C}$.
    \end{itemize}    
\end{definition}

We can view the construction of HERS~\cite{engelsma2022hers}, the previous state-of-the-art, as a database encryption scheme.
The key idea of HERS was to view matrix-vector multiplication as a weighted sum of column vectors.
If we denote $\mathbf{X}\in \mathbb{R}^{N \times d}$ as a matrix of which the row vectors are enrolled templates and $\mathbf{y} = (y_{1}, \dots, y_{d}) \in \mathbb{R}^{d}$ as a vector requested to identify, the matching score between enrolled templates and $\mathbf{y}$ can be computed by a matrix-vector multiplication $\mathbf{X} \cdot \mathbf{y}$. If we denote $\mathbf{c}_{i} \in \mathbb{R}^{N}$ as the $i$'th column vector of $\mathbf{X}$ for $i\in[d]$, then we have that $\mathbf{X} \cdot \mathbf{y} = \sum_{i=1}^{d} y_{i} \cdot \mathbf{c}_{i}$. Hence, by encrypting each $\mathbf{c}_{i}$ during enrollment, we can compute the matching scores for the queried vector $\mathbf{y}_{i}$ in an encrypted domain via additions and scalar multiplications only.

Given an AHE scheme, we present a generic description of HERS as a database encryption $(\mathsf{ENC}_{\mathtt{DB}}, \mathsf{IP}_{\mathtt{DB}})$.
$\mathsf{ENC}_{\mathtt{DB}}$ encrypts each $N$-dimensional column vector of $\mathbf{X}$ to an AHE-ciphertext $\mathbf{ct}_{i}$, and a final ciphertext is $\mathcal{C}=(\mathbf{ct}_{1},\ldots,\mathbf{ct}_{d})$.
For given $\mathbf{ct}_{i}$'s and $\mathbf{y}=(y_1,\ldots,y_d)$, $\mathsf{IP}_{\mathtt{DB}}$ computes a ciphertexts of $(\langle\mathbf{x}_i,\mathbf{y}\rangle)_{i=1}^N$ by homomorphically computing a weighted sum of $\mathbf{ct}_{i}$'s with weights $y_{i}$'s. 
The descriptions of both algorithms are given in Algorithm~\ref{alg1:enrollelt} and \ref{alg2:identifyelt}, respectively. 
We will consider $\mathsf{ENC}_\mathtt{DB}$ and $\mathsf{IP}_\mathtt{DB}$ as the baseline, and utilize them as subroutines in our method.

\begin{figure}[t]
\vspace{-4mm}
\begin{minipage}[t]{\linewidth}
\begin{algorithm}[H]
\caption{$\mathsf{ENC}_\mathtt{DB}$ (\textbf{Base})}\label{alg1:enrollelt}
\begin{algorithmic}[1]
\REQUIRE $\mathbf{X}=[\mathbf{x}_{1},\dots,\mathbf{x}_{N}]^{T} \in \mathbb{R}^{N \times d}$ and $\mathsf{pk}$
\STATE Parse $\mathbf{X}$ as $d$ column vectors ${[ \mathbf{v}_{1},\dots,\mathbf{v}_{d} ]}$
\STATE Compute $ \mathbf{ct}_{i} \gets \mathsf{ENC}_{\mathsf{pk}}(\mathbf{v}_{i}) $ for $\forall i\in[d]$
\RETURN $\mathcal{C}=( \mathbf{ct}_{1},\dots,\mathbf{ct}_{d} )$
\end{algorithmic}
\end{algorithm}
\vspace{-8mm}
\end{minipage}
\begin{minipage}[t]{\linewidth}
\begin{algorithm}[H]
\caption{$\mathsf{IP}_\mathtt{DB}$ (\textbf{Base})}\label{alg2:identifyelt}
\begin{algorithmic}[1]
\REQUIRE $\mathbf{y}=(y_{1},\dots,y_{d})$ and a ciphertext $\mathcal{C}$
\STATE Parse $\mathcal{C}$ as $( \mathbf{ct}_{1},\dots,\mathbf{ct}_{d} )$
\STATE $\mathbf{ct}_{i} = y_{i} \odot \mathbf{ct}_{i}  $ for $\forall i\in[d]$
\RETURN $\mathbf{ct}_{1}\oplus \cdots \oplus\mathbf{ct}_{d} $
\end{algorithmic}
\end{algorithm}
\end{minipage}
\vspace{-5mm}
\end{figure}

\subsection{Secure Biometric Identification System from Database Encryption Scheme}\label{sec:3_2}

We present a generic construction of a biometric identification system with template protection by a database encryption scheme. 
We first note that, to instantiate the database encryption scheme with an AHE, the secret key of the AHE should be kept secure. For this reason, we consider two servers for managing the encrypted database and the secret key separately: the \emph{local server} ($\mathcal{S}_{local}$) that stores the public key $\mathsf{pk}$ and the encrypted database, and the \emph{key server} ($\mathcal{S}_{key}$) that only stores the secret key $\mathsf{sk}$. In this setting, $\mathcal{S}_{key}$ finds the identity of the largest score or returns \textit{reject}, and $\mathcal{S}_{local}$ receives templates for enrollment and identification. 

With the database encryption scheme ($\mathsf{ENC}_{\mathtt{DB}}$, $\mathsf{IP}_{\mathtt{DB}}$), $\mathcal{S}_{local}$ can encrypt a bunch of templates from enrollment by $\mathsf{ENC}_{\mathtt{DB}}$.
In addition, by using $\mathsf{IP}_{\mathtt{DB}}$, $\mathcal{S}_{local}$ can compute matching scores between the queried template and stored ones in an encrypted domain. The identifier of the largest matching score can be found by $\mathcal{S}_{key}$ after decryption. From these idea, we construct the biometric identification system with two servers, $\mathcal{S}_{local}$ and $\mathcal{S}_{key}$, as follows.

In the enrollment, $\mathcal{S}_{local}$ is requested to enroll a set of biometric templates. It first separates the set of biometric templates into subsets of size $N$. For each subset $\mathbf{X}$ with the corresponding identities $\mathtt{ID}$, $\mathcal{S}_{local}$ runs $\mathsf{ENC}_{\mathtt{DB}}(\mathbf{X}, \mathsf{pk})\rightarrow\mathcal{C}$, and adds $(\mathtt{ID}, \mathcal{C})$ into the bottom row of the database $\db$.

In the identification, $\mathcal{S}_{local}$ receives a biometric template $\mathbf{y}$ and returns the corresponding identity, if there exists an enrolled template $\mathbf{x}$ such that $\langle\mathbf{x},\mathbf{y}\rangle>\tau$ for a threshold $\tau$. The precise process is as follows: assume that $\db$ consists of $\{(\mathtt{ID}_i,\mathcal{C}_{i})\}_{i=1}^D$. For each $i\in[D]$, $\mathcal{S}_{local}$ runs $\mathsf{IP}_{\mathtt{DB}}(\mathbf{y}, \mathcal{C}_{i})\rightarrow \widehat{\mathcal{C}_{i}}$, whose output is the ciphertext containing inner products between $\mathbf{y}$ and plaintexts $\mathbf{x}$ of $\mathcal{C}_{i}$. 
Now, $\mathcal{S}_{key}$ obtains the decryption of $\{\widehat{\mathcal{C}_{i}}\}_{i=1}^D$ received by $\mathcal{S}_{local}$, which contains every matching score between the query and all enrolled identities in $\bigcup_{i=1}^D\mathtt{ID}_i$. Finally, $\mathcal{S}_{key}$ finds the index of the maximum matching score $\tau'$ and sends it to $\mathcal{S}_{local}$ if and only if $\tau'>\tau$, and $\mathcal{S}_{local}$ returns the corresponding $id$.

Later, we will follow this approach to construct the proposed BTP, $\mathsf{IDFace}$, with an improved database encryption scheme from our novel techniques.

\section{Proposed Method: \textsf{IDFace}}\label{sec:4}

We now present two main ingredients of $\mathsf{IDFace}$: almost isometric transformation that drastically improves computational efficiency with a small accuracy degradation, and space-efficient encoding for a bulk of templates that minimizes waste of plaintext space of HE. Using them, we construct our database encryption scheme and the proposed secure face identification system $\mathsf{IDFace}$.

\subsection{Almost Isometric Transformation}\label{sec:4_1}

Face templates usually lie in the hypersphere $\mathbb{S}^{d-1}$ and the predominant operation in identification is taking the inner product between two unit vectors. We observed that an inner product value is relatively more influenced by components with large absolute values than those with small such values. Our main strategy is to transform the biometric template into a vector in $\{-1, 0, 1\}^d$ according to the magnitude of each component in the template\footnote{In fact, $\{-1, 0, 1\}^d$ is not a subspace of $\mathbb{S}^{d-1}$, so final normalization is required. Without loss of generality, we omit this because the number of nonzero components is fixed after the transformation.}. 
More precisely, given a template, the transformation $T_{\alpha}$ with a parameter $\alpha$ first takes the top $\alpha$ components with larger absolute values. Then, those components are replaced with either $1$ or $-1$ according to their signs, and the other components are set to $0$. That is, the range of the transformation is a subset of $\{-1,0,1\}^d$, the vectors of which have only $\alpha$ nonzero components. We denote this set by $\mathcal{Z}^{d}_{\alpha}$.
With our transformation, taking an inner product can be done by a lookup operation, with addition and subtraction determined by the sign of each component. In particular, when combined with $\mathsf{IP_{DB}}$, we can replace homomorphic scalar multiplications with look-ups, hence requiring much lower computational cost than the ordinary method.

\vspace{2mm}

\noindent \textbf{Analysis on the Transformation.}
%
Although the computational advantage of our transformation is obvious, analyzing the extent to which it can preserve the distance relationship is not straightforward. To clarify this, we define a new notion of \emph{almost isometry}, or $(\epsilon, \delta, \theta)$-isometry that catches the change of inner product values between two unit vectors of angle $\theta$ from the transformation. The formal definition of our almost isometry is provided in Definition~\ref{def:alm_iso}.

\begin{definition}[$(\epsilon, \delta, \theta)$-isometry]\label{def:alm_iso}
    Let $T:\mathbb{S}^{d-1} \rightarrow \mathbb{S}^{d-1}$ be a function. Then for $\epsilon > 0, \theta \in (0, \pi)$, and $\delta \in [0,1]$, $T$ is called $(\epsilon, \delta, \theta)$-isometry if for $\mathbf{x}, \mathbf{x}' \in \mathbb{S}^{d-1}$ uniformly sampled from the set $\{(\mathbf{x}, \mathbf{x}') \in \mathbb{S}^{d-1} \times \mathbb{S}^{d-1} : \langle \mathbf{x}, \mathbf{x}' \rangle = \cos\theta\}$, $\mathrm{Pr}[|\langle T(\mathbf{x}), T(\mathbf{x}') \rangle - \langle \mathbf{x}, \mathbf{x}' \rangle| < \epsilon] > 1 - \delta$ holds.
\end{definition}
Here, the restriction $\langle \mathbf{x}, \mathbf{x'} \rangle = \cos\theta$ for sampling $\mathbf{x}$ and $\mathbf{x'}$ is necessary to ensure that the function preserves the inner product value even for two close vectors; note that the inner product value of two vectors uniformly sampled from $\mathbb{S}^{d-1}$ is close to 0 with an overwhelming probability. We also introduce a failure probability $\delta$ for relaxation.

We proved that the proposed transformation $T_{\alpha}$ satisfies the condition of $(\epsilon, \delta, \theta)$-isometry. However, due to space constraints, we only provide the result on the parameter ($\alpha = 341$) that minimizes $\epsilon$ for $d=512$. The full statement, its proof, and further analyses on the choice of the transformation parameter are given in Appendix~\textcolor{iccvblue}{B} and \textcolor{iccvblue}{C}.
\begin{proposition}[Informal]\label{prop1_inform}
For $d = 512$ and $\alpha = 341$, $T_{\alpha}$ is $(0.111, o(1), \theta)$-isometry, $\forall$ $\theta \in (0, \pi)$.
\end{proposition}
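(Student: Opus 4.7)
The plan is to reduce the analysis to coordinate-wise bivariate Gaussian computations, identify the limiting value of the transformed inner product as a closed-form expression in $\theta$ and $\alpha$, and then close the argument with a concentration inequality. The key observation is that the constraint $\langle\mathbf{x},\mathbf{x}'\rangle=\cos\theta$ with uniform marginals can be realized by Gaussians: set $x_i = g_i/\|\mathbf{g}\|$ and $x'_i = g'_i/\|\mathbf{g}'\|$, where $(g_i,g'_i)$ are i.i.d.\ bivariate standard normal pairs with correlation $\rho=\cos\theta$. Because $T_\alpha$ is invariant to positive scaling of its argument, the norms $\|\mathbf{g}\|,\|\mathbf{g}'\|$ drop out and the transformation is completely determined by the signs of the $g_i,g'_i$ and by whether they lie above the top-$\alpha$ threshold of $|g_i|$ (resp.\ $|g'_i|$).

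First, I would replace the random empirical top-$\alpha$ threshold by the deterministic quantile $t^\star$ satisfying $\Pr[|g|\geq t^\star]=\alpha/d$; for $\alpha=341$, $d=512$, this is $t^\star\approx 0.43$. The empirical quantile concentrates around $t^\star$ at rate $O(1/\sqrt{d})$ by standard order-statistic arguments, so the contribution of coordinates near the threshold can be absorbed into the error budget. After this reduction, the normalized transformed inner product becomes
\begin{equation*}
\frac{1}{\alpha}\sum_{i=1}^d \mathbb{1}[|g_i|\geq t^\star]\,\mathbb{1}[|g'_i|\geq t^\star]\,\mathrm{sgn}(g_i)\,\mathrm{sgn}(g'_i),
\end{equation*}
a sum of i.i.d.\ bounded random variables. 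Its expectation $\mu(\theta,\alpha)$ is an explicit bivariate orthant-type integral in $(g,g')\sim\mathcal{N}(0,\Sigma_\rho)$ truncated to $|g|,|g'|\geq t^\star$.

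The core analytic step is to show that $|\mu(\theta,\alpha)-\cos\theta|<0.111$ uniformly in $\theta\in(0,\pi)$ when $\alpha=341$. This is where the specific choice of $\alpha$ enters: the threshold $t^\star$ is tuned so that the truncated Sheppard-type integral approximates $\rho=\cos\theta$ well over the entire range, rather than the untruncated value $1-2\theta/\pi$ one would get from simple sign-matching. Since $\mu(\theta,\alpha)$ is smooth and Lipschitz in $\theta$, the uniform bound can be established by evaluating the integral on a sufficiently fine grid of $\theta$ values and controlling the interpolation error, together with symmetry $\mu(\pi-\theta,\alpha)=-\mu(\theta,\alpha)$ to halve the range. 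Finally, a Hoeffding inequality applied to the $d$ bounded summands yields the $o(1)$ failure probability $\delta$ for the deviation of the empirical sum from $\mu(\theta,\alpha)$, completing the $(\epsilon,\delta,\theta)$-isometry property with $\epsilon=0.111$.

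The main obstacle will be the analytic step of tightly bounding the truncated bivariate orthant integral: it admits no simple closed form, so rigor demands either a verified numerical enclosure of $\mu(\theta,\alpha)$ across $\theta\in(0,\pi/2]$ together with a Lipschitz estimate, or an explicit decomposition via conditioning on $\mathrm{sgn}(g)\mathrm{sgn}(g')$ and the angular measure. A secondary technical nuisance is the slight coupling caused by replacing the empirical threshold with $t^\star$; I would dispatch this by a standard symmetrization/truncation argument, adding an $O(d^{-1/2})$ term to both $\epsilon$ and $\delta$ that is comfortably absorbed at $d=512$.
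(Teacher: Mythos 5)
Your proposal follows essentially the same route as the paper's proof: realizing the angle constraint via correlated Gaussians (the paper's $\mathbf{W}=(\mathbf{X}+\tan\theta\,\mathbf{Y})/\sqrt{1+\tan^2\theta}$ is exactly your correlation-$\cos\theta$ coupling), replacing the empirical top-$\alpha$ threshold by the deterministic quantile $c=\sqrt{2}\,\mathrm{erf}^{-1}(1-\alpha/d)\approx 0.43$ with an order-statistics concentration bound, expressing the limiting expectation as a truncated bivariate orthant integral $P^{+}-P^{-}$ evaluated numerically to get the $0.111$ bound, and finishing with a concentration inequality (the paper uses Chebyshev where you use Hoeffding). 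The obstacles you flag are precisely the ones the paper handles — including the threshold-coupling error, treated there via Mosteller's asymptotic normality of order statistics — so the plan is sound and matches the published argument.
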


\noindent\textbf{Remarks.}
Note that $T_{\alpha}$ is technically the same as the decoding algorithm of a real-valued error-correcting code by \cite{kim2021ironmask}. However, it plays a completely different role in our paper. The authors of~\cite{kim2021ironmask} tried to facilitate the collision on $\mathcal{Z}_{d}^{\alpha}$ after the transformation, whereas we utilize it as an almost isometry. The almost-isometric property has neither been used nor even identified in their work.

We also note that the proposed transformation can be viewed as a ternary quantization on unit vectors. Several ternary quantization methods have been proposed for faster matrix multiplications on the inference of the neural network~\cite{zhu2017trained, li2021trq, ma2024era, hubara2018quantized}. Since a matrix multiplication can be viewed as a batched version of inner products, one can consider replacing our transformation to one of them.
Nevertheless, we found that there are some desirable properties for adopting them to our $\mathsf{IDFace}$.
Due to space constraints, we provide discussions about them and our rationale for choosing the proposed transformation in Appendix~\textcolor{iccvblue}{E}.

\subsection{Space Efficient Encoding}\label{sec:4_2}

The message space of AHEs is determined by the security parameter, regardless of practical applications. When employing them for face template protection, we found that a huge waste on the message space occurs if we do not carefully handle this issue.
For example, Paillier cryptosystem~\cite{paillier1999public} uses a 2048-bit message size for 112-bit security, and a typical parameter setting of CKKS~\cite{cheon2017homomorphic} preserves a certain level of precision, \emph{e.g.}. 50-bit, with 4096 message slots in 128-bit security. On the other hand, 16- or 32-bit precision is enough for representing face templates with an appropriate level of accuracy. Such a waste on the message space becomes worse after our transformation because the transformed template consists of only $-1, 0$ or $1$.

We tackle this issue by further packing multiple templates into a single message slot. We first observe that the inner product value of two transformed templates by $T_{\alpha}$ ranges between $-\alpha$ and $\alpha$ since each transformed template has $\alpha$ nonzero components of $\pm 1$. Handling both negative and positive values is rather cumbersome, and thus we separate a transformed template $\mathbf{x}$ into two positive binary vectors $\mathbf{x}^{+}$ and $\mathbf{x}^{-}$ such that $\mathbf{x}=\mathbf{x}^{+}-\mathbf{x}^{-}$. Then, an inner product value of two transformed templates $\mathbf{x}$ and $\mathbf{y}$ is a sum of inner products between separations: 
\begin{align}
\langle \mathbf{x}, \mathbf{y} \rangle = \langle \mathbf{x^{+}}, \mathbf{y^{+}} \rangle + \langle \mathbf{x^{-}}, \mathbf{y^{-}} \rangle - \langle\mathbf{x^{+}}, \mathbf{y^{-}}\rangle - \langle\mathbf{x^{-}}, \mathbf{y^{+}}\rangle. \nonumber
\end{align}
Using separations instead of original templates, all the inner product values in the above lie between $0$ and $\alpha$. 
Now, we can encode $m$ transformed templates $\mathbf{x}_{1}^{\dagger}, \dots, \mathbf{x}_{m}^{\dagger}$ after separation into a single vector $\mathbf{x}^{\dagger}:=\sum_{i=1}^{m}p^{i-1} \cdot \mathbf{x}_{i}^{\dagger}$ for $p > \alpha$ and $\dagger \in \{+, -\}$. Each $\mathbf{x}_{i}^{\dagger}$ can be decoded from $\mathbf{x}^{\dagger}$ by inductively calculating $\mathbf{x}_{1}^{\dagger} = \mathbf{x}^{\dagger}\pmod{p}$ and $\mathbf{x}_{i}^{\dagger}=(\mathbf{x}^{\dagger}-\sum_{j=1}^{i-1}(p^{j-1}\cdot \mathbf{x}_{j}^{\dagger}))\cdot p^{1-i}\pmod{p}$ for $i=2,\ldots,m$. If the message space of a message slot of AHE is $n$-bit, then we can encode at most $\lfloor \frac{n}{\log_{2} \alpha} \rfloor$ templates at once. We call these two operations $\mathsf{Encode}$ and $\mathsf{Decode}$, respectively.

We now show that $\mathsf{Encode}$ is compatible with $\mathsf{IP}_{\mathtt{DB}}$. Suppose that $m$ transformed templates $\mathbf{x}_{1},\dots,\mathbf{x}_{m}$ are split into ($\mathbf{x}^+,\mathbf{x}^-$) and each component is encrypted by $\mathsf{ENC}_{\mathtt{DB}}$, say $\mathcal{C}^{+}$ and $\mathcal{C}^{-}$. If a new template $\mathbf{y}$ is queried, we first split it by two vectors $\mathbf{y}^+$ and $\mathbf{y}^-$ and then run $\mathsf{IP}_{\mathtt{DB}}(\mathbf{y}^*,\mathcal{C}^\dagger)$ obtaining ciphertexts of $\langle\mathbf{x^{\dagger}}, \mathbf{y^{*}}\rangle$ for $\dagger,*\in\{+,-\}$. 
Since we appropriately select the bound $p$, the $\mathsf{Decode}$ with input $\langle\mathbf{x^{\dagger}}, \mathbf{y^{*}}\rangle$ returns $\langle\mathbf{x}_i^\dagger,\mathbf{y}^*\rangle$ for all $i\in[m]$. Finally, we can calculate the inner product values $\langle\mathbf{x}_i,\mathbf{y}\rangle$ for all $i\in[m]$.

Note that $\mathsf{Encode}$ is compatible with only one of the additions and subtractions. Thus, instead of directly computing $\langle\mathbf{x},\mathbf{y}\rangle$, we only compute $\langle\mathbf{x}^\dagger,\mathbf{y}^*\rangle$ homomorphically for each $\dagger,*\in\{+,-\}$.
We can reduce the number of decryptions by performing homomorphic additions to obtain two ciphertexts of $\langle \mathbf{x}^+, \mathbf{y}^+ \rangle+\langle \mathbf{x}^-, \mathbf{y}^- \rangle$ and $\langle \mathbf{x}^+, \mathbf{y}^- \rangle+\langle \mathbf{x}^-, \mathbf{y}^+ \rangle$, subtracting one from another after decoding.

\subsection{Putting them All Together}\label{sub:putting}
With almost-isometric transformation and space-efficient encoding, we can improve the baseline database encryption scheme $(\mathsf{ENC}_{\mathtt{DB}}$, $\mathsf{IP}_{\mathtt{DB}})$. We denote each improved algorithm as $\mathsf{IDFace.ENC}_{\mathtt{DB}}$ and $\mathsf{IDFace.IP}_{\mathtt{DB}}$, respectively. We assume that the number of face templates is a multiple of $mN$, where both $m$ and $N$ are parameters selected by the underlying AHE scheme. These $mN$ templates are represented by rows of a matrix $\mathbf{X} \in \mathbb{R}^{mN \times d}$. To process more templates, we can repeat the same procedure.

$\mathsf{IDFace.ENC}_{\mathtt{DB}}$ first applies our almost isometric transform $T_{\alpha}$ into each row vector of $\mathbf{X}$ and then encodes each $m$ rows of $\mathbf{X}$ into two vectors $\mathbf{x}_{i}^{+}$ and $\mathbf{x}_{i}^{-}$ by our $\mathsf{Encode}$ algorithm for $i \in [N]$. Finally, it runs $\mathsf{ENC}_{\mathtt{DB}}$ for each $\mathbf{x}_{i}^{+}$ and $\mathbf{x}_{i}^{-}$ and returns a pair $\mathcal{C} = (\mathcal{C}^{+}, \mathcal{C}^{-})$.

$\mathsf{IDFace.IP}_{\mathtt{DB}}$ takes a face template query $\mathbf{y} \in \mathbb{R}^{d}$ as input, and then applies $T_{\beta}$\footnote{We can use the different parameter $\beta$ from $\alpha$ that is used for enrollment in order to take advantage of the trade-off between accuracy and efficiency. Detailed explanation and analysis will be provided in Section~\ref{sec:5}.} to obtain a transformed template $\mathbf{z}$.
Then, it splits $\mathbf{z}$ into positive part $\mathbf{z}^{+}$ and negative part $\mathbf{z}^{-}$. We parse the given $\mathcal{C}$ into ($\mathcal{C}^{+}$, $\mathcal{C}^{-}$) and compute $\mathbf{ct}^{+} = \mathsf{IP}_{\mathtt{DB}}(\mathbf{z}^{+}, \mathcal{C}^{+}) \oplus \mathsf{IP}_{\mathtt{DB}}(\mathbf{z}^{-}, \mathcal{C}^{-})$ and $\mathbf{ct}^{-} = \mathsf{IP}_{\mathtt{DB}}(\mathbf{z}^{+}, \mathcal{C}^{-}) \oplus \mathsf{IP}_{\mathtt{DB}}(\mathbf{z}^{-}, \mathcal{C}^{+})$ by considering their sign. The subtraction $\mathbf{ct}^{+} - \mathbf{ct}^{-}$ after decryption gives the desired inner product value with the encoding, and the final result can be obtained by running $\mathsf{Decode}$.

Due to space constraints, we provide the full description of $\mathsf{IDFace}$ and its subroutine algorithms in Appendix~\textcolor{iccvblue}{F}.

\section{Analyses on \textsf{IDFace}}\label{sec:5}
In this section, we provide several analyses of the proposed $\mathsf{IDFace}$, along with the implementation result. We used an i7-11700K CPU, which has 3.60GHz processor speed with 8 cores and 16 threads.
%
%
%
For AHEs in $\mathsf{IDFace}$, we use two schemes: Paillier cryptosystem (PC) and CKKS. Since CKKS also satisfies the multiplicative homomorphic property, previous HE-based methods can be directly compared. For parameters of each AHE, we select a 2048-bit plaintext modulus for PC, and a 61-bit plaintext modulus, a 220-bit ciphertext modulus with 4096 message slots for CKKS. Each setting satisfies 112-bit and 128-bit security levels, respectively. We experimentally verified that CKKS preserves almost 50-bit precision in this setting, and thus we used only 50-bit for our $\mathsf{Encode}$ when using CKKS.
We implemented our $\mathsf{IDFace}$ by exploiting IPCL~\cite{ICPL} library for PC and HEaaN~\cite{HEAAN} library for CKKS. Each library is written in modern standard C++ and provides a Python wrapper.

\subsection{Efficiency Analysis}\label{sec:5_1}


We first provide a concrete efficiency analysis on $\mathsf{IDFace.IP}_{\mathtt{DB}}$ and introduce a further acceleration technique via the trade-off between efficiency and accuracy. 
Recall that our transformation $T_{\beta}$ in $\mathsf{IDFace.IP}_{\mathtt{DB}}$ converts the given biometric template into a signed binary vector in $\mathcal{Z}^{d}_{\beta}$.
Since the number of required additions in $\mathsf{IP}_{\mathtt{DB}}$ for binary vectors is the same as the number of nonzero components, $2(\beta - 1)$ ciphertext addition suffices during four iterations of execution of $\mathsf{IP}_{\mathtt{DB}}$.
The Tab.~\ref{table1:comparison} provides the comparison on the cost for identification on the database of size $mN$ between a na\"ive method, some previous works~\cite{boddeti2018secure, engelsma2022hers, bassit2023improved, bassit2022multiplication} tried to reduce the cost for inner product, and our $\mathsf{IDFace}$.


\begin{table}[t]
\centering
\resizebox{\linewidth}{!}{
\begin{tabular}{c|c|c|c|c|c}
\hline
\multicolumn{1}{c|}{Method} & Add & Mult & Rotate & DEC & \# $\mathsf{Ct}$ \\ \hline
Na\"ive & $mN(d-1)$ & $mN$ & $mN(d-1)$ & $mN$ & $mN$ \\
\cite{boddeti2018secure} & $mN\lceil\log_{2}(d) \rceil$ & $mN$ & $mN\lceil\log_{2}(d)\rceil$ & $mN$ & $mN$ \\ 
HERS~\cite{engelsma2022hers} & $m(d-1)$ & $md$ & $0$ & $m$ & $m$ \\ 
MFBR-v2~\cite{bassit2023improved} & $mN\lceil \log_{2}(N) \rceil$ & $0$ & $mN \lceil \log_{2}(N) \rceil$ & $mN$ & $mN$ \\ 
MFBR-ID~\cite{bassit2025practical} & $m(d-1)$ & $0$ & $0$ & $m$ & $8m$ \\ \hline
$\mathsf{IDFace}$ & $2(\beta-1)$ & $0$ & $0$ & $2$ & $2$ \\ \hline
\end{tabular}}
\vspace{-3mm}
\caption{Comparison of the number of operations for identification and storage. $m$: the number of transformed templates in each message slot of AHE. $mN$: the number of enrolled templates, which are $d$-dimensional vectors. $\# \mathsf{Ct}$: the number of ciphertexts.}\label{table1:comparison}
\vspace{-4mm}
\end{table}

Note that the number $m$ of templates handled by $\mathsf{Encode}$ is determined by the maximum value of inner product, which is $\min \{\alpha, \beta \}$. Hence, choosing smaller $\beta$ while fixing $\alpha$ further accelerates $\mathsf{IDFace.IP}_{\mathtt{DB}}$.
Of course, choosing smaller $\beta$ leads to inaccurate inner product values.
Since the effect of $\beta$ on $m$ depends on the length of $\beta$ as a binary representation, we select $\beta = 63$ (resp. $\beta=127$ or $\beta=341$) while fixing $\alpha$ to $341$. In this case, we can encode many templates at once up to 342 (resp. 293 or 243) in PC and 8 (resp. 7 or 5) in CKKS.

\vspace{2mm}

\noindent\textbf{Implementation Results.}
We now verify the efficiency of our $\mathsf{IDFace}$ in our experimental environment. We note that every experiment in this subsection was done by generating random vectors.
We first evaluate the time spent on the enrollment and identification for various numbers of enrolled identities. The number of identities varies from $2^{13}$ to $2^{20} \approx $1 M, and we select $\beta = 63, 127, 341$ while $\alpha$ is fixed to $341$.
The results for each algorithm are illustrated in Fig.~\ref{fig:enr_and_idf}.
Unlike PC, $\mathsf{IDFace}$ from CKKS can processes more identities at the same time because of the coefficient packing. Concretely, $\mathsf{IDFace}$ in our setting can process 20480, 28672, 32768 identities at once for $\beta = 341, 127, 63$, respectively. This is why the elapsed time for the number of identity $2^{13}$--$2^{15}$ in CKKS seems unchanged.
In addition, the elapsed time in all parameter settings grows linearly with respect to the number of enrolled identities, because $\mathsf{IDFace}$ reduces the cost of inner product rather than that of searching. Nevertheless, in all parameter settings, the $\mathsf{IDFace}$ with CKKS can perform an identification in less than a second on the database of $2^{20} \approx $ 1 M identities.

\begin{figure}[t]
    \vspace{-2mm}
    \centering
    \includegraphics[width = \linewidth]{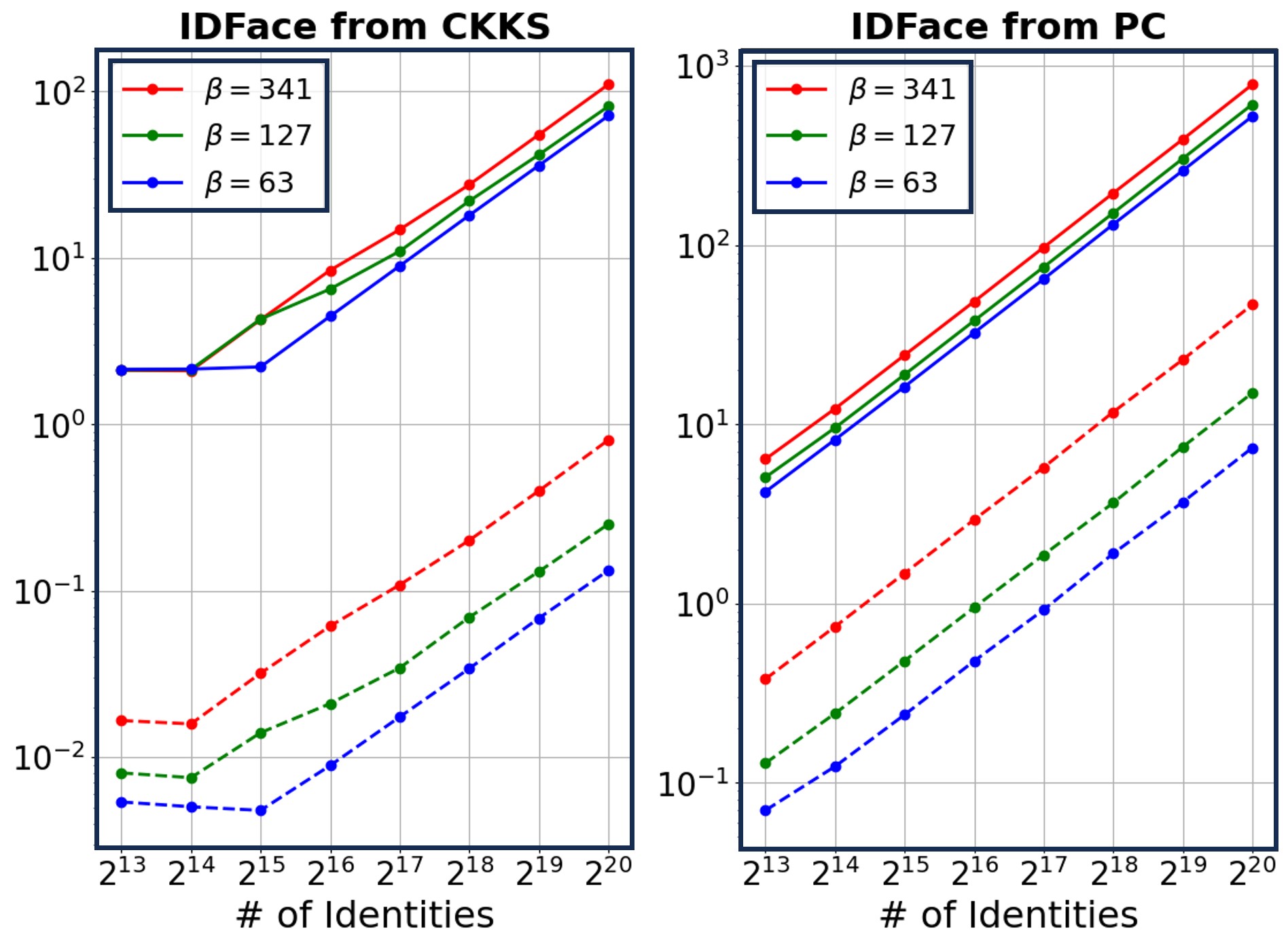}
    \vspace{-7mm}
    \caption{Elapsed time for $\mathsf{Enroll}$ and $\mathsf{Identify}$ on various numbers of enrolled identities. Solid line: elapsed time for $\mathsf{Enroll}$. Dashed line: elapsed time for $\mathsf{Identify}$.}
    \label{fig:enr_and_idf}
    \vspace{-5mm}
\end{figure}

To compare our $\mathsf{IDFace}$ with other existing BTPs, we provide the computational cost of each method for enrollment and identification on a database of size 1 M. We re-implemented IronMask~\cite{kim2021ironmask}, SecureVector~\cite{meng2022towards}, \cite{drozdowski2019application, boddeti2018secure, Bausp2022improved}, BlindMatch~\cite{choi2024blind}, HERS~\cite{engelsma2022hers}, MFBR-v2~\cite{bassit2023improved} and MFBR-ID~\cite{bassit2025practical} in our experimental environment. We followed the pseudocodes provided in each original paper. Since \cite{boddeti2018secure, Bausp2022improved} and HERS exploited a dimensionality reduction technique for faster identification, we tested both cases when the dimension of the feature vector was 512 and 128\footnote{Although their original papers provided results for lower dimensions, such as 64 or 32, we omitted them because of a notable accuracy degradation. The amount of degradation will be further discussed in Appendix~\textcolor{iccvblue}{D.5.}}. 
For BlindMatch, we select the number of partitions of the templates ($N_{in}$ in their paper) to be 8 and 16 for 128 and 512-dimension, respectively.
Since MFBR-v2 and MFBR-ID are designed for BFV~\cite{fan2012somewhat}, we implemented them by OpenFHE v1.2.4~\cite{al2022openfhe} in C++17 with the same FHE parameters according to their papers.
Moreover, we measured the time spent on identification without BTP, \textit{i.e.}, one matrix-vector multiplication without encryption.

Remark that some methods were omitted because their main efficiency gain came from other previous works~\cite{jindal2020secure, huang2023efficient}, or severe accuracy degradation is expected for large-scale databases of size 1 M~\cite{osorio2022stable, drozdowski2021feature, bauspiess2023hebi}. In addition, we reported estimated results for some works~\cite{kim2021ironmask, boddeti2018secure, drozdowski2019application} by first evaluating them on the database of 1,000 identities and multiplying 1,000 on the result. This estimation would be reasonable because the computational cost of theirs is linear to the number of identities.
Finally, unlike other HE-based methods, IronMask does not require a secret key. Although comparing it with other methods, including ours, would be unfair, we added it as a reference for the efficiency analysis.

We also note that some prior works, including~\cite{boddeti2018secure, drozdowski2021feature, Bausp2022improved, choi2024blind} and HERS, support the case where the queried vectors are encrypted. 
Such a setting corresponds to the case where the privacy of the query vector against server(s) is crucial, e.g., outsourcing the data and matching process to 3rd party cloud servers~\cite{drozdowski2019application, engelsma2022hers, huang2023efficient}.
Nevertheless, when secure storage of templates is the primary goal, such as in our target scenarios, considering an encrypted query vector would be an overkill.
Therefore, for the sake of fair comparison, we implemented those schemes as if the query vector were not encrypted.

The results are given in Tab.~\ref{tab:comp_res}. We reported the average elapsed time with a 1-$\sigma$ error for 10 runs. 
Our $\mathsf{IDFace}$ significantly outperforms previous HE-based BTPs in terms of
Notably, compared to HERS, with the parameter $\beta=341$ (resp. $\beta=127$ or $\beta=63$) achieves $16.3\times$ (resp. $53.0\times$ or $97.6\times$) faster identification speed with only $60\%$ (resp. $20\%$ or $0\%$) extra storage.
In addition, compared to MFBR-ID~\cite{bassit2025practical}, which is another multiplication-free method based on look-up tables, $\mathsf{IDFace}$ with $\beta=63$ spent $23.5\times$ and $4.5\times$ less time for enrollment and identification, respectively, while requiring $31.8\times$ smaller storage for protected templates.
Furthermore, when comparing ours to a na\"ive identification without BTP, i.e., performing a matrix-vector multiplication between the enrolled and queried template(s), the overhead from both computational and storage costs of $\mathsf{IDFace}$ with $\beta = 63$ is only about $\mathbf{2\times}$.
These results suggest that our $\mathsf{IDFace}$ could be deployed in application scenarios where real-time identification is required, such as at an airport or the entrance of a building.

\begin{table}[t]
\centering
\resizebox{\linewidth}{!}{
    \begin{tabular}{|c|c|c|c|c|}
    \hline
     Method (Param.) & Enroll & Identify & Storage & Primitive \\
     \hline \hline
     Non-protected & N/A & $0.063 \pm 0.0005$s & 2GB & N/A \\
     \hline \hline
     IronMask~\cite{kim2021ironmask}$^{\dagger}$ & $4,416\pm 3.73$s & $97.9 \pm 0.27$s & 1024GB & FC \\
     SecureVector~\cite{meng2022towards} & $350 \pm 4.09$s & $398 \pm 2.75$s & 5GB  & PC \\
     \hline \hline
     \cite{drozdowski2019application}$^{\dagger, \ddagger}$ & $917\pm3.57$s & $290,605\pm458$s & 132GB & CKKS\\
     \cite{boddeti2018secure} (512)$^{\dagger, \ddagger}$ & $909\pm3.30$s & $6,210 \pm 16.5$s & 132GB  & CKKS\\
     \cite{boddeti2018secure} (128)$^{\dagger, \ddagger}$ & $846\pm1.58$s & $5,037 \pm 8.49$s & 33GB  & CKKS\\
     \cite{Bausp2022improved} (512)$^{\ddagger}$ & $198\pm0.73$s & $826 \pm 1.14$s & 16.5GB & CKKS\\
     \cite{Bausp2022improved} (128)$^{\ddagger}$ & $50\pm0.05$s & $171 \pm 0.25$s & 4.125GB  & CKKS \\
     BlindMatch (512)$^{\ddagger}$ & $253\pm 13.9$s & $233\pm 10.4$s & 16.5GB & CKKS\\
     BlindMatch (128)$^{\ddagger}$ & $68.5\pm2.60$s & $66.6\pm1.79$s & 4.125GB  & CKKS \\     
     HERS (512)$^{\ddagger}$ & $199\pm0.68$s & $48.9 \pm 0.20$s & 16.5GB  & CKKS \\
     HERS (128)$^{\ddagger}$ & $\mathbf{49\pm0.05}$s & $12.3\pm0.04$s & 4.125GB & CKKS \\
     \hline \hline
     MFBR-v2$^\dagger$~\cite{bassit2023improved} & $1348\pm 2.32$s & $1026\pm8.04$s & 131.1GB & BFV \\
     MFBR-ID~\cite{bassit2025practical} & $1641\pm5.98$s & $0.545\pm0.006$s & 132.1GB & BFV \\
     \hline \hline
     $\mathsf{IDFace}$ (341) & $751\pm2.44$s & $43.4\pm0.38$s & 2.25GB & PC \\
     $\mathsf{IDFace}$ (127) & $582\pm0.37$s & $14.2\pm0.13$s & 1.75GB & PC \\
     $\mathsf{IDFace}$ (63) & $501\pm 1.73$s & $7.08\pm0.03$s & \textbf{1.5GB} & PC \\
     \hline
     $\mathsf{IDFace}$ (341) & $109 \pm 0.10$s & $0.753 \pm 0.005$s & 6.6GB & CKKS \\
     $\mathsf{IDFace}$ (127) & $80 \pm 0.13$s & $0.232 \pm 0.002$s & 4.71GB & CKKS \\
     $\mathsf{IDFace}$ (63) & $72 \pm 0.11$s & $\mathbf{0.126\pm0.001}$s & 4.125GB  & CKKS \\
     \hline
    \end{tabular}  
}
\vspace{-3mm}
\caption{Time and storage cost comparison of identification protocols with 1 M enrolled identities. FC: Fuzzy Commitment. PC: Paillier Cryptosystem. $\dagger$: Estimated from 1,000 identities. $\ddagger$: This scheme supports encrypted queries.
}\label{tab:comp_res}
\vspace{-4mm}
\end{table}

\begin{table*}[t]
\centering
\resizebox{.91\linewidth}{!}{
\begin{tabular}{c|c|c|c|c|c|c||c|c}
\hline
\multirow{2}{*}{Dataset} & \multirow{2}{*}{Metric (FAR)} & \multirow{2}{*}{Plain} & \multicolumn{6}{c}{$\mathsf{IDFace}$, Parameters: $(\alpha, \beta)$} \\ \cline{4-9}
 & & & $(512,512)$ & $(341,341)$ & $(127,127)$ & $(63,63)$ & $(341,127)$ & $(341,63)$ \\ \hline
LFW & \multirow{3}{*}{Accuracy} & 
99.82\% & 99.77\% & 99.78\% & \textcolor{red}{\textbf{99.83\%}} & 99.73\% & 99.80\% & 99.82\% \\ 
CFP-FP &  & 
\textbf{99.24\%} & 98.96\% & 99.24\% & 99.09\% & 98.20\% & 99.19\% & 98.99\% \\
AgeDB &  & 
\textbf{98.00\%} & 97.67\% & 97.97\% & 97.55\% & 96.48\% & 97.80\% & 97.23\% \\ \hline
IJB-C (V) & TAR (1e-3) & 
\textbf{98.39\%} & 97.93\% & 98.27\% & 97.90\% & 96.94\% & 98.14\% & 97.78\% \\ \hline
\multirow{3}{*}{IJB-C (ID)} & TPIR (1e-2) & 
\textbf{96.42\%} & 95.15\% & 95.95\% & 95.32\% & 92.87\% & 95.72\% & 95.19\% \\ 
 & TPIR (1e-3) & 
85.81\% & \textcolor{red}{85.83\%} & 85.11\% & 85.40\% & \textcolor{red}{86.21\%} & \textcolor{red}{\textbf{86.59\%}} & \textcolor{red}{85.96\%} \\
 & TPIR (1e-4) & 
65.60\% & 63.88\% & \textcolor{red}{\textbf{68.38\%}} & 62.38\% & 61.39\% & 63.45\% & 64.66\% \\ \hline
\end{tabular}
}
\vspace{-3mm}
\caption{Benchmark results of non-protected feature extractor (Plain) and $\mathsf{IDFace}$ for various $(\alpha, \beta)$.}\label{tab3}
\vspace{-4mm}
\end{table*}

\subsection{Accuracy Analysis}\label{sec:5_2}

We provide the benchmark result of $\mathsf{IDFace}$ on various datasets, including IJB-C~\cite{maze2018iarpa}, LFW~\cite{huang2008labeled}, CFP-FP~\cite{sengupta2016frontal}, and AgeDB~\cite{moschoglou2017agedb}. For the feature extractor, we used the pre-trained AdaFace~\cite{kim2022adaface} with the IResNet101 backbone trained on the WebFace12M dataset~\cite{zhu2021webface260m}, which is available at the CVLFace library~\cite{CVLface}. Throughout this section, we refer to a non-protected feature extractor as a plain model.
We evaluated the effect of our transformation parameters $(\alpha, \beta)$ on the accuracy of the FRS. In the first 4 columns of Tab.~\ref{tab3}, we provide benchmark results when $\alpha=\beta$ with $\alpha = 63, 127, 341$ and $512$. One notable result is that the result from $\alpha = 341$ is better than that of $\alpha = 512$ in most cases.
Recall that we considered the setting with $\beta < \alpha$ for speed-up while allowing extra accuracy degradation. So, we evaluated the effect of the choice $\beta = 63, 127$ on accuracy while $\alpha$ is fixed to 341. The results for them are provided in the rightmost two columns of Tab.~\ref{tab3}. 
We can figure out that $\mathsf{IDFace}$ shows less than 1\% performance loss for LFW, CFP-FP, and AgeDB even for the setting $\beta=63$.
For our main interest, identification on IJB-C, a trade-off between query speed and performance was evident. But, for $\beta = 341$, the loss for TAR was still less than 1\%.

Due to space constraints, we defer several detailed analyses results to Appendix. For example, in Appendix~\textcolor{iccvblue}{D.3}, we provide results from other recent feature extractors, including ArcFace~\cite{deng2019arcface}, MagFace~\cite{meng2021magface}, SphereFace2~\cite{wen2022sphereface2}, ElasticFace~\cite{boutros2022elasticface}, and AdaFace-KPRPE~\cite{kim2024keypoint}, showing that the proposed $\mathsf{IDFace}$ shows consistently small accuracy degradation as in Tab.~\ref{tab3}. In Appendix~\textcolor{iccvblue}{D.6}, we provide our analyses on how our transformation affects the accuracy, along with our interpretation of data marked with red in Tab.~\ref{tab3}.

\subsection{Security Analysis}

We analyze the security of $\mathsf{IDFace}$ according to standard security requirements for BTP~\cite{ISO24745}: \emph{irreversibility}, \emph{revocablity}, and \emph{unlinkability}. Informally, each of them is stated as follows: Irreversibility requires that extracting biometric information from a protected template be computationally infeasible. Revocability requires that the protected template can be renewed by the same biometric without security loss. To satisfy unlinkability, protected templates from biometrics of the same identity should be independent; useful information cannot be extracted by comparing them. 

For the threat model, we consider an attacker who can take the public key $\mathsf{pk}$ and the encrypted database stored in the local server, without knowing the secret key $\mathsf{sk}$ stored in the key server. This threat model reflects several practical situations, such as when the key server is a secure tamper-proof hardware that stores a secret key, or when multiple local servers exist as shown in Appendix~\textcolor{iccvblue}{H}. In both cases, the key server can become more secure with additional protections. The attacker’s goal is to extract biometric information from the encrypted database. Because our database is encrypted by AHE, this goal can be regarded as breaking the IND-CPA security of AHE. Note that the AHE scheme exploited in $\mathsf{IDFace}$ satisfies IND-CPA security. That is, even when the stored ciphertexts are revealed, they are computationally indistinguishable from random strings from the view of the adversary, so $\mathsf{IDFace}$ satisfies the irreversibility. In addition, because the encryption algorithms of each AHE are randomized, two ciphertexts from the same biometric source are computationally indistinguishable. Unlinkability clearly holds for the same reason as revocability. To sum up, our $\mathsf{IDFace}$ satisfies all these security requirements.

\vspace{-1mm}

\section{Conclusion and Further Directions}
Protecting and exploiting biometric templates simultaneously is a challenging problem because many cryptographic tools are not tailored for processing biometric templates in a real-valued vector format.
In this study, we propose two techniques to tackle this challenging problem, and then show the effectiveness of our techniques by presenting $\mathsf{IDFace}$, a plug-and-play HE-based face template protection method for efficient and secure large-scale identification. 
With $\mathsf{IDFace}$, we successfully identify a queried template over encrypted template databases of 1M identities in \textit{less than a second}, 
which is only $\mathbf{2\times}$ slower compared to the setting without protection.
We believe that $\mathsf{IDFace}$ enables reliable and secure applications of face identification systems in various circumstances.

Throughout this study, we leave several interesting questions.
First, our almost-isometric transformation is of independent interest. Designing another transformation with more accurate distance relationship preservation or applying it to other applications would be interesting future directions\footnote{For example, we designed a secure two-party computation-based variant of $\mathsf{IDFace}$ using our transformation in Appendix~\textcolor{iccvblue}{G}.}$^,$\footnote{We also provided the result of applying $\mathsf{IDFace}$ on speaker verification and fingerprint verification tasks in Appendix~\textcolor{iccvblue}{I}.}. In addition, we considered only passive attackers for setting the threat model of $\mathsf{IDFace}$. Extending $\mathsf{IDFace}$ to secure against active attackers is open, and investigating this would be an interesting future work.

\section*{Acknowledgements}

We would like to thank the anonymous ICCV 2025 reviewers for their valuable feedback.

{
    \small
    \bibliographystyle{ieeenat_fullname}
    \bibliography{main}
}

\clearpage
\onecolumn
\appendix




%



\section*{\centering Supplementary Material}





In this supplementary material, we provide some details on both theoretical and empirical analyses that were omitted in the main paper. This document provides the following contents:



\section{Additional Related Works}

We briefly survey other types of biometric template protection (BTP) methods than using homomorphic encryption (HE), which were omitted in the main text. 
We divide the taxonomy of BTPs with respects to whether cryptographic tools are used or not during construction. For BTPs with cryptographic tools, we will focus on fuzzy commitment (FC)~\cite{juels1999fuzzy} constructions, which is considered one of typical solutions for constructing BTPs.

\;

\noindent \textbf{BTPs without Cryptographic Tools.} In fact, several BTPs without using cryptographic primitives have been proposed. For example, one famous approach is to employ locality-sensitive hashing (LSH)~\cite{indyk1998approximate}, which is a (non-cryptographic) hash function that makes a collision on the hashed value if two input vectors are sufficiently close enough. Starting from~\cite{jin2004biohashing}, several LSH-based BTPs~\cite{jin2017ranking, dong2022deep, lai2021efficient, li2022indexing, jiang2023cancelable} have been proposed under the assumption that finding a pre-image of a LSH is computationally infeasible. However, their security proofs are less robust than BTPs whose security relies on cryptographic assumptions, as shown in several cryptanalyses~\cite{ghammam2020cryptanalysis, wang2021interpretable, paik2023security} that successfully recover a pre-image of the protected template.

On the other hand, there are several attempts to design BTPs through obfuscating the feature templates~\cite{mi2023privacy, mi2024privacy, jin2024faceobfuscator, wang2023privacy}. More precisely, the goal of these works is to modify the face templates so that the reconstruction attack adversary cannot train and infer the reconstruction model as desired while maintaining the accuracy of the recognition model through these templates. To this end, \cite{wang2023privacy} utilized the idea of adversarial attack on the shadow face reconstruction model owned by the database server to protect face templates. On the other hand, other works attempted to delete and recover visual information of the queried facial images on frequency domain~\cite{mi2023privacy, jin2024faceobfuscator} or using an auxiliary neural network~\cite{mi2024privacy}, in order to disturb the training of reconstruction attack models. Although all these works provided their own security analyses, they are all empirical; there are no theoretical or provable security guarantees.

\;

\noindent \textbf{FC-based BTPs.}
A FC scheme is a famous solution for constructing BTP using error-correcting code (ECC). In this approach, they view biometric templates as a fuzzy data, coping with the inherent noise of the biometrics via error-correction. Since the hashed values of the noise-corrected templates are stored to the database, the security of FC-based BTPs relies on the onewayness of the cryptographic hash function.

The accuracy degradation accompanied from the FC-based BTP depends on the noise-correction capacity of the underlying ECC. In addition, since many classical ECCs, such as BCH-codes or Reed-Solomon codes. were defined over binary or finite fields, they cannot be directly adopted for correcting noises in the real-valued face templates.
For this reason, recent studies elaborated on designing ECCs suitable for face templates. For example, some studies~\cite{talreja2019zero, jindal2019securing, kumar2018face, mai2020secureface, dong2024wifakey} designed neural network-based decoders for classical ECCs. \cite{mohan2019significant} proposed a novel representation of real-valued feature vectors tailored for classical ECCs. On the other hand, \cite{kim2021ironmask} proposed a novel ECC defined over the unit hypersphere, proposing a FC-based BTP called IronMask using this ECC.

Many FC-based BTPs were proposed for face verification tasks, showing a reasonable performance with respect to efficiency.
However, we note that they are not suitable for large-scale identification because of the difficulty of leveraging parallel computation, such as a batched computation of the matching score.

\;

\noindent \textbf{Privacy-Preserving Face Recognition.}
Other than BTPs, it is worthwhile to mention that there have been studies focused on information leakage during the communication in the matching process~\cite{erkin2009privacy, evans2011efficient, lei2021privface, im2020practical, huang2023efficient}. 
In their works, they regard each process of enrollment and identification as a protocol between the database server and the client. From this setting, they utilized several cryptographic tools on secure multi-party computation to construct secure two (or multi) party protocols. Nevertheless, we note that protecting templates is not a primary goal of theirs; in fact, the stored face templates in the database are unencrypted in some works~\cite{erkin2009privacy, evans2011efficient}. For this reason, we focused on BTP methods only throughout this paper.

\section{Full Statement of Proposition 1 and its Proof}\label{sec:supp_A}

In this section, we give our theoretical analysis of the proposed almost isometric transformation. For this, we first give the full statement of Proposition~\ref{prop_1_full} in the main text. Furthermore, we give a mathematical proof on the Proposition~\ref{prop_1_full}, which demonstrates that our proposed transformation is indeed an almost isometric transformation.

\subsection{Full Statement of Proposition 1}

The following proposition is the full statement of the Proposition~\ref{prop_1_full}. Note that the result given in the main text was derived by numerically calculating the following integrals when $(d, \alpha) = (512, 341)$. Throughout this section, $o(1)$ denotes a little-oh notation with respect to the dimension $d$.

\begin{proposition}\label{prop_1_full}
For $d\in \mathbb{N}\setminus\{1\}$ and $\theta \in (0, \pi/2) \cup (\pi/2, \pi)$,
\begin{align}
   T_{\alpha}: \mathbb{S}^{d-1} \rightarrow \mathcal{Z}_{\alpha}^{d} \textit{ is a } (\epsilon_{\alpha, \theta} + o(1), \delta_{\alpha}, \theta) \textit{-isometry } 
   \textit{for } \epsilon_{\alpha, \theta} = \left| \cos{\theta}-\frac{2P(\theta)\cdot d}{\alpha} \right|, \delta_{\alpha} = o(1) \nonumber
\end{align}
\noindent \textit{where},
\begin{align}
    P(\theta)=\int_{c}^{\infty}\int_{\frac{c}{\cos{\theta}}-u}^{\infty} f_{UV}(u,v) dv du
    - \int_{c}^{\infty}\int_{-\infty}^{-\frac{c}{\cos{\theta}}-u} f_{UV}(u,v) dv du,  \nonumber
\end{align}
$f_{UV}(u,v)$ is a joint pdf of bivariate normal distribution with two independent random variables $U\sim N(0,1)$ and $V\sim N(0,(\tan{\theta})^2)$, $c=\sqrt{2}{\mathrm{erf}}^{-1}(1-\frac{\alpha}{d})$ for error function $\mathrm{erf}$.
\end{proposition}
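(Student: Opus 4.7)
The plan is to represent the uniform conditional sampling on $\{(\mathbf{x},\mathbf{x}')\in(\mathbb{S}^{d-1})^2:\langle\mathbf{x},\mathbf{x}'\rangle=\cos\theta\}$ via independent Gaussians, reduce $T_\alpha$ to a deterministic thresholding operation, and then invoke Hoeffding-style concentration on the resulting sum of i.i.d.\ terms.

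First, I would use the standard Gaussian representation $\mathbf{x}=\mathbf{g}/\|\mathbf{g}\|$ with $\mathbf{g}\sim\mathcal{N}(0,I_d)$ and $\mathbf{x}'=\mathbf{g}'/\|\mathbf{g}'\|$ for $\mathbf{g}'=\cos\theta\cdot\mathbf{g}+\sin\theta\cdot\mathbf{h}$ with $\mathbf{h}\sim\mathcal{N}(0,I_d)$ independent of $\mathbf{g}$. This induces the desired conditional distribution up to an $O(1/\sqrt{d})$ correction coming from the norm concentration $\|\mathbf{g}\|,\|\mathbf{g}'\|\approx\sqrt{d}$. In particular, the pairs $(g_i,g'_i)$ are i.i.d.\ bivariate Gaussians with correlation $\cos\theta$. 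The $\alpha$-th order statistic of $|g_1|,\ldots,|g_d|$ then concentrates around the deterministic value $c=\sqrt{2}\,\mathrm{erf}^{-1}(1-\alpha/d)$, the quantile satisfying $\mathrm{Pr}[|g_i|>c]=\alpha/d$. A standard coupling argument shows that the coordinate-wise idealization $\widetilde{T}_\alpha(\mathbf{x})_i=\mathrm{sign}(g_i)\cdot\mathbf{1}[|g_i|>c]$ agrees with $T_\alpha(\mathbf{x})$ on all but $o(d)$ coordinates with probability $1-o(1)$, so their contribution to the normalized inner product is $o(1)$.

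For the mean of the (normalized) inner product $\langle T_\alpha(\mathbf{x}),T_\alpha(\mathbf{x}')\rangle/\alpha$, I would exploit that, after this reduction, it is a sum of $d$ bounded i.i.d.\ terms of the form $\tfrac{1}{\alpha}\mathrm{sign}(g_ig'_i)\mathbf{1}[|g_i|>c,|g'_i|>c]$. The substitution $U=g_i$ and $V=\tan\theta\cdot h_i$ gives the bivariate normal $(U,V)$ with $U\sim\mathcal{N}(0,1)$, $V\sim\mathcal{N}(0,\tan^2\theta)$ independent, and rewrites the events $g'_i>c$ as $V>c/\cos\theta-U$ and $g'_i<-c$ as $V<-c/\cos\theta-U$. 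By the sign symmetry of the joint law, each per-coordinate expectation equals $2P(\theta)$, so the total expectation is $2dP(\theta)/\alpha$, whose gap from $\cos\theta$ is exactly $\epsilon_{\alpha,\theta}$. Hoeffding's inequality applied to the $d$ independent bounded summands bounds the probability of an additional $o(1)$ deviation by an exponentially small quantity in $d$, contributing $o(1)$ to $\delta_\alpha$.

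The main obstacle I expect is quantifying the coupling between the true $T_\alpha$ (which uses the random $\alpha$-th order statistic as its cutoff) and the idealized deterministic-cutoff version $\widetilde{T}_\alpha$: one has to control the number of ``boundary'' coordinates for which $|g_i|$ lies within $o(1)$ of $c$ and show their aggregate contribution to the normalized sum is $o(1)$. A minor technical point is the excluded case $\theta=\pi/2$, where the change of variable for $V$ degenerates; by symmetry $P(\pi/2)=0=\cos(\pi/2)$, so both sides of the isometry bound agree trivially there.
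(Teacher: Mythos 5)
Your proposal follows essentially the same route as the paper's proof: the Gaussian representation of the conditional pair $(\mathbf{x},\mathbf{x}')$ via $\mathbf{g}'=\cos\theta\,\mathbf{g}+\sin\theta\,\mathbf{h}$ (the paper's $\mathbf{W}=(\mathbf{X}+\tan\theta\,\mathbf{Y})/\sqrt{1+\tan^2\theta}$), the replacement of the random order-statistic cutoff by the deterministic quantile $c=\sqrt{2}\,\mathrm{erf}^{-1}(1-\alpha/d)$, the change of variables to independent $U\sim N(0,1)$, $V\sim N(0,\tan^2\theta)$ turning the joint selection events into the regions defining $P^{\pm}(\theta)$, and the sign-symmetry argument giving the per-coordinate mean $2P(\theta)$ and total $2dP(\theta)/\alpha$. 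Your computation of the limiting mean is correct and matches the paper's.

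The one place you defer --- the ``standard coupling argument'' showing that the deterministic-threshold idealization $\widetilde{T}_\alpha$ agrees with the true $T_\alpha$ up to an $o(1)$ contribution --- is precisely where the paper spends its technical effort (its Lemmas 4--6): it invokes Mosteller's asymptotic normality of the $(d-\alpha)$-th order statistic to show $\tilde{X}_{(d-\alpha)}$ concentrates within $\xi_2=d^{-1/2+\epsilon}$ of $c$, and then propagates this through conditional-probability bounds for the joint events $i\in I_{\mathbf{X}}^{\pm}\cap I_{\mathbf{W}}^{\pm}$. Your sketch of this step is plausible (the number of boundary coordinates with $|g_i|$ within $d^{-1/2+\epsilon}$ of $c$ is $O(d^{1/2+\epsilon})=o(\alpha)$ with high probability), but as written it is an assertion rather than a proof, so you should regard it as the substantive remaining work. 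Two further differences are worth noting. First, you use Hoeffding on the idealized i.i.d.\ bounded summands where the paper uses Chebyshev throughout; since the claim only requires $\delta_\alpha=o(1)$, either suffices, and your exponential bound is if anything stronger. Second, for the term $\bigl|\langle\mathbf{X},\mathbf{W}\rangle/(\|\mathbf{X}\|\,\|\mathbf{W}\|)-\cos\theta\bigr|$ the paper resorts to an \emph{experimentally verified} Assumption~1 on the mean because of the dependence between $\mathbf{X}$ and $\mathbf{W}$; your proposed norm-concentration argument ($\|\mathbf{g}\|,\|\mathbf{g}'\|=\sqrt{d}(1+O(d^{-1/2}))$ together with $\langle\mathbf{g},\mathbf{g}'\rangle/d\to\cos\theta$) would, if carried out, actually close a gap the paper leaves open, so it is worth writing down in full.
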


Since the closed form of $f_{UV}(u,v)$ is well-known, we can compute $P(\theta)$ via numerical methods implemented in various libraries, \textit{e.g.}, SciPy~\cite{virtanen2020scipy}.

\subsection{Proof of Proposition 1}

We will prove this proposition by (1) replacing the condition of sampling from $\mathbb{S}^{d-1}$ with sampling from $N(0, I_{d})$ and (2) explicitly calculating $\langle T_{\alpha}(\mathbf{x}), T_{\alpha}(\mathbf{x}') \rangle$ from order statistics. To this end, we provide 6 lemmas, each of which plays a role as follows:

\begin{itemize}
    \item \textit{Lemma 1} is a somewhat weaker version of \textit{Proposition 1.}, which introduces some additional parameters related to $\epsilon$ and $\delta$. This lemma implies \textit{Proposition 1.} with the careful choice of such parameters.

    \item \textit{Lemma 2, 3} give our analysis of the relation between the uniform distribution on $\mathbb{S}^{d-1}$ and the normal distribution $N(0, I_{d})$. Thanks to these lemmas, it is enough to analyze $N(0, I_{d})$ during the proof.

    \item \textit{Lemma 4, 5, 6} give our analysis of the random variable $T_{\alpha}(\mathbf{X})$ for $\mathbf{X} \sim N(0, I_{d})$. Note that our transformation has a strong relationship with order statistics.
\end{itemize}

\noindent A simple schematic diagram of our proof is given in Figure~\ref{fig:schem}.

\begin{figure}[h]
    \centering
    \includegraphics[height=2cm]{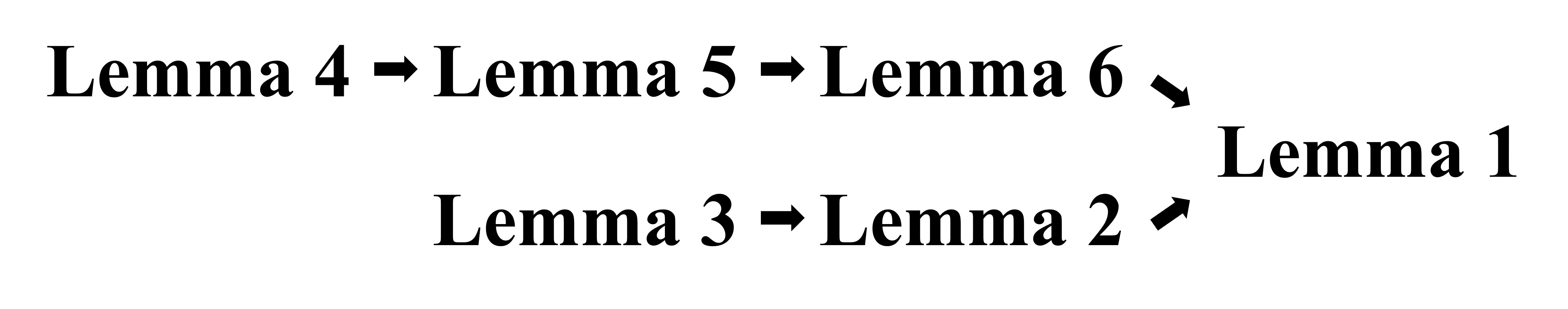}
    \vspace{-5mm}
    \caption{Schematic diagram for our strategy on the proof of \textit{Proposition 1}.}
    \label{fig:schem}
    \vspace{-3mm}
\end{figure}


Now, we first introduce our \textit{Lemma 1}. 

\begin{lemma}\label{lemma1}
    Let $\mathbf{X}, \mathbf{Y}$ be two independent random variables following $N(0, I_{d})$. Then for fixed $\theta \in (0, \pi/2) \cup (\pi/2, \pi)$ and the random variable $\mathbf{W} = \frac{\mathbf{X}+\tan\theta \mathbf{Y}}{\sqrt{1 + \tan^{2}\theta}}$, let us define the random variable $E$ as
    \begin{align}
        E = \left|\langle T_{\alpha}(\mathbf{X}), T_{\alpha}(\mathbf{W}) \rangle - \frac{\langle \mathbf{X}, \mathbf{W} \rangle}{||\mathbf{X}||_{2} \cdot ||\mathbf{W}||_{2}} \right| \nonumber
    \end{align}
    \noindent Then, for $\xi_{1}>0, \xi_{2}=d^{-1/2+\epsilon}$ for $\epsilon \in (0, 1/2)$ the following holds:
    \begin{align}
        \mathrm{Pr}\left[\left| E - |\cos\theta - \frac{2P(\theta, \xi_{2})d}{\alpha}| \right| < \xi_{1} + \xi_{2} + o(1) \right] > 1 - \delta, \nonumber
    \end{align}
    \noindent where $\delta = d^{-1}(3\xi_{1}^{-2} + 64\xi_{2}^{-2})$, and
    \begin{align}
        P(\theta, \xi_{2})=\int_{c + \xi_{2}}^{\infty}\int_{\frac{c+\xi_{2}}{\cos{\theta}}-u}^{\infty} f_{UV}(u,v) dv du
    - \int_{c+\xi_{2}}^{\infty}\int_{-\infty}^{-\frac{c + \xi_{2}}{\cos{\theta}}-u} f_{UV}(u,v) dv du, \nonumber
    \end{align}
    \noindent $f_{UV}(u,v)$ is a joint pdf of bivariate normal distribution with two independent random variables $U \sim N(0,1)$ and $V \sim N(0, (\tan\theta)^{2})$, and $c = \sqrt{2}\mathrm{erf}^{-1}(1-\frac{\alpha}{d})$ for error function $\mathrm{erf}$.
\end{lemma}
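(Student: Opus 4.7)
The plan is to decompose the error $E$ into a ``Gaussian cosine'' part, $\langle\mathbf{X},\mathbf{W}\rangle/(\|\mathbf{X}\|\|\mathbf{W}\|)$, and a ``transformed inner product'' part, $\langle T_\alpha(\mathbf{X}),T_\alpha(\mathbf{W})\rangle$; control each independently via concentration of measure; and combine them through the reverse triangle inequality $\bigl||a|-|b|\bigr|\le|a-b|$. The target quantity $|\cos\theta - 2P(\theta,\xi_2)d/\alpha|$ then arises as the difference of the two limiting values, and the stated shape $d^{-1}(3\xi_1^{-2} + 64\xi_2^{-2})$ of the failure probability will reflect the Chebyshev applications used in each of the two pieces.

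For the Gaussian cosine, I would use the explicit representation $\mathbf{W} = (\mathbf{X}+\tan\theta\,\mathbf{Y})/\sqrt{1+\tan^2\theta}$ to expand both $\langle\mathbf{X},\mathbf{W}\rangle$ and $\|\mathbf{W}\|^2$ as linear combinations of $\|\mathbf{X}\|^2$, $\|\mathbf{Y}\|^2$, and $\langle\mathbf{X},\mathbf{Y}\rangle$. Three Chebyshev applications at scale $\xi_1$---to $\|\mathbf{X}\|^2$ and $\|\mathbf{Y}\|^2$, which are $\chi^2_d$ with variance $2d$, and to $\langle\mathbf{X},\mathbf{Y}\rangle$, which has mean $0$ and variance $d$---contribute failure probability at most $3d^{-1}\xi_1^{-2}$ and yield $\bigl|\langle\mathbf{X},\mathbf{W}\rangle/(\|\mathbf{X}\|\|\mathbf{W}\|) - \cos\theta\bigr| \le \xi_1 + o(1)$ on the good event. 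This accounts for the $\xi_1$ component of the bound and the $3\xi_1^{-2}$ term in $\delta$.

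For the transformed inner product, I would first show that the $\alpha$-th order statistic $c_X^\alpha$ of the $|X_i|$'s lies in $[c-\xi_2,\,c+\xi_2]$ with high probability, by treating $N_t := |\{i:|X_i|>t\}|$ as a binomial with mean $d\cdot\mathrm{erfc}(t/\sqrt 2)$ and variance $O(d)$; Chebyshev at $t = c\pm\xi_2$, together with the monotone relationship between $N_t$ and $c_X^\alpha$, gives the claim, and the same argument applies to $c_W^\alpha$. On this event I would approximate $\alpha\langle T_\alpha(\mathbf{X}),T_\alpha(\mathbf{W})\rangle$ by the deterministic-threshold sum $S := \sum_i \mathrm{sign}(X_iW_i)\,\mathbf{1}[|X_i|>c+\xi_2,\,|W_i|>c+\xi_2]$, whose summands are i.i.d.\ across $i$, with mean $d\cdot P(\theta,\xi_2)$ by the joint bivariate-normal calculation that defines $P$ and variance $O(d)$. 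A further Chebyshev bound on $S$ then delivers $\langle T_\alpha(\mathbf{X}),T_\alpha(\mathbf{W})\rangle \approx 2P(\theta,\xi_2)d/\alpha$ up to $\xi_2 + o(1)$, and the constants across all Chebyshev applications in this step aggregate to $64d^{-1}\xi_2^{-2}$.

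The main obstacle will be controlling dependence in this last step: because $T_\alpha(\mathbf{X})_i$ depends on every coordinate through the order statistic, $\alpha\langle T_\alpha(\mathbf{X}),T_\alpha(\mathbf{W})\rangle$ is not itself a sum of independent quantities. The deterministic-threshold comparison sidesteps this, but introduces a boundary-layer term equal to the signed count of indices where the true-threshold and deterministic-threshold indicators disagree; bounding its magnitude by $O(\xi_2 d/\alpha) + o(1)$ through the same binomial concentration estimates on $N_{c\pm\xi_2}$ is the delicate point, and the careful bookkeeping of constants across the four binomial bounds and the Chebyshev bound on $S$ is what produces the relatively large $64$ (rather than a smaller integer) in the $\xi_2^{-2}$ coefficient of $\delta$.
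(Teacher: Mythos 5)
Your overall architecture matches the paper's: split $E$ by the triangle inequality into the Gaussian-cosine deviation and the transformed-inner-product deviation, control each by Chebyshev, and let $|\cos\theta - 2P(\theta,\xi_2)d/\alpha|$ emerge as the difference of the two limits. The interesting divergences are in how you control each piece. For the cosine term, the paper does \emph{not} expand $\langle\mathbf{X},\mathbf{W}\rangle$ and $\|\mathbf{W}\|^2$ via $\chi^2_d$ concentration; it instead computes $\mathrm{Var}[C]\le 3d^{-1}$ through beta-distribution identities for the normalized coordinates and, crucially, invokes an \emph{unproven} Assumption~1 (verified only empirically) that $\mathbb{E}[C]=\cos\theta+o(1)$. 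Your direct expansion in terms of $\|\mathbf{X}\|^2$, $\|\mathbf{Y}\|^2$, $\langle\mathbf{X},\mathbf{Y}\rangle$ would, if carried out, establish that expectation statement rigorously --- a genuine improvement --- though you should check that your three Chebyshev applications really aggregate to the constant $3$ (the two $\chi^2_d$ terms have variance $2d$, so naive accounting gives a larger constant than the paper's single Chebyshev with $\mathrm{Var}[C]\le 3d^{-1}$). For the transformed term, the paper cites Mosteller's asymptotic normality of the order statistic and then manipulates conditional probabilities $|\Pr[\hat{\mathcal{E}}_2\mid\hat{\mathcal{E}}_1]-\Pr[\mathcal{E}_2\mid\mathcal{E}_1]|$ to compare the true-threshold and deterministic-threshold events, finally taking a fourth power over the four sign-quadrant counts to produce the $64$; your binomial-count argument for $N_t=|\{i:|X_i|>t\}|$ is a more elementary route to the same order-statistic localization, and your explicit boundary-layer count of disagreeing indices is arguably a cleaner way to pass from the order-statistic threshold to the deterministic one than the paper's probability-ratio bounds. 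The one caveat on your side is that the specific constants $3$ and $64$ in $\delta$ fall out of the paper's particular computations, and your alternative sub-arguments would need their own constant bookkeeping to land on the same $\delta$; also note that the per-coordinate signed indicator has mean $2P(\theta,\xi_2)$ rather than $P(\theta,\xi_2)$ (the two like-sign quadrants each contribute $P^+$ and the two unlike-sign quadrants each $P^-$), a factor you elide but which is needed to match the stated $2P(\theta,\xi_2)d/\alpha$.
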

We can derive \textit{Proposition 1} directly from \textit{Lemma 1}. This is because we have that 
\begin{align}
    1-\delta &< \mathrm{Pr}\left[\left| E - \big |\cos\theta -\nonumber\frac{2P(\theta, \xi_{2})d)}{\alpha} \big | \right| < \xi_{1} + \xi_{2} + o(1) \right] \\ \nonumber
    &= \mathrm{Pr} \left[E < \epsilon_{\alpha, \theta} + (\xi_{1} + \xi_{2} + o(1)) \right]. \nonumber
\end{align}
\noindent In this case, if we choose $\xi_{1} = \xi_{2} = d^{-1/3}$, then $\xi_{1} + \xi_{2} = O(d^{-1/3}) = o(1)$ and $\delta = O(d^{-1/3}) = o(1)$, showing our $T_{\alpha}$ is $(\epsilon_{\alpha, \theta} + o(1)), o(1), \theta)$-isometry.

Our strategy for proving \textit{Lemma 1} is as follows: We first apply the triangular inequality to the involved term on the R.H.S. of the inequality in order to approximate each term contained in $E$ separately and merge them together:
\begin{align}
\hspace*{-2mm}
    \left| E - \big|\cos\theta - \frac{2P(\theta, \xi_{2})d}{\alpha} \big| \right| \le
    \left| \frac{\langle \mathbf{X}, \mathbf{W} \rangle}{||\mathbf{X}||_{2} \cdot ||\mathbf{W}||_{2}} - \cos\theta \right|
    + \left| \langle T_{\alpha}(\mathbf{X}), T_{\alpha}(\mathbf{W}) \rangle
    - \frac{2P(\theta, \xi_{2})d}{\alpha} \right|.\label{eq:5}
\end{align}
\noindent We will bound each term in Eq.~(\ref{eq:5}) with some real numbers $\xi_{1}, \xi_{2}$ respectively. Precisely, if we show that
\begin{gather}
    \mathrm{Pr}\left[\left| \frac{\langle \mathbf{X}, \mathbf{W} \rangle}{||\mathbf{X}||_{2} \cdot ||\mathbf{W}||_{2}} - \cos\theta \right| < \xi_{1} + o(1) \right] > 1 - \delta_{1}, \\
    \mathrm{Pr}\left[\left| \langle T_{\alpha}(\mathbf{X}), T_{\alpha}(\mathbf{W}) \rangle - \frac{2P(\theta, \xi_{2})d}{\alpha} \right| < \xi_{2} \right] > 1 - \delta_{2}\label{eq:7}
\end{gather}
\noindent for $\delta_{1} = O(\xi_{1}^{-2}d^{-1})$ and $\delta_{2} = O(\xi_{2}^{-2}d^{-1})$, then the R.H.S. of Eq.~(\ref{eq:5}) is bounded by $\xi_{1}+\xi_{2} + o(1)$ with probability at least $1-(\delta_{1}+\delta_{2}) = 1 - O(d^{-1}(\xi_{1}^{-2}+\xi_{2}^{-2}))$. To this end, we first give a \textit{Lemma 2} for dealing with the former term.
\begin{lemma}
    Let $\mathbf{X}, \mathbf{Y} \sim N(0, I_{d})$ be two independent random variables, and define random variables
        $\mathbf{W} = \frac{\mathbf{X}+\tan\theta\mathbf{Y}}{\sqrt{1+\tan^{2}\theta}}$, $\mathbf{C} = \frac{\langle \mathbf{X}, \mathbf{W} \rangle}{||\mathbf{X}||_{2} \cdot ||\mathbf{W}||_{2}}.$
    Then for $\xi_{1} > 0$, the following holds:
    \begin{align}
    \mathrm{Pr} \left[\left|C - \cos\theta \right| < \xi_{1} + o(1) \right] > 1 - 3\xi_{1}^{-2}d^{-1}. \nonumber
    \end{align}
\end{lemma}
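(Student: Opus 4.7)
The plan is to reduce $C$ to a function of three scalar quantities---$\|\mathbf{X}\|^{2}$, $\|\mathbf{Y}\|^{2}$, and $\langle\mathbf{X},\mathbf{Y}\rangle$---each of which concentrates around its mean, and then bound each using Chebyshev's inequality combined with a union bound. First I would observe that $\sqrt{1+\tan^{2}\theta}=1/|\cos\theta|$, so $\mathbf{W}=\cos\theta\,\mathbf{X}+\sin\theta\,\mathbf{Y}$ up to a global sign (which I would absorb by restricting to $\cos\theta>0$, the case $\cos\theta<0$ being symmetric under $\theta\mapsto\pi-\theta$). Substituting gives
\[
\langle\mathbf{X},\mathbf{W}\rangle = \cos\theta\,\|\mathbf{X}\|^{2} + \sin\theta\,\langle\mathbf{X},\mathbf{Y}\rangle,
\]
\[
\|\mathbf{W}\|^{2} = \cos^{2}\theta\,\|\mathbf{X}\|^{2} + \sin^{2}\theta\,\|\mathbf{Y}\|^{2} + 2\sin\theta\cos\theta\,\langle\mathbf{X},\mathbf{Y}\rangle.
\]

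Next I would introduce the normalized deviations $A=\|\mathbf{X}\|^{2}/d - 1$, $B=\|\mathbf{Y}\|^{2}/d - 1$, and $Z=\langle\mathbf{X},\mathbf{Y}\rangle/d$. Standard moment computations---using that $\|\mathbf{X}\|^{2}\sim\chi^{2}_{d}$ has variance $2d$, and that $\langle\mathbf{X},\mathbf{Y}\rangle=\sum_{i=1}^{d}X_{i}Y_{i}$ is a sum of $d$ i.i.d.\ mean-zero unit-variance products---give $\mathbb{E}[A]=\mathbb{E}[B]=\mathbb{E}[Z]=0$ and $\mathrm{Var}(A)=\mathrm{Var}(B)=2/d$, $\mathrm{Var}(Z)=1/d$. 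Dividing through by $d$ exhibits $C$ in the closed form
\[
C=\frac{\cos\theta\,(1+A)+\sin\theta\,Z}{\sqrt{(1+A)\bigl(1+\cos^{2}\theta\,A+\sin^{2}\theta\,B+2\sin\theta\cos\theta\,Z\bigr)}}.
\]
Chebyshev's inequality, applied separately with thresholds $\sqrt{2}\,\xi_{1}$ for $A$ and $B$ (each failure probability $\le\xi_{1}^{-2}d^{-1}$) and threshold $\xi_{1}$ for $Z$ (failure probability $\le\xi_{1}^{-2}d^{-1}$), followed by a union bound, then yields the claimed $3\xi_{1}^{-2}d^{-1}$. On the complementary ``good event'', a first-order Taylor expansion of the closed form around $(A,B,Z)=(0,0,0)$ gives $C-\cos\theta = \tfrac{1}{2}\cos\theta\sin^{2}\theta\,(A-B)+\sin^{3}\theta\,Z + O\bigl((|A|+|B|+|Z|)^{2}\bigr)$, whose linear coefficients are bounded by $1$ in absolute value.

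The main obstacle is ensuring that the linear contribution from $A,B$ and the quadratic Taylor remainder both collapse into the $o(1)$ correction rather than inflating the constant in front of $\xi_{1}$. The clean route I would pursue is to allocate the entire $\xi_{1}$-scale budget to $Z$---whose Taylor coefficient $\sin^{3}\theta$ satisfies $|\sin^{3}\theta|\le 1$---and to tighten the concentration of $A$ and $B$ either by using Chebyshev with much smaller thresholds of order $d^{-1/4}$ (still well within their $2/d$ variance budget, with probability cost only $o(1)$), or by invoking Gaussian concentration to obtain $\|\mathbf{X}\|,\|\mathbf{Y}\|=\sqrt{d}\,(1+O(d^{-1/2}\sqrt{\log d}))$ with super-polynomially small failure probability. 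In either case, on the good event the linear $A,B$ contributions become $o(1)$ deterministically, and the quadratic Taylor remainder---dominated once $A,B$ are tightened by the cross term $|Z|\cdot(|A|+|B|)$ plus the genuinely small $A^{2},B^{2}$---is likewise $o(1)$ in $d$. Combining these bounds with the linear-in-$Z$ main term yields $|C-\cos\theta|<\xi_{1}+o(1)$ on an event of probability at least $1-3\xi_{1}^{-2}d^{-1}$, as claimed.
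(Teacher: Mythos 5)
Your route is genuinely different from the paper's: the paper bounds $\mathrm{Var}[C]\le 3d^{-1}$ directly (via order statistics of the normalized coordinates and Cauchy--Schwarz on $\mathrm{Cov}(\tilde{X_1}^2,\tilde{W_1}^2)$), takes $\mathbb{E}[C]=\cos\theta+o(1)$ from an empirically verified Assumption, and applies Chebyshev \emph{once} to $C$ itself, which yields the $3\xi_1^{-2}d^{-1}$ bound uniformly in $\xi_1$. Your decomposition $C=f(A,B,Z)$ is appealing (it derives the mean behaviour instead of assuming it, and your linear coefficients $\tfrac12\cos\theta\sin^2\theta$ and $\sin^3\theta$ are correct), but as written it does not establish the lemma for all $\xi_1>0$, for two concrete reasons.

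First, the probability accounting only closes in one of your two variants. With thresholds $\sqrt2\,\xi_1$ on $A,B$ the union bound gives $3\xi_1^{-2}d^{-1}$, but on that good event the linear part is bounded only by $\bigl(\sqrt2\,|\cos\theta|\sin^2\theta+|\sin^3\theta|\bigr)\xi_1$, and this prefactor exceeds $1$ for a range of $\theta$ (about $1.18$ at $\theta=\pi/3$), so you do not get $\xi_1+o(1)$. Your Chebyshev fix with thresholds $d^{-1/4}$ costs $4d^{-1/2}$ in probability, which for fixed $\xi_1$ and large $d$ overwhelms the remaining budget $2\xi_1^{-2}d^{-1}$. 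Only the sub-Gaussian ($\chi^2$-tail) variant works, and even then you must first dispose of $\xi_1\ge 2$ (where the claim is immediate since $|C-\cos\theta|\le 2$) so that the residual budget $2\xi_1^{-2}d^{-1}\ge\tfrac12 d^{-1}$ dominates the super-polynomially small tail. Second, and more seriously, the step ``on the good event, $|C-\cos\theta|<\xi_1+o(1)$'' fails for constant $\xi_1$: the Taylor remainder contains a $Z^2$ term whose coefficient $\tfrac12\partial_Z^2 f(0)=-\sin^2\theta\cos\theta+\tfrac32\sin^2\theta\cos^3\theta$ is generically nonzero, so the remainder is $O(\xi_1^2)$, not $o(1)$. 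Indeed the map $Z\mapsto C$ is not $1$-Lipschitz over the whole good event: taking $\mathbf{Y}\approx-\mathbf{X}$ gives $A\approx B\approx 0$, $Z\approx-1$, and $C=-1$, so $|C-\cos\theta|\approx 1+\cos\theta>1=\xi_1$ for $\xi_1=1$ and any $\cos\theta>0$. Your argument is sound in the regime $\xi_1=o(1)$ (which is all that Lemma~\ref{lemma1} uses, since there $\xi_1=d^{-1/3}$), but the statement quantifies over every $\xi_1>0$, and closing that case requires either reverting to a one-shot Chebyshev on $C$ as the paper does, or a finer treatment of the large-$|Z|$ tail.
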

\noindent Prior to proving \textbf{Lemma 2}, we provide a useful fact that gives a relation between the uniform distribution on $\mathbb{S}^{d-1}$ and the normal distribution $N(0, I_{d})$.
\begin{fact}[\cite{muller1959note}] 
Let $\mathbf{X} \sim N(0, I_{d})$ be a random variable. Then the random variable $\mathbf{Y}:=\frac{\mathbf{X}}{||\mathbf{X}||_{2}}$ follows the uniform distribution on $\mathbb{S}^{d-1}$.
\end{fact}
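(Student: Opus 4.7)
The strategy is to characterize the distribution of $\mathbf{Y} := \mathbf{X}/\|\mathbf{X}\|_2$ by its invariance under the orthogonal group $O(d)$, and then invoke the uniqueness of the rotation-invariant probability measure on $\mathbb{S}^{d-1}$.

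First, I would establish that the standard multivariate normal distribution is rotationally invariant: for every $Q \in O(d)$, the random vector $Q\mathbf{X}$ is a linear transformation of a Gaussian and hence Gaussian, with mean $Q \cdot 0 = 0$ and covariance $Q I_d Q^{\top} = I_d$; consequently $Q\mathbf{X} \sim N(0, I_d)$. The same conclusion is transparent from the density $(2\pi)^{-d/2} e^{-\|\mathbf{x}\|_2^2/2}$, which depends on $\mathbf{x}$ only through $\|\mathbf{x}\|_2$.

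Next, I would observe that the normalization map $\mathbf{x} \mapsto \mathbf{x}/\|\mathbf{x}\|_2$ is $O(d)$-equivariant: since orthogonal transformations preserve Euclidean norms, $(Q\mathbf{x})/\|Q\mathbf{x}\|_2 = Q(\mathbf{x}/\|\mathbf{x}\|_2)$. Combining this with the previous step, the distribution of $Q\mathbf{Y}$ equals the distribution of $\mathbf{Y}$ for every $Q \in O(d)$. The map is almost surely well-defined because $\mathrm{Pr}[\mathbf{X}=0]=0$. By uniqueness of the $O(d)$-invariant Borel probability measure on the homogeneous space $\mathbb{S}^{d-1} \cong O(d)/O(d-1)$ — namely the normalized surface measure — I conclude that $\mathbf{Y}$ is uniform on $\mathbb{S}^{d-1}$.

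There is essentially no obstacle; this is a classical two-line argument resting on the radial symmetry of the Gaussian density, which is why the authors cite it rather than prove it. If I wanted to avoid invoking Haar-measure uniqueness, a self-contained alternative is to change to polar coordinates $(R, \mathbf{Y}) = (\|\mathbf{X}\|_2,\, \mathbf{X}/\|\mathbf{X}\|_2)$, compute the Jacobian, and verify that the joint density of $(R, \mathbf{Y})$ factors as a function of $R$ alone times the constant surface density on $\mathbb{S}^{d-1}$. This factorization simultaneously yields independence of $R$ from $\mathbf{Y}$ (with $R^2 \sim \chi^2_d$) and uniformity of $\mathbf{Y}$, which is arguably the cleanest path in a probability-theoretic write-up and is the form most convenient for the subsequent use inside Lemma~2.
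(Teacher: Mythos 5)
The paper does not prove this statement at all: it is presented as a \emph{Fact} with a citation to Muller's 1959 note, so there is no internal proof to compare against. Your argument --- rotational invariance of $N(0, I_d)$, $O(d)$-equivariance of the normalization map $\mathbf{x} \mapsto \mathbf{x}/\|\mathbf{x}\|_2$, and uniqueness of the rotation-invariant probability measure on $\mathbb{S}^{d-1}$ (or, alternatively, the polar-coordinate factorization) --- is the standard, complete, and correct proof of this classical fact, so your write-up is fine as a self-contained substitute for the citation.
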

\noindent In addition, we give another lemma showing that two vectors uniformly sampled in $\mathbb{S}^{d-1}$ are ``almost" orthogonal. 
\begin{lemma}
    Let $\mathbf{X}, \mathbf{Y} \sim N(0, I_{d})$ be two independent random variables. For the random variable $C$ defined as $C = \frac{\langle \mathbf{X}, \mathbf{Y} \rangle}{||\mathbf{X}||_{2} \cdot ||\mathbf{Y}||_{2}}$, then $\mathbb{E}[C] = 0$ and $\mathrm{Var}[C] = \frac{1}{d}$.
\end{lemma}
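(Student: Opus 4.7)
The plan is to reduce the computation of $\mathbb{E}[C]$ and $\mathrm{Var}[C]$ to a one-dimensional statement about a single coordinate of a uniform random point on $\mathbb{S}^{d-1}$, and then exploit coordinate symmetry.

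First, I would invoke Fact 1: the normalized vectors $\widetilde{\mathbf{X}} := \mathbf{X}/\|\mathbf{X}\|_2$ and $\widetilde{\mathbf{Y}} := \mathbf{Y}/\|\mathbf{Y}\|_2$ are independent and each uniformly distributed on $\mathbb{S}^{d-1}$. Thus $C = \langle \widetilde{\mathbf{X}}, \widetilde{\mathbf{Y}} \rangle$. Because the uniform distribution on $\mathbb{S}^{d-1}$ is invariant under every orthogonal transformation, I can condition on $\widetilde{\mathbf{Y}}$ and apply an orthogonal map sending $\widetilde{\mathbf{Y}}$ to the first standard basis vector $e_1$; this does not change the distribution of $\widetilde{\mathbf{X}}$. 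Hence $C$ has the same distribution as $\widetilde{X}_1$, the first coordinate of a uniformly random point on $\mathbb{S}^{d-1}$, i.e.\ $C \stackrel{d}{=} X_1/\|\mathbf{X}\|_2$ for $\mathbf{X} \sim N(0, I_d)$.

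The mean then follows immediately from the symmetry $X_1 \mapsto -X_1$, which leaves $\|\mathbf{X}\|_2$ invariant and negates the numerator, yielding $\mathbb{E}[C] = 0$. For the variance, I would compute
\begin{equation}
\mathbb{E}[C^2] = \mathbb{E}\!\left[\frac{X_1^2}{\|\mathbf{X}\|_2^2}\right] = \mathbb{E}\!\left[\frac{X_1^2}{\sum_{i=1}^d X_i^2}\right]. \nonumber
\end{equation}
By the coordinate-exchange symmetry of $N(0, I_d)$, the quantities $\mathbb{E}[X_i^2/\|\mathbf{X}\|_2^2]$ are equal for all $i \in \{1, \dots, d\}$, and their sum is $\mathbb{E}[1] = 1$. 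Hence each equals $1/d$, so $\mathrm{Var}[C] = \mathbb{E}[C^2] = 1/d$.

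The argument is essentially a routine application of rotational invariance plus coordinate symmetry, so I do not anticipate a real obstacle. The only subtle point worth stating carefully is the conditioning step: one must justify that the orthogonal map aligning $\widetilde{\mathbf{Y}}$ with $e_1$ can be chosen measurably as a function of $\widetilde{\mathbf{Y}}$, but this is standard (e.g.\ via a Gram--Schmidt-type construction), and the conclusion can equivalently be obtained by integrating out $\widetilde{\mathbf{Y}}$ and using Fubini together with the rotational invariance of the uniform measure on the sphere.
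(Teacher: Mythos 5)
Your proof is correct, but it takes a genuinely different route from the paper's. The paper expands $C^2 = \sum_{i,j}\tilde{X}_i\tilde{Y}_i\tilde{X}_j\tilde{Y}_j$ and handles the $d$ diagonal and $d(d-1)$ off-diagonal terms separately, using the independence of $\mathbf{X}$ and $\mathbf{Y}$ to factor each expectation; it then identifies the marginal law of $\tilde{X}_1^2$ as $\mathrm{Beta}(1/2,(d-1)/2)$ to get $\mathbb{E}[\tilde{X}_1^2]=1/d$, and needs a separate rotation trick (writing $\tilde{X}_1\tilde{X}_2$ via $U=(X_1+X_2)/\sqrt{2}$, $V=(X_1-X_2)/\sqrt{2}$) to kill the cross-moment $\mathbb{E}[\tilde{X}_1\tilde{X}_2]$. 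You instead collapse the two-vector problem to a single coordinate: by rotational invariance, conditioning on $\widetilde{\mathbf{Y}}$ and aligning it with $e_1$ shows $C\stackrel{d}{=}\widetilde{X}_1$, after which the mean is zero by sign symmetry and $\mathbb{E}[\widetilde{X}_1^2]=1/d$ follows from exchangeability together with $\sum_i\widetilde{X}_i^2=1$ --- no Beta identity and no cross-moment computation needed. Your argument is shorter and arguably more robust (it gives the full distributional identity $C\stackrel{d}{=}\widetilde{X}_1$, hence all moments, not just the first two); the paper's term-by-term expansion is more pedestrian but is the template it reuses in Lemma~2, where $\mathbf{X}$ and $\mathbf{W}$ are \emph{not} independent and the conditioning-plus-rotation reduction no longer applies, so the off-diagonal bookkeeping cannot be avoided there. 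Both are complete and correct for the independent case stated in Lemma~3.
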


\begin{proof}
We will explicitly calculate $\mathbb{E}[C]$ and $\mathrm{Var}[C]$, as follows. For simplicity, we will denote $\mathbf{X} = (X_{1}, \dots, X_{d})$, and $\mathbf{Y} = (Y_{1}, \dots, Y_{d})$. Then for $\tilde{X_{1}} = \frac{X_{1}}{\sqrt{\sum_{i=1}^{d}X_{i}^{2}}}$ and $\tilde{Y_{1}} = \frac{Y_{1}}{\sqrt{\sum_{i=1}^{d}Y_{i}^{2}}}$,
\begin{align}
    \mathbb{E}[C] = \mathbb{E}\left[\frac{\sum_{i=1}^{d}X_{i}Y_{i}}{||\mathbf{X}||_{2} \cdot ||\mathbf{Y}||_{2}} \right] = d\cdot \mathbb{E}[\tilde{X_{1}}] \cdot \mathbb{E}[\tilde{Y_{1}}]. \nonumber
\end{align}
By the \textit{Fact 1}, $\tilde{X}_{1}$ follows the distribution of the 1st component of the vector sampled from $\mathbb{S}^{d-1}$. By the radial symmetry of $\mathbb{S}^{d-1}$, $\mathbb{E}[\tilde{X}_{1}]=0$ holds, so $\mathbb{E}[C] = 0$. We also compute $\mathrm{Var}[C]$ by
\begin{align}
    \mathrm{Var}[C] &= d \cdot \mathbb{E} \nonumber[\tilde{X_{1}}^{2}\tilde{Y_{1}}^{2}] + d(d-1) \cdot \mathbb{E}[\tilde{X_{1}}\tilde{X_{2}}\tilde{Y_{1}}\tilde{Y_{2}}] \\ \nonumber
    &= d \cdot \mathbb{E}[\tilde{X_{1}}^{2}] \cdot \mathbb{E}[\tilde{Y_{1}}^{2}] + d(d-1) \cdot \mathbb{E} [\tilde{X_{1}}\tilde{X_{2}}]\mathbb{E}[\tilde{Y_{1}}\tilde{Y_{2}}]. \nonumber
\end{align}
\noindent Note that $\tilde{X_{1}}^{2}$ follows beta distribution $\mathrm{Beta}(1/2, (d-1)/2)$ and $\mathbb{E}[\tilde{X_{1}}^{2}]$ = $1/d$. For two random variables $U, V$ such that $U = \frac{X_{1} + X_{2}}{\sqrt{2}}, V= \frac{X_{1}-X_{2}}{\sqrt{2}}$, we can deduce that 
\begin{align}
    \tilde{X_{1}}\tilde{X_{2}} = \frac{1}{2}\frac{U^{2}}{U^{2} + V^{2} + \sum_{i=3}^{d}X_{i}^{2}} - \frac{1}{2}\frac{V^{2}}{U^{2} + V^{2} + \sum_{i=3}^{d}X_{i}^{2}}. \nonumber
\end{align}
\noindent Since all of both term on L.H.S. follows $\mathrm{Beta}(1/2, (d-1)/2)$, we finally obtain $\mathbb{E}[\tilde{X_{1}}\tilde{X_{2}}] = 0$, and $\mathrm{Var}[C] = 1/d$, as claimed.
\end{proof}

By using the same technique as \textit{Lemma 3}, it seems that \textit{Lemma 2} can be easily proved. However, we need to consider the dependence of two random variables $\mathbf{X}$ and $\mathbf{W}$. To deal with such dependence, we need to introduce an assumption as follows:
\begin{assumption}
Let $\mathbf{X}$ and $\mathbf{Y}$ be two independent random variables following $N(0, I_{d})$. For fixed $\theta$, let us define the random variable $\mathbf{W} = \frac{X + \tan\theta\mathbf{Y}}{\sqrt{1 + \tan^{2}\theta}}$. Then we have that $\mathbb{E}\left[\frac{\langle \mathbf{X}, \mathbf{W} \rangle}{||\mathbf{X}||_{2} \cdot ||\mathbf{W}||_{2}} \right] = \cos\theta + o(1)$.
\end{assumption}
\noindent For justification, we will experimentally verify this assumption in the next section. By admitting that the \textit{Assumption 1} holds, we can prove the \textit{Lemma 2} as follows.

\vspace{2mm}

\begin{proof}[Proof of Lemma 2]
By the same strategy as \textit{Lemma 3}, we will calculate $\mathbb{E}[C]$ and $\mathrm{Var}[C]$ directly. From the \textit{Assumption 1}, $\mathbb{E}[C]$ is approximated as $\cos\theta$ with an error term $o(1)$. Therefore, it suffices to compute the reasonable bound of $\mathrm{Var}[C]$. By using the same technique and notation as \textit{Lemma 3},
\begin{align}
    \mathrm{Var}[C] &= \mathbb{E}[C^{2}] - \mathbb{E}[C]^{2}
    \nonumber \\ \nonumber
    &=d\mathbb{E}[\tilde{X_{1}}^{2}\tilde{W_{1}}^{2}] + d(d-1) \mathbb{E} [\tilde{X_{1}}\tilde{W_{1}}]\mathbb{E} [\tilde{X_{2}}\tilde{W_{2}}] - d^{2}\mathbb{E}[\tilde{X_{1}}\tilde{W_{1}}]^{2} \\ \nonumber
    &= d(\mathbb{E}[\tilde{X_{1}}^{2}]\mathbb{E}[\tilde{W_{1}}^{2}]+\mathrm{Cov}(\tilde{X_{1}}^{2}, \tilde{W_{1}}^{2}) - \mathbb{E}[\tilde{X_{1}}\tilde{W_{1}}]^{2}), \nonumber
\end{align}
\noindent Since $\mathrm{Cov}(\tilde{X_{1}}^{2}, \tilde{W_{1}}^{2}) \le \sqrt{\mathrm{Var}[\tilde{X_{1}}^{2}]\cdot\mathrm{Var}[\tilde{W_{1}}^{2}]}$, along with the fact that $\tilde{X_{1}}^{2}$ follows beta distribution $\mathrm{Beta}(1/2, (d-1)/2)$, we obtain 
\begin{align}
    \mathrm{Var}[C]
    \le d(\mathbb{E}[\tilde{X_{1}}^{2}]^{2} + \mathrm{Var}[\tilde{X_{1}}]- \mathbb{E}[\tilde{X_{1}}\tilde{W_{1}}]^{2})  
    \le d \left(\frac{1}{d^{2}} + \frac{2(d-1)}{d^{2}(d+1)} \right)
    \le 3d^{-1}. \nonumber
\end{align}
\noindent By Chebyshev's inequality, for $\xi_{1}>0$, $\xi_{E} = \mathbb{E}[C] - \cos\theta$,
\begin{align}
    \mathrm{Pr}[|C - \cos\theta| < \xi_{1} + \xi_{E}]
    > 1 - \mathrm{Var}[C]\xi_{1}^{-2}
    > 1 - 3\xi_{1}^{-2}d^{-1}. \nonumber
\end{align}
Since $\xi_{E} = o(1)$ by \textit{Assumption 1}, we have the desired result.
\end{proof}

On the other hand, we need to handle the latter term of Eq.~(\ref{eq:5}). Our observation is the following: let $\mathbf{x}, \mathbf{y} \in \mathbb{S}^{d-1} \subset \mathbb{R}^{d}$. Then, when we denote $(I_{\mathbf{x}}^{+}, I_{\mathbf{x}}^{-})$, $(I_{\mathbf{y}}^{+}, I_{\mathbf{y}}^{-})$ as the sets of indices corresponding components of $T_{\alpha}(\mathbf{x}), T_{\alpha}(\mathbf{y})$ are $\frac{+1}{\sqrt{\alpha}},\frac{-1}{\sqrt{\alpha}}$, respectively, we have that
\begin{align}
    \alpha \cdot \langle T_{\alpha}(\mathbf{x}), T_{\alpha}(\mathbf{y}) \rangle =
    (|I_{\mathbf{x}}^{+} \cap I_{\mathbf{y}}^{+}| + |I_{\mathbf{x}}^{-} \cap I_{\mathbf{y}}^{-}|)
    - (|I_{\mathbf{x}}^{+} \cap I_{\mathbf{y}}^{-}| + |I_{\mathbf{x}}^{-} \cap I_{\mathbf{y}}^{+}|).\label{eq:33}
\end{align}
\noindent We focus on each term in Eq.~(\ref{eq:33}). For two random variables $\mathbf{X} = (X_{1}, \dots, X_{d})$ and $\mathbf{Y} = (Y_{1}, \dots, Y_{d})$, we introduce new random variables $I_{\mathbf{X}\mathbf{Y}}^{\mathrm{pp}}, I_{\mathbf{X}\mathbf{Y}}^{\mathrm{mm}}, I_{\mathbf{X}\mathbf{Y}}^{\mathrm{mp}}, I_{\mathbf{X}\mathbf{Y}}^{\mathrm{pm}}$ such that
\begin{align}
I_{\mathbf{X}\mathbf{Y}}^{\mathrm{pp}}= |I_{\mathbf{X}}^{+} \cap I_{\mathbf{Y}}^{+}|, \quad I_{\mathbf{X}\mathbf{Y}}^{\mathrm{mm}}=|I_{\mathbf{X}}^{-} \cap I_{\mathbf{Y}}^{-}|, \quad
I_{\mathbf{X}\mathbf{Y}}^{\mathrm{pm}} = |I_{\mathbf{X}}^{+} \cap I_{\mathbf{Y}}^{-}|, \quad I_{\mathbf{X}\mathbf{Y}}^{\mathrm{mp}} = |I_{\mathbf{X}}^{-} \cap I_{\mathbf{Y}}^{+}|. \label{eq:35}
\end{align}
To analysis the non-zero components after applying $T_{\alpha}$, we need to consider the $(d-\alpha)$th order statistics $\tilde{X}_{(d-\alpha)}$ of $|X_{1}|, \dots, |X_{d}|$. By using this notation, we can derive that for fixed $i \in [d]$,
\begin{align}
    \mathrm{Pr}[i \in I_{\mathbf{X}}^{+} \cap I_{\mathbf{Y}}^{+}]= 
    \mathrm{Pr}[(|X_{i}| > \tilde{X}_{(d-\alpha)}) \land (|Y_{i}| > \tilde{Y}_{(d-\alpha)}) \land (X_{i}>0) \land (Y_{i}>0)]. \label{eq:35new}
\end{align}
To compute this probability, we first need to study the property of order statistics. In fact, there is a lemma from \cite{monsteller1946on}, which tells us the asymptotic behavior of order statistics.

\begin{lemma}[\cite{monsteller1946on}]
    Let $X_{1}, \dots, X_{d}$ be i.i.d. random variables following the distribution $D$ and let us denote $X_{(d-\alpha)}$ as an $\alpha$th order statistics of them. Then for $\alpha = Cd \in [d]$ for constant $C\in(0,1)$, $X_{(d-\alpha)}$ asymptotically follows a normal distribution, whose mean and variance are 
    $
        \mathbb{E}[X_{(d-\alpha)}] = F_{X}^{-1}(1-\alpha/d)$ and $
        \mathrm{Var}[X_{(d-\alpha)}] = \frac{\alpha(d-\alpha)}{d^{2}(f_{X}(F_{X}^{-1}(1-\alpha/d)))^{2}},$
    where $f_{X}, F_{X}$ is the pdf, cdf of the distribution $D$, respectively.
\end{lemma}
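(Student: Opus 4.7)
The plan is to reduce the lemma to the classical quantile central limit theorem via the tautological identity
\[\{X_{(d-\alpha)} \le x\} = \{N_x \ge d-\alpha\}, \quad N_x := \#\{i \in [d] : X_i \le x\},\]
which allows one to analyse the order statistic through the much more tractable binomial $N_x \sim \mathrm{Bin}(d, F_X(x))$.

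First I would fix the population quantile $x_0 := F_X^{-1}(1 - \alpha/d)$ corresponding to the level $p = (d-\alpha)/d$. Under the hypothesis $\alpha = Cd$ with $C \in (0,1)$, the quantity $p$ stays bounded away from $0$ and $1$, so $f_X(x_0) > 0$ and the implicit function theorem supplies a local $C^1$ inverse of $F_X$ at $p$. Next I would set $x = x_0 + t/\sqrt{d}$ and linearise,
\[F_X(x) = p + f_X(x_0)\, \frac{t}{\sqrt{d}} + o\!\left(\frac{1}{\sqrt{d}}\right),\]
so that the event $\{X_{(d-\alpha)} \le x\}$ translates into a tail event for a binomial whose mean has drifted by $\sqrt{d}\, f_X(x_0)\, t + o(1)$ and whose standardised variance converges to $p(1-p)$. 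The Lindeberg--L\'evy CLT applied to $N_x$ then yields
\[\sqrt{d}\, \bigl(X_{(d-\alpha)} - x_0\bigr) \xrightarrow{d} \mathcal{N}\!\left(0, \frac{p(1-p)}{f_X(x_0)^2}\right),\]
from which the asymptotic mean $F_X^{-1}(1-\alpha/d)$ and the stated asymptotic variance follow directly after substituting $p(1-p) = \alpha(d-\alpha)/d^2$.

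The main technical obstacle I foresee is justifying the remainder control in the linearisation of $F_X$ uniformly over the range of $t$ relevant for the CLT, and checking that the first-order bias is of strictly smaller order than the standard deviation so that centring at $x_0$ is indeed asymptotically correct. The cleanest way to dispatch both issues simultaneously is via the Bahadur--Kiefer representation
\[X_{(d-\alpha)} - x_0 = \frac{p - \widehat F_d(x_0)}{f_X(x_0)} + R_d, \qquad R_d = o\!\left(d^{-1/2}\right) \text{ a.s.},\]
where $\widehat F_d$ denotes the empirical CDF of the sample; invoking the classical CLT for the centred empirical process $\widehat F_d(x_0) - p$ then immediately delivers the desired asymptotic normality with the claimed parameters, and the remainder $R_d$ is of small enough order to be absorbed into the $o(1)$ error terms needed by the surrounding argument for Proposition~\ref{prop_1_full}.
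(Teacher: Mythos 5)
Your proposal is correct in substance, but there is nothing in the paper to compare it against: the paper does not prove this lemma at all, it imports it as a black-box citation to Mosteller (1946). What you have written is the standard derivation of asymptotic normality for a central sample quantile --- the duality $\{X_{(d-\alpha)} \le x\} = \{N_x \ge d-\alpha\}$ with $N_x \sim \mathrm{Bin}(d, F_X(x))$, linearisation of $F_X$ at $x_0 = F_X^{-1}(1-\alpha/d)$, and the CLT for $N_x$ --- and the hypotheses you isolate ($C \in (0,1)$ keeps the level interior; $f_X(x_0) > 0$) are exactly the regularity the lemma silently assumes; they hold for the half-normal distribution to which the paper actually applies it. The closing appeal to the Bahadur--Kiefer representation is a heavier tool than needed (it requires more smoothness of $F$ than the binomial-duality route, and only convergence in distribution is used downstream), but it is valid here. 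One point you should not gloss over: your derivation yields $\mathrm{Var}[X_{(d-\alpha)}] \approx \alpha(d-\alpha)/(d^{3} f_X(x_0)^{2})$, whereas the lemma as printed states $\alpha(d-\alpha)/(d^{2} f_X(x_0)^{2})$; these differ by a factor of $d$, so the stated variance does not ``follow directly'' from your limit theorem. Your version is the correct one, and it is also the one the paper actually uses --- the Chebyshev step in the proof of the next lemma produces the bound $\xi_{2}^{-2}d^{-1}/(4 f^{2})$, which requires the extra $d^{-1}$ --- so the printed statement simply omits a factor of $d$ in the denominator, and your proof silently corrects it rather than reproducing it.
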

\noindent In fact, we used $\alpha = \lfloor\frac{2d}{3}\rfloor$ for our transformation, so this lemma is applicable to our setting. From this lemma, we can obtain a reasonable threshold to estimate the output of the transform. More precisely, we can estimate the probability in Eq.~(\ref{eq:36}) for a fixed $i$ as the following lemma.
\begin{align}
    \mathrm{Pr}[i \in I_{\mathbf{X}}^{+}] = \mathrm{Pr}[(|X_{i}|>\tilde{X}_{(d-\alpha)}) \land (X_{i}>0)].\label{eq:36}
\end{align}
\begin{lemma}
    Let $X_{1}, \dots, X_{d}$ be i.i.d. random variables following the distribution $D$, and let us denote $X_{(\alpha)}$ as $\alpha$th order statistics of $X_{1}, \dots, X_{d}$. Then we have that for $i \in [d]$, $\alpha = Cd \in [d]$ for constant $C \in (0,1)$ and $\xi_{2} > 0$,
    \begin{align}
        \left| \mathrm{Pr}\left[X_{i} > X_{(d-\alpha)} \right] - \mathrm{Pr}\left[ X_{i} > F^{-1}\left(1-\frac{\alpha}{d}\right) + \xi_{2} \right] \right|
        < \max \left\{\xi_{2}, \frac{1}{8\xi_{2}^{2}d(f(F^{-1}(1-\alpha /d)))^{2}} \right\}, \nonumber 
    \end{align}
    where $f, F$ is the pdf, cdf corresponding to $D$, respectively.
\end{lemma}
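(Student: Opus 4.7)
The plan is to replace the random threshold $X_{(d-\alpha)}$ by the deterministic threshold $m := F^{-1}(1 - \alpha/d)$ shifted by $\pm\xi_2$, so that the difference of the two probabilities splits into two manageable pieces: the concentration error of the order statistic, and the tiny mass that $X_i$ puts on a $\xi_2$-neighbourhood of $m$. These two pieces will correspond exactly to the two arguments of the stated $\max$.

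First, I would feed the hypothesis $\alpha = Cd$ with $C \in (0,1)$ into Lemma~5 to obtain the asymptotic normal approximation $X_{(d-\alpha)} \sim N(m, \sigma^2)$, where $\sigma^2 = \alpha(d-\alpha)/(d^3 f(m)^2) \le 1/(4d\,f(m)^2)$ (using the standard $1/d$ scaling for sample quantiles together with $\alpha(d-\alpha) \le d^2/4$). Applying Chebyshev's inequality combined with the symmetry of the Gaussian limit (which halves each one-sided tail) yields $\Pr[X_{(d-\alpha)} > m + \xi_2] \le \sigma^2/(2\xi_2^2) \le 1/(8\xi_2^2 d\,f(m)^2)$, with the same bound for the opposite tail. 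This matches the second argument of the stated maximum.

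Next, I would split each of the two probabilities along the good events $E := \{X_{(d-\alpha)} \le m + \xi_2\}$ and $E' := \{X_{(d-\alpha)} \ge m - \xi_2\}$. Writing $A := \{X_i > X_{(d-\alpha)}\}$ and $B := \{X_i > m + \xi_2\}$, on $E$ the inclusion $B \subseteq A$ gives $\Pr[B] - \Pr[A] \le \Pr[E^c]$, which is controlled by the tail bound of the previous step. On $E'$, the reverse inclusion $A \subseteq \{X_i > m - \xi_2\}$ gives $\Pr[A] - \Pr[B] \le \Pr[m - \xi_2 < X_i \le m + \xi_2] + \Pr[(E')^c]$. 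The interval-mass term equals $F(m + \xi_2) - F(m - \xi_2) \le 2\xi_2 \sup_t f(t)$ and is absorbed into the bare $\xi_2$ entry of the stated maximum under the mild convention $\sup f \le 1/2$, which is automatic for the standard Gaussian density driving the application in Lemma~1. Taking the worse of the two sides then delivers the bound as a maximum rather than a sum.

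The main obstacle is not the logical structure, which is a routine event-splitting argument, but matching the precise constants of the stated bound. The factor $1/8$ (rather than $1/4$) in the tail term forces one to use the symmetry of the Gaussian limit instead of plain two-sided Chebyshev, and expressing the interval-mass error as the bare $\xi_2$ (without an explicit $f(m)$ factor) implicitly invokes a uniform density bound. A careful write-up must also reconcile the form of the variance stated in Lemma~5 (which as written omits the $1/d$ factor standard for sample quantiles) with the $d$ appearing in the denominator of the stated bound, which is what ultimately makes the concentration term vanish as $d \to \infty$.
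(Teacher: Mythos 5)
Your first half --- bounding $\mathrm{Pr}[B]-\mathrm{Pr}[A]$ by the one-sided concentration of the order statistic, obtained from the asymptotic-normality lemma (Lemma 4 in the paper; you cite it as Lemma 5) via Chebyshev plus the symmetry of the Gaussian limit to win the extra factor $1/2$ --- is exactly the paper's argument, including your correct observation that the variance formula in that lemma must carry the standard extra $1/d$ for the concentration term to decay. Where you diverge is the opposite direction. The paper does not condition a second time: it uses the exact rank identity $\mathrm{Pr}[X_i > X_{(d-\alpha)}] = \alpha/d$ (exchangeability of $X_1,\dots,X_d$) together with $\mathrm{Pr}[X_i > F^{-1}(1-\alpha/d)+\xi_2] = 1-F\bigl(F^{-1}(1-\alpha/d)+\xi_2\bigr)$ and a first-order Taylor expansion of $F$, so that $\mathrm{Pr}[\mathcal{E}_1]-\mathrm{Pr}[\mathcal{E}_2] = \xi_2\, f\bigl(F^{-1}(1-\alpha/d)\bigr) + O(\xi_2^2) < \xi_2$ with \emph{no} concentration term on that side at all. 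This is what lets the final bound be a maximum of the two error sources rather than their sum.

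Your version of that direction, $\mathrm{Pr}[A]-\mathrm{Pr}[B] \le \mathrm{Pr}[m-\xi_2 < X_i \le m+\xi_2] + \mathrm{Pr}[(E')^c]$, leaves a \emph{sum} of the interval mass and a second tail probability, so "taking the worse of the two sides" yields $\max\{\text{tail},\; 2\xi_2\sup f + \text{tail}\}$, which is strictly larger than the stated $\max\{\xi_2,\;\text{tail}\}$ and is not implied by it. The fix is precisely the paper's move: replace the second conditioning by the exact computation of $\mathrm{Pr}[A]=\alpha/d$ and compare it directly to $1-F(m+\xi_2)$. (Both write-ups share the same mild looseness in absorbing the density factor into the bare $\xi_2$ --- the paper implicitly needs $f(F^{-1}(1-\alpha/d))\le 1$, you need $\sup f \le 1/2$ --- which is harmless for the Gaussian instantiation used downstream in Lemma 1.)
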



\begin{proof}
Let us denote $\mathcal{E}_{1}$, $\mathcal{E}_{2}$, and $\mathcal{E}_{3}$ as events corresponding to $X_{i} > X_{(d-\alpha)}$, $X_{i} > F^{-1}(1-\alpha/d) + \xi_{2}$, and $X_{(d-\alpha)} > F^{-1}(1-\alpha/d) + \xi_{2}$, respectively. In this setting, we can deduce that if $\mathcal{E}_{2}$ holds but $\mathcal{E}_{1}$ does not, then $\mathcal{E}_{3}$ holds automatically. This gives $\mathrm{Pr}[\lnot \mathcal{E}_{1} \land \mathcal{E}_{2}] \le \mathrm{Pr}[\mathcal{E}_{3}]$. We will focus on calculating $\mathrm{Pr}[\mathcal{E}_{3}]$. According to \textit{Lemma 4} and Chebyshev's inequality, we have that
\begin{align}
    \mathrm{Pr}\left[ \left| X_{(d-\alpha)} - F^{-1}\left(1-\frac{\alpha}{d} \right) \right| > \xi_{2} \right]
    < \frac{\xi_{2}^{-2}d^{-1}}{4(f(F^{-1}(1-\alpha /d)))^{2}}. \nonumber
\end{align}
Here, we used the fact that $\frac{\alpha}{d}(1 - \frac{\alpha}{d}) \le \frac{1}{4}$ because $0<\frac{\alpha}{d}<1$. Since $X_{(d-\alpha)}$ asymptotically follows normal distribution, we can expect that $X_{(d-\alpha)}$ has a symmetry on the line $x = F^{-1}(1-\alpha / d)$. Therefore, we have 
\begin{align}
    \mathrm{Pr}[\mathcal{E}_{1}] - \mathrm{Pr}[\mathcal{E}_{2}] &=  \nonumber \mathrm{Pr}[\mathcal{E}_{1} \land \lnot \mathcal{E}_{2}] - \mathrm{Pr}[\lnot \mathcal{E}_{1} \land \mathcal{E}_{2}] \\
    &\ge -\mathrm{Pr}[\lnot \mathcal{E}_{1} \land \mathcal{E}_{2}] 
    \ge -\frac{1}{8\xi_{2}^{2}d } \cdot \frac{1}{(f(F^{-1}(1-\alpha /d)))^{2}}.\label{eq:41}
\end{align}
\noindent Also, if we calculate $\mathrm{Pr}[\mathcal{E}_{1}]$ and $\mathrm{Pr}[\mathcal{E}_{2}]$ explicitly, we have that $\mathrm{Pr}[\mathcal{E}_{1}] = \alpha / d$, and
\begin{align}
    \mathrm{Pr}[\mathcal{E}_{2}] = \int_{\tau}^{\infty}f(x) dx = \lim_{t \rightarrow \infty}F(t) - F(\tau) = 1 - F(\tau), \nonumber
\end{align}
\noindent where $\tau = F^{-1}(1 - \alpha / d) + \xi_{2}$. If we consider first order Taylor approximation of $F$ on $\tau$, we obtain
\begin{align}
    \mathrm{Pr}[\mathcal{E}_{2}] = \alpha / d  - \xi_{2}f(F^{-1}(1 - \alpha / d))- O(\xi_{2}^{2}) \nonumber
\end{align}
\noindent Therefore, we have that $\mathrm{Pr}[\mathcal{E}_{1}] - \mathrm{Pr}[\mathcal{E}_{2}] < \xi_{2}$. By combining this with Eq.~(\ref{eq:41}), we finally obtain the following inequality:
\begin{align}
    \left|\mathrm{Pr}[\mathcal{E}_{1}] - \mathrm{Pr}[\mathcal{E}_{2}]\right|
    <\max \left\{\xi_{2}, \frac{1}{8\xi_{2}^{2}d(f(F^{-1}(1-\alpha /d)))^{2}} \right\}.\label{eq:47}
\end{align}
This completes the proof.
\end{proof}

If we set $\xi_{2}$ such that $\xi_{2} =d^{-1/2 + \epsilon}$ for some $\epsilon \in (0, 1/2)$, then as $d \rightarrow \infty$, the R.H.S of the Eq.~(\ref{eq:47}) tends to 0. That is, the probability calculated from the threshold ($\mathrm{Pr}[\mathcal{E}_{2}]$) is a good approximation for the probability of whether the given entry surpasses the $(d-\alpha)$th order statistics or not ($\mathrm{Pr}[\mathcal{E}_{1}]$). We can apply the same argument on estimating Eq.~(\ref{eq:35new}), as the following lemma. We will denote the upper bound on the probability obtained in \textit{Lemma 5} as $\mathcal{U}_{\alpha, \xi_{2}}$.
\begin{lemma}
    Let $\mathbf{X}, \mathbf{Y}$ be two independent random variables such that $\mathbf{X} \sim N(0, I_{d})$ and $\mathbf{Y} \sim N(0, (\tan^{2}\theta) I_{d})$. Then for the random variable $\mathbf{W} = \frac{X+Y}{\sqrt{1 + \tan^{2}\theta}}$, $\alpha = Cd \in [d]$ for constant $C \in (0,1)$, fixed $i \in [d]$ and $\xi_{2} \in (0, C)$, define
    \begin{align}
        P^{+}(\theta, \xi_{2}) = \int_{c+\xi_{2}}^{\infty} \int_{\frac{c+\xi_{2}}{\cos\theta} - u}^{\infty} f_{UV}(u,v)dvdu, \quad 
        \hat{P}^{+}(\theta) = \mathrm{Pr}\left[i \in I_{\mathbf{X}}^{+} \cap I_{\mathbf{W}}^{+} \right], \nonumber
    \end{align}
    \noindent where $f_{UV}$ is the joint pdf of two independent random variables $U \sim N(0, 1)$ and $V \sim N(0, \tan^{2}\theta)$, and $c = \sqrt{2}\mathrm{erf}^{-1}(1-\frac{\alpha}{d})$. Then we have that
    \begin{align}
        |P^{+}(\theta, \xi) - \hat{P}^{+}(\theta)| < \left(1+\frac{2C - \xi_{2} + 1}{C^{2} - C\xi_{2}}  \right) \mathcal{U}_{\alpha, \xi_{2}}, \nonumber
    \end{align}
    \noindent for pdf of the random variable $|X|$ for $X \sim N(0, 1)$ calculated as
    \begin{align}
        f_{|X|}(x) = \begin{cases}
        \frac{2}{\sqrt{2\pi}}e^{-\frac{x^{2}}{2}}, \quad \text{if $x\ge0$} \\
        0, \qquad \text{Otherwise}
        \end{cases} \nonumber
    \end{align}
\end{lemma}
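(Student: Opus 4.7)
The overall plan is to reduce the probability $\hat{P}^+(\theta) = \Pr[i \in I_{\mathbf{X}}^{+} \cap I_{\mathbf{W}}^{+}]$ to a deterministic two-dimensional integral, by replacing the two random thresholds (the $(d-\alpha)$-th order statistics of $|X_j|$ and $|W_j|$) with the fixed threshold $c + \xi_{2}$. First I would unfold the event: $i \in I_{\mathbf{X}}^{+}$ holds iff $|X_i| > \tilde{X}_{(d-\alpha)}$ and $X_i > 0$, which (since $\tilde{X}_{(d-\alpha)} \ge 0$) is equivalent to $X_i > \tilde{X}_{(d-\alpha)}$, and similarly for $\mathbf{W}$. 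Writing $A_1 = \{X_i > \tilde{X}_{(d-\alpha)}\}$, $A_2 = \{X_i > c+\xi_{2}\}$, $B_1 = \{W_i > \tilde{W}_{(d-\alpha)}\}$, $B_2 = \{W_i > c+\xi_{2}\}$, the target is $|\Pr[A_1 \cap B_1] - \Pr[A_2 \cap B_2]|$. Note that $W_i = (X_i + Y_i)/\sqrt{1+\tan^2\theta}$ has variance $1$, so the same threshold $c$ applies to both marginals.

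Next I would identify the integral. Conditionally on the thresholds being replaced by $c+\xi_{2}$, the joint event becomes $\{X_i > c+\xi_{2}\} \cap \{W_i > c+\xi_{2}\}$. Setting $U = X_i \sim N(0,1)$ and $V = Y_i \sim N(0, \tan^2\theta)$ (independent by construction), the condition $W_i > c+\xi_{2}$ rearranges to $U + V > (c+\xi_{2})\sqrt{1+\tan^2\theta} = (c+\xi_{2})/\cos\theta$ on $(0,\pi/2)$, i.e.\ $V > (c+\xi_{2})/\cos\theta - U$. Integrating the joint density $f_{UV}$ over $\{U > c+\xi_{2}\} \cap \{V > (c+\xi_{2})/\cos\theta - U\}$ yields exactly $P^{+}(\theta,\xi_{2})$. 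The case $\theta \in (\pi/2, \pi)$ is handled symmetrically, accounting for the sign of $\cos\theta$.

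To control the replacement error, I would use a telescoping decomposition $\Pr[A_1 \cap B_1] - \Pr[A_2 \cap B_2] = (\Pr[A_1 \cap B_1] - \Pr[A_2 \cap B_1]) + (\Pr[A_2 \cap B_1] - \Pr[A_2 \cap B_2])$, so that each difference is bounded by the symmetric-difference probability $\Pr[A_1 \triangle A_2]$ or $\Pr[B_1 \triangle B_2]$. For $\Pr[A_1 \triangle A_2]$, I would split it as $\Pr[A_1 \setminus A_2] + \Pr[A_2 \setminus A_1]$, bound $\Pr[A_2 \setminus A_1] \le \Pr[\tilde{X}_{(d-\alpha)} > c + \xi_{2}]$ via Chebyshev on the order statistic (Lemma 4), and bound $\Pr[A_1 \setminus A_2]$ by combining $\Pr[\tilde{X}_{(d-\alpha)} < c - \xi_{2}]$ with the probability that $X_i$ lies in a window of length $2\xi_{2}$ around $c$. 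Both summands are of order $\mathcal{U}_{\alpha,\xi_{2}}$ by Lemma 5. The same analysis applies verbatim to $\mathbf{W}$ since $W_i \sim N(0,1)$ as well.

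The main obstacle will be pinning down the precise constant $1 + (2C - \xi_{2} + 1)/(C^{2} - C\xi_{2})$. Achieving this prefactor requires tracking the contribution of each tail probability carefully: the Chebyshev bound from Lemma 4 scales like $C(1-C)/(\xi_{2}^{2} d f(c)^{2})$, while the density factor $f_{|X|}(c)$ evaluated at $c = F^{-1}(1-\alpha/d)$ enters through the ``window'' probability and interacts with $C = \alpha/d$ in the denominator. The dependence between $X_i$ (respectively $W_i$) and its own order statistic is asymptotically negligible because $X_i$ affects $\tilde{X}_{(d-\alpha)}$ only as $1$ of $d$ summands; in the final write-up this is absorbed into the $o(1)$ term that Proposition 1 already tolerates. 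Once these pieces are assembled, summing the two telescoping contributions yields exactly the claimed bound on $|P^{+}(\theta,\xi) - \hat{P}^{+}(\theta)|$.
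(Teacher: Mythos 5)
Your overall strategy --- replace the two random order-statistic thresholds by the fixed threshold $c+\xi_{2}$, identify the resulting joint probability with the double integral $P^{+}(\theta,\xi_{2})$ via $U=X_{i}$, $V=Y_{i}$ and $W_{i}>c+\xi_{2}\iff V>(c+\xi_{2})/\cos\theta-U$, and control the replacement error through concentration of the order statistic --- is the same as the paper's, and that part of your proposal is correct. Where you genuinely diverge is the error decomposition. You telescope $\mathrm{Pr}[A_{1}\cap B_{1}]-\mathrm{Pr}[A_{2}\cap B_{2}]$ and bound each increment by a symmetric-difference probability $\mathrm{Pr}[A_{1}\triangle A_{2}]$, which forces you to estimate $\mathrm{Pr}[A_{1}\setminus A_{2}]$ and $\mathrm{Pr}[A_{2}\setminus A_{1}]$ separately and hence to supply an extra ``window'' bound $\mathrm{Pr}\left[X_{i}\in(c-\xi_{2},\,c+\xi_{2}]\right]\lesssim\xi_{2}$ that Lemma 5 does not give you (Lemma 5 only controls the signed difference $\mathrm{Pr}[A_{1}]-\mathrm{Pr}[A_{2}]$). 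The paper instead writes $\mathrm{Pr}[\hat{\mathcal{E}}_{1}\wedge\hat{\mathcal{E}}_{2}]=\mathrm{Pr}[\hat{\mathcal{E}}_{2}\mid\hat{\mathcal{E}}_{1}]\,\mathrm{Pr}[\hat{\mathcal{E}}_{1}]$, runs a triangle-inequality chain on the conditional probabilities, and reduces every resulting term to the single marginal discrepancy $|\mathrm{Pr}[\mathcal{E}_{1}]-\mathrm{Pr}[\hat{\mathcal{E}}_{1}]|\le\mathcal{U}_{\alpha,\xi_{2}}$ from Lemma 5 (using that $X_{i}$ and $W_{i}$ have the same marginal); the prefactor is then exactly $1+\frac{1}{C}+\frac{1}{C-\xi_{2}}+\frac{1}{C(C-\xi_{2})}=1+\frac{2C-\xi_{2}+1}{C^{2}-C\xi_{2}}$, coming from the elementary bounds $\mathrm{Pr}[\hat{\mathcal{E}}_{1}]=C$ and $\mathrm{Pr}[\mathcal{E}_{1}]>C-\xi_{2}$. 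This is the one substantive shortfall in your plan: your route proves a bound of the same order $O(\mathcal{U}_{\alpha,\xi_{2}})$, but the constant it naturally produces (two symmetric differences, each costing roughly $2\xi_{2}f_{\max}$ plus two Chebyshev tails) is of a different form and, for the paper's choice $C=2/3$, works out larger than the claimed prefactor, so as written you would establish a weaker inequality than the one stated. You correctly identify this as the main obstacle; to land on the exact constant you would need to switch to the conditional-probability bookkeeping rather than the symmetric-difference one. Your concern about the dependence between $X_{i}$ and its own order statistic is, incidentally, not an issue for either route, since the key step is the deterministic inclusion $\{X_{i}>c+\xi_{2}\}\cap\{X_{i}\le\tilde{X}_{(d-\alpha)}\}\subseteq\{\tilde{X}_{(d-\alpha)}>c+\xi_{2}\}$, which requires no independence.
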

\begin{proof} 
Let us denote $\mathbf{X} = (X_{1}, \dots, X_{d})$ and $\mathbf{W} = (W_{1}, \dots, W_{d})$, and let $\tilde{X}_{(\alpha)}$ and $\tilde{W}_{(\alpha)}$ be the order statistics of $|X_{1}|, \dots, |X_{d}|$ and $|W_{1}|, \dots, |W_{d}|$, respectively. Also, we define events $\hat{\mathcal{E}_{1}}, \hat{\mathcal{E}_{2}}, \mathcal{E}_{1}$ and $\mathcal{E}_{2}$ as 
\begin{gather}
(X_{i} > \tilde{X}_{(d-\alpha)}), \;(W_{i} > \tilde{W}_{(d-\alpha)}), %
(X_{i} > c + \xi_{2}), \; (W_{i} > c + \xi_{2}), \nonumber
\end{gather}
respectively, where $c = \sqrt{2}\mathrm{erf}^{-1}(1 - \alpha/d)$. Then $\hat{P}^{+}(\theta) = \mathrm{Pr}[\hat{\mathcal{E}_{1}} \land \hat{\mathcal{E}_{2}}]$, and for joint pdf $f_{UV}(u,v)$ of two independent random variables $U \sim N(0, 1)$ and $V \sim N(0, \tan^{2}\theta)$,
\begin{align}
    \mathrm{Pr}[\mathcal{E}_{1} \land \mathcal{E}_{2}] \nonumber
    = \mathrm{Pr}[(X_{i} > c + \xi_{2}) \land (W_{i} > c + \xi_{2})] 
    = \int_{c+\xi_{2}}^{\infty} \int_{\frac{c+\xi_{2}}{\cos\theta}-u}^{\infty}f_{UV}(u,v)dvdu = P^{+}(\theta, \xi_{2}). \nonumber
\end{align}
\noindent In fact, $c$ corresponds to the $F^{-1}(1-\alpha/d)$ term on \textit{Lemma 4}. Note that the cdf $F_{|X|}$ of $|X|$ is
\begin{align}
    F_{|X|}(x) = \begin{cases}
        \frac{2}{\sqrt{2\pi}}\int_{0}^{x}e^{-\frac{t^{2}}{2}} dt, \quad \text{if $x \ge 0$} \\ \nonumber
        0, \qquad \text{Otherwise}
    \end{cases},
\end{align}
\noindent and can be expressed by the error function defined as $\mathrm{erf}(x) = \frac{2}{\sqrt{\pi}}\int_{0}^{x}e^{-t^{2}}dt$.

Now, to estimate the probability $|P^{+}(\theta, \xi_{2}) - \hat{P}^{+}(\theta)|$, we first derive that
\begin{align}
    \left|P^{+}(\theta, \xi_{2}) - \hat{P}^{+}(\theta)\right| = \nonumber\left|\mathrm{Pr}[\hat{\mathcal{E}_{1}} \land \hat{\mathcal{E}_{2}}] - \mathrm{Pr}[\mathcal{E}_{1} \land \mathcal{E}_{2}] \right|
    \le \left| \mathrm{Pr}[\hat{\mathcal{E}_{2}}|\hat{\mathcal{E}_{1}}] - \mathrm{Pr}[\mathcal{E}_{2} | \mathcal{E}_{1}] \right| + \left|\mathrm{Pr}[\mathcal{E}_{1}]-\mathrm{Pr}[\hat{\mathcal{E}_{1}}]\right|. \nonumber
\end{align}
\noindent For the former term, we can obtain
\begin{align}
    &\left|\mathrm{Pr}[\hat{\mathcal{E}_{2}} | \hat{\mathcal{E}_{1}}] - \mathrm{Pr}[\mathcal{E}_{2} | \mathcal{E}_{1}]\right| 
    \le \left|\mathrm{Pr}[\hat{\mathcal{E}_{2}} |  \nonumber\hat{\mathcal{E}_{1}}] - \mathrm{Pr}[\mathcal{E}_{2} | \hat{\mathcal{E}_{1}}]\right| + \left|\mathrm{Pr}[\mathcal{E}_{2} | \hat{\mathcal{E}_{1}}] - \mathrm{Pr}[\mathcal{E}_{2} | \mathcal{E}_{1}]\right| \\ \nonumber
    &\le \frac{\left|\mathrm{Pr}[\hat{\mathcal{E}_{2}} \land \hat{\mathcal{E}_{1}}] - \mathrm{Pr}[\mathcal{E}_{2} \land \hat{\mathcal{E}_{1}}]\right|}{\mathrm{Pr}[\hat{\mathcal{E}}_{1}]}
    + \left|\frac{\mathrm{Pr}[\mathcal{E}_{2} \land \hat{\mathcal{E}_{1}}]}{\mathrm{Pr}[\hat{\mathcal{E}_{1}]}} - \frac{\mathrm{Pr}[\mathcal{E}_{2} \land \hat{\mathcal{E}_{1}]}}{\mathrm{Pr}[\mathcal{E}_{1}]} \right|
    + \left| \frac{\mathrm{Pr}[\mathcal{E}_{2} \land \hat{\mathcal{E}_{1}]}}{\mathrm{Pr}[\mathcal{E}_{1}]} - \frac{\mathrm{Pr}[\mathcal{E}_{2} \land \mathcal{E}_{1}]}{\mathrm{Pr}[\mathcal{E}_{1}]} \right| \nonumber \\
    &\le \frac{\left|\mathrm{Pr}[\hat{\mathcal{E}_{2}}] - \mathrm{Pr}[\mathcal{E}_{2}]  \right|}{\mathrm{Pr}[\hat{\mathcal{E}}_{1}]} + \left|\mathrm{Pr}[\mathcal{E}_{2} \land \hat{\mathcal{E}_{1}}]\right| \left|\frac{\mathrm{Pr}[\mathcal{E}_{1}] - \mathrm{Pr}[\hat{\mathcal{E}_{1}}]}{\mathrm{Pr}[\mathcal{E}_{1}] \cdot \mathrm{Pr}[\hat{\mathcal{E}_{1}}]}\right|
    + \frac{\left|\mathrm{Pr}[\mathcal{E}_{1}] - \mathrm{Pr} [\hat{\mathcal{E}_{1}}]\right|}{\mathrm{Pr}[\mathcal{E}_{1}]} \nonumber \\ 
    &= \left(\frac{1}{\mathrm{Pr}[\mathcal{E}_{1}]} + \frac{1}{\mathrm{Pr}[\hat{\mathcal{E}}_{1}]} + \frac{\mathrm{Pr}[\mathcal{E}_{2} \land \hat{\mathcal{E}_{1}}]}{\mathrm{Pr}[\mathcal{E}_{1}] \cdot \mathrm{Pr}[\hat{\mathcal{E}_{1}}]} \right) \left|\mathrm{Pr}[\mathcal{E}_{1}] - \mathrm{Pr}[\hat{\mathcal{E}_{1}}] \right|. \nonumber
\end{align}
\noindent Since $\mathrm{Pr}[\hat{\mathcal{E}_{1}}] = \alpha / d$ and $\mathrm{Pr}[\mathcal{E}_{1}] > \alpha /d - \xi_{2}$, we have that for the constant $C = \alpha / d$,
\begin{align}
    \bigg|P^{+}(\theta, \xi_{2}) - \hat{P}^{+}(\theta) \bigg|
    &\le \left(1 + \frac{1}{\mathrm{Pr}[\mathcal{E}_{1}]} + \frac{1}{\mathrm{Pr}[\hat{\mathcal{E}}_{1}]} + \frac{\mathrm{Pr}[\mathcal{E}_{2} \land \hat{\mathcal{E}_{1}}]}{\mathrm{Pr}[\mathcal{E}_{1}] \cdot \mathrm{Pr}[\hat{\mathcal{E}_{1}}]} \right) \cdot \left|\mathrm{Pr}[\mathcal{E}_{1}] - \mathrm{Pr}[\hat{\mathcal{E}_{1}}] \right| \nonumber \\
    &\le \left(1 + \frac{1}{C} + \frac{1}{C - \xi_{2}} + \frac{1}{C(C-\xi_{2})} \right) \mathcal{U}_{\alpha, \xi_{2}} \nonumber \\ 
    &= \left(1+\frac{2C - \xi_{2} + 1}{C^{2} - C\xi_{2}}  \right)\mathcal{U}_{\alpha, \xi_{2}}. \label{eq:71}
\end{align}
\noindent Therefore, we obtained the desired bound.
\end{proof}

Remark that in Eq.~(\ref{eq:71}), if we assume that $\xi_{2} \in (d^{-1}, d^{-1/2})$, then for sufficiently large $d$, $\xi_{2} < C/2$. Since the function $f(x) = \frac{2C + 1 - x}{C^{2}-Cx}$ monotonically increases for $x < C$, under the condition of \textit{Lemma 1}, we obtain
\begin{align}
    \left|P^{+}(\theta, \xi_{2}) - \hat{P}^{+}(\theta) \right| \nonumber
    \le \frac{(C+1)(C+2)}{C^{2}} \mathcal{U}_{\alpha, \xi_{2}} \nonumber
\end{align}
Now we are ready to complete the proof of \textit{Lemma 1.}


\begin{proof}[Proof of Lemma 1] It suffices to derive Eq.~(\ref{eq:7}). For two independent random variables $\mathbf{X}, \mathbf{Y} \sim N(0, I_{d})$ and a fixed $\theta \in [0, \pi/2)$, let $\mathbf{W} = \frac{\mathbf{X} + \tan\theta\mathbf{Y}}{\sqrt{1 + \tan^{2}\theta}}$. From this, we define the random variables $I_{\mathbf{XW}}^\mathrm{pp}$, $I_{\mathbf{XW}}^\mathrm{mm}$, $I_{\mathbf{XW}}^\mathrm{pm}$, $I_{\mathbf{XW}}^\mathrm{mp}$ in same way as Eq.~(\ref{eq:35}). We will estimate
\begin{align}
I_{\mathbf{XW}}^\mathrm{pp}+I_{\mathbf{XW}}^\mathrm{mm} -I_{\mathbf{XW}}^\mathrm{pm}-I_{\mathbf{XW}}^\mathrm{mp} \nonumber
\end{align}
\noindent by using previous lemmas. Recall that 
\begin{align}
    \alpha \cdot \langle T_{\alpha}(\mathbf{X}), T_{\alpha}(\mathbf{W}) \rangle = 
    (|I_{\mathbf{X}}^{+} \cap I_{\mathbf{W}}^{+}| + |I_{\mathbf{X}}^{-} \cap I_{\mathbf{W}}^{-}|)
    - (|I_{\mathbf{X}}^{+} \cap I_{\mathbf{W}}^{-}| + |I_{\mathbf{X}}^{-} \cap I_{\mathbf{W}}^{+}|), \nonumber
\end{align}
\noindent and for each indices $i \in [d]$, 
\begin{align}
    \mathrm{Pr}[i \in I_{\mathbf{X}}^{+} \cap I_{\mathbf{W}}^{+}] = 
    \mathrm{Pr}[(|X_{i}| > \tilde{X}_{(d-\alpha)}) \land (|W_{i}| > \tilde{W}_{(d-\alpha)})
    \land (X_{i}>0) \land (W_{i}>0)]. \nonumber
\end{align}
\noindent By using the result of \textit{Lemma 6}, we can approximate this probability from
\begin{align}
    \mathrm{Pr}[(X_{i} > c + \xi_{2}) \land (W_{i} > c + \xi_{2}) \land (X_{i} > 0) \land (W_{i} > 0)] \nonumber
\end{align}
\noindent within $\mathcal{U}_{\alpha, \xi_{2}}$ error, where c = $\mathrm{erf}(1 - \alpha / d)$. One may observe that the events in the above probability term are mutually independent with respect to each index. That is, when we define a random variable $\mathbf{Z} \sim N(dP^{+}(\theta, \xi_{2}),dP^{+}(\theta, \xi_{2})(1-P^{+}(\theta, \xi_{2})))$ and $\hat{\mathbf{Z}} \sim N(d\hat{P}^{+}(\theta), d\hat{P}^{+}(\theta)(1-\hat{P}^{+}(\theta)))$, $I_{\mathbf{XW}}^{pp}$ first can be approximated by $\hat{\mathbf{Z}}$, which again well approximated by $\mathcal{Z}$. To be precise, from the result of \textit{Lemma 6}, with Chebyshev's inequality for $\xi_{3}>C^{\ast}\mathcal{U}_{\alpha, \xi_{2}}d$ and $C^{\ast} = \frac{(C+1)(C+2))}{C^{2}}$,
\begin{align}
    \mathrm{Pr}\left[\left|I_{\mathbf{XW}}^{\mathrm{pp}} - P^{+}(\theta, \xi_{2})d\right| < \xi_{3} \right] \nonumber
    \ge 1 - \frac{4}{d \left(\xi_{3} -  C^{\ast}\mathcal{U}_{\alpha, \xi_{2}}d\right)^{2}}, \nonumber
\end{align}
\noindent because $\hat{P}^{+}(\theta, \xi_{2})(1-\hat{P}^{+}(\theta, \xi_{2})) < \frac{1}{4}$. By the symmetry of the normal distribution, we can do the same approximation to $I_{\mathbf{XW}}^{\mathrm{mm}}$ which results in the same bound. Also, for
\begin{align}
    P^{-}(\theta, \xi) = \int_{c+\xi}^{\infty}\int_{-\infty}^{-\frac{c+\xi}{\cos{\theta}}-u} f_{UV}(u,v) dv du, \nonumber
\end{align}
\noindent we can apply a similar argument for $I_{\mathbf{XW}}^\mathrm{pm}$ and $I_{\mathbf{XW}}^\mathrm{mp}$, namely,
\begin{align}
    \mathrm{Pr}\left[\left|I_{\mathbf{XW}}^{\mathrm{pm}} - P^{-}(\theta, \xi_{2})d \right| < \xi_{3} \right] 
    > 1 - \frac{4}{d \left(\xi_{3} - C^{\ast} \mathcal{U}_{\alpha, \xi_{2}}d \right)^{2}}. \nonumber
\end{align}
\noindent By applying triangular inequality, for
\begin{align}
    P(\theta, \xi) &= P^{+}(\theta, \xi) - P^{-}(\theta, \xi)
    = \int_{c+\xi}^{\infty}\int_{\frac{c+\xi}{\cos{\theta}}-u}^{\infty} f_{UV}(u,v) dv du 
    - \int_{c+\xi}^{\infty}\int_{-\infty}^{-\frac{c+\xi}{\cos{\theta}}-u} f_{UV}(u,v) dv du, \nonumber
\end{align}
\noindent we obtain
\begin{align}
    &\left|I_{\mathbf{XW}}^{\mathrm{pp}} + I_{\mathbf{XW}}^{\mathrm{mm}} - I_{\mathbf{XW}}^{\mathrm{pm}} - I_{\mathbf{XW}}^{\mathrm{mp}} - 2P(\theta, \xi_{2})d\right| \nonumber \\
    &\le |I_{\mathbf{XW}}^{\mathrm{pp}} - P^{+}(\theta, \xi_{2})| + |I_{\mathbf{XW}}^{\mathrm{mm}} - P^{+}(\theta, \xi_{2})|
    + |I_{\mathbf{XW}}^{\mathrm{pm}} - P^{-}(\theta, \xi_{2})| + |I_{\mathbf{XW}}^{\mathrm{mp}} - P^{-}(\theta, \xi_{2}) |.\label{eq:85} 
\end{align}
\noindent In this inequality, we can calculate the probability that each term of the R.H.S. in Eq.~(\ref{eq:85}) is bounded by $\xi_{3}/4$. Therefore, we have that
\begin{align}
\hspace*{-.5cm}
    \mathrm{Pr}\left[ \left| \langle T_{\alpha}(\mathbf{X}), T_{\alpha}(\mathbf{W}) \rangle - \frac{2P(\theta, \xi_{2})d}{\alpha}\right| < \xi_{3} \right] 
    > \left(1 - \frac{16}{d(\alpha \xi_{3} - 4d \mathcal{U}_{\alpha, \xi_{2}})^{2}}\right)^{4}
    \ge 1 - \frac{64}{d(\alpha \xi_{3} - 4 d C^{\ast}\mathcal{U}_{\alpha, \xi_{2}})^{2}}, \nonumber
\end{align}
\noindent where
\begin{align}
    P(\theta, \xi) = P^{+}(\theta, \xi) - P^{-}(\theta, \xi)= \int_{c_\xi}^{\infty}\int_{\frac{c+\xi}{\cos{\theta}}-u}^{\infty} f_{UV}(u,v) dv du
    - \int_{c+\xi}^{\infty}\int_{-\infty}^{-\frac{c_\xi}{\cos{\theta}}-u} f_{UV}(u,v) dv du. \nonumber
\end{align}
\noindent Therefore, by combining the result of \textit{Lemma 2} and taking $\alpha\xi_{3} = 4dC^{\ast}C^{-1}\mathcal{U}_{\alpha, \xi_{2}} + \xi_{2} > C^{\ast}\mathcal{U}_{\alpha, \xi_{2}}d$, we finally obtain
\begin{align}
    &\mathrm{Pr}\bigg[\bigg| E - |\cos\theta - \frac{2P(\theta, \xi_{2})d}{\alpha}| \bigg| 
    < \xi_{1} + \xi_{2} + 4C^{\ast}C^{-1}\mathcal{U}_{\alpha, \xi_{2}} + o(1) \bigg] \nonumber \\
    &\ge \mathrm{Pr}\bigg[\bigg| E - |\cos\theta - \frac{2P(\theta, \xi_{2})d}{\alpha}| \bigg| 
    < \xi_{1} + \alpha^{-1}\xi_{2} + 4C^{\ast}C^{-1}\mathcal{U}_{\alpha, \xi_{2}} + o(1) \bigg] 
    > 1-\delta \nonumber,
\end{align}
\noindent where $\delta = d^{-1}(64\xi_{2}^{-2} + 3\xi_{1}^{-2})$. Under our condition that $\xi_{2} = d^{-1/2 + \epsilon}$ for $\epsilon \in (0, 1/2)$,
\begin{align}
\mathcal{U}_{\alpha, \xi_{2}} = \max \left\{\xi_{2},\frac{\xi_{2}^{-2}d^{-1}}{8(f(F^{-1}(1-\alpha /d)))^{2}} \right\}, \nonumber
\end{align}
\noindent tends to 0, as $d$ becomes $\infty$. This completes the proof.
\end{proof}

\begin{wrapfigure}{R}{.45\linewidth}
    \vspace{-2mm}
    \includegraphics[width=\linewidth]{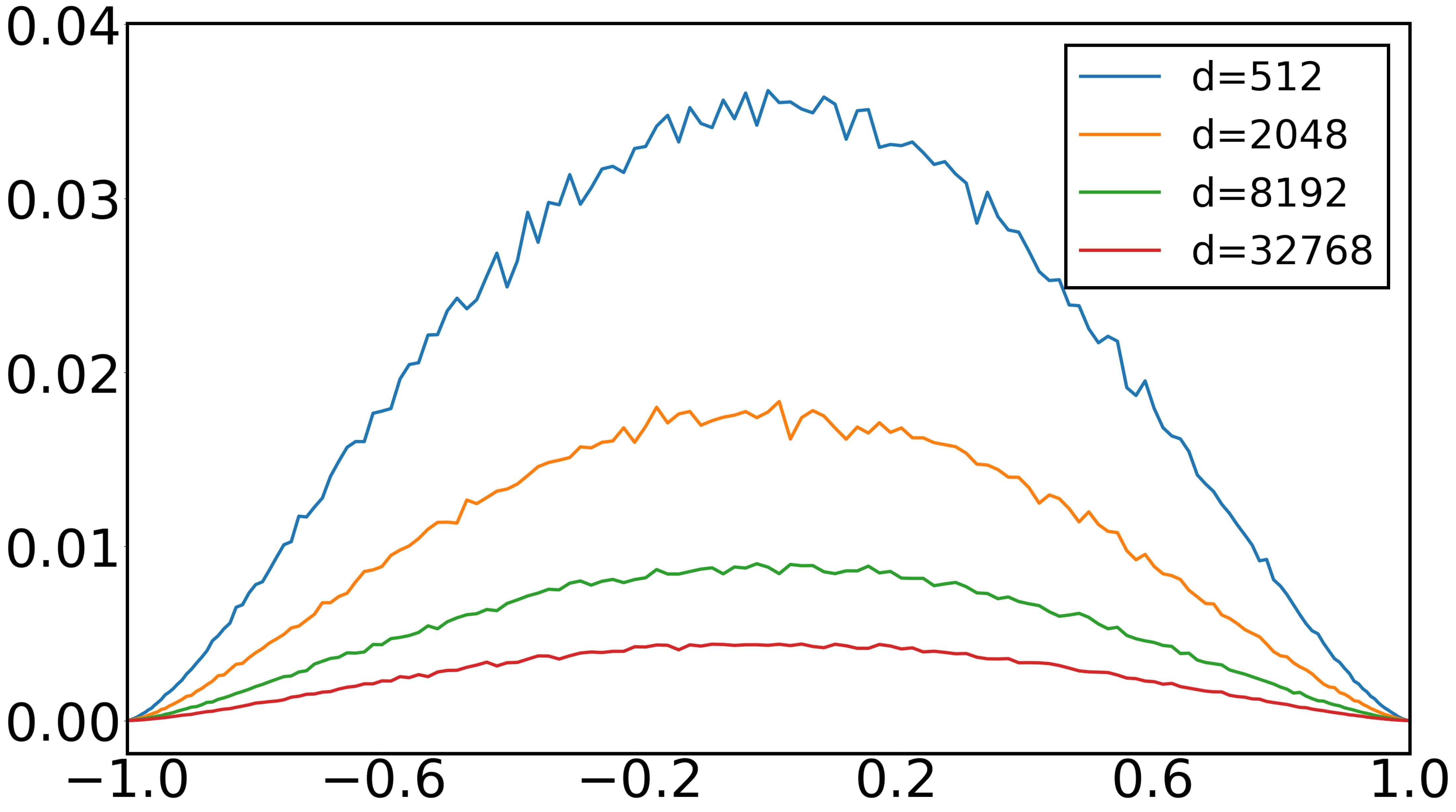}
    \vspace{-7mm}
    \caption{Empirical justification of \textit{Assumption 1}. 
    }\label{fig:assumption}
\end{wrapfigure}

\noindent \textbf{Verification of the Assumption 1.}
To complete the proof, it suffices to check whether \textit{Assumption 1} holds or not. To this end, we conducted the following experiment: Let us denote $\mathbf{X}$ and $\mathbf{Y}$ as two independent random variables following the standard normal distribution $N(0, I_{d})$. For fixed $\theta$, define $\mathbf{W} = \frac{\mathbf{X} + \tan\theta{Y}}{\sqrt{1 + \tan^{2}\theta}}$. We will check that $C = \mathbb{E}\left[\frac{\langle \mathbf{X}, \mathbf{W} \rangle}{||\mathbf{X}||_{2} \cdot ||\mathbf{W}||_{2}}\right]$ is estimated by $\cos\theta$ with error $o(1)$. 
For this, we sample 1,000 pairs of vectors $(\mathbf{x}, \mathbf{y})$ corresponding to random variables $(\mathbf{X}, \mathbf{Y})$ and calculate the sample mean $\overline{C}$ of $\frac{\langle \mathbf{X}, \mathbf{W} \rangle}{||\mathbf{X}||_{2} \cdot ||\mathbf{W}||_{2}}$. In this setting, we compute $|\overline{C} - \cos\theta|$ with various $d$ and $\theta$, and visualize it as Figure~\ref{fig:assumption}, showing that as $d$ larger, the error term $|C - \cos\theta|$ tends to 0. This is the evidence showing the \textit{Assumption 1} holds.

\section{Analysis on the Parameter Selection}\label{sec:supp_B}

In fact, Proposition~\ref{prop_1_full} does not tell us the explicit value of $\epsilon_{\alpha, \theta}$ and $\delta_{\alpha}$, or at least an upper bound of them. Since our main interest is the extent to preserve the distance relationship, we provide our experimental results with respect to $\epsilon_{\alpha, \theta}$, along with the desirable choice of $\alpha$ that minimizes $\epsilon_{\alpha, \theta}$ for all $\theta \in [0, \pi]$.

\newpage

\subsection{Desirable Choice of $\alpha$}

\begin{wrapfigure}{R}{.5\linewidth}
    \vspace{-5mm}
    \includegraphics[width=\linewidth]{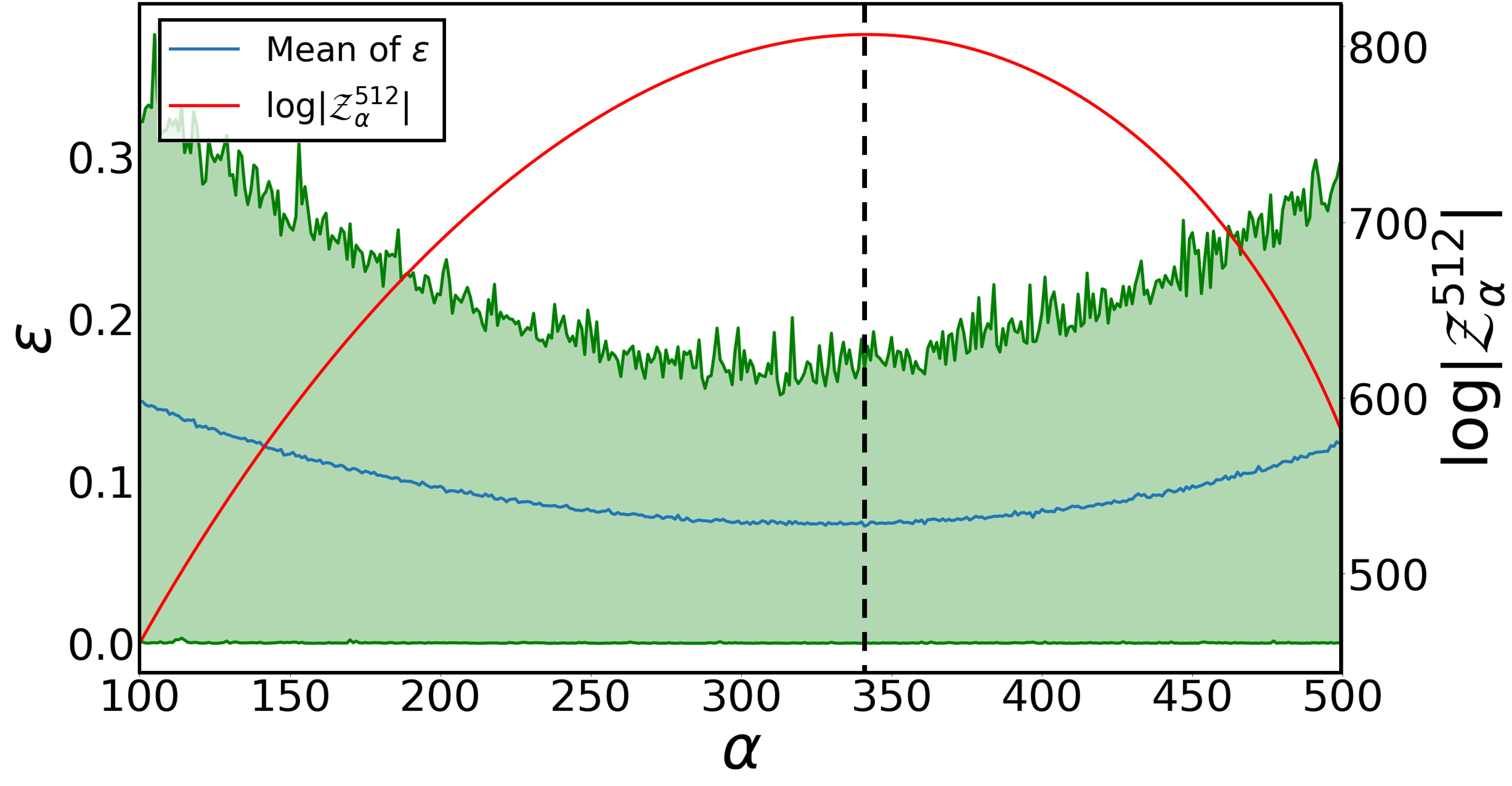}
    \vspace{-9mm}
    \caption{$\epsilon$ and $|\mathcal{Z}^{d}_{\alpha}|$ with various $\alpha$ and fixed $d = 512$. This graph tells us that $\epsilon$ is minimized when $|\mathcal{Z}^{d}_{\alpha}|$ is maximized. The dashed line indicates when $\alpha = 341$.}
    \label{fig:transf}
    \vspace{-2mm}
\end{wrapfigure}

Our first goal is to find the desirable parameter $\alpha$ that minimizes the difference between $\langle \mathbf{x}, \mathbf{y} \rangle$ and $\langle T_{\alpha}(\mathbf{x}), T_{\alpha}(\mathbf{y}) \rangle$.
To this end, one strawman's proposal is to seek a direct minimization of the value $\epsilon_{\alpha, \theta}$ in Proposition~\ref{prop_1_full}. Instead of solving such an intricate mimization problem, we take an alternative approach: selecting $\alpha$ that minimizes the deformation of the space of face templates from the transformation $T_{\alpha}$. Intuitively, this can be done by maximizing $|\mathcal{Z}_{\alpha}^{d}|$, expecting that denser transformation candidates in $\mathbb{S}^{d-1}$ leads a smaller movement of the input face template in terms of cosine similarity. 
Since $|\mathcal{Z}_{\alpha}^{d}| = {\binom{d}{\alpha}}2^{\alpha}$, which is a concave function with respect to $\alpha$, one can easily derive that $|\mathcal{Z}_{\alpha}^{d}|$ gets the maximum value when $\alpha = \lfloor \frac{2}{3}d \rfloor$.

Surprisingly, we observed that such a heuristic approach really pays off. We measured the range and the mean value of $\epsilon = |\langle T_{\alpha}(\mathbf{x}), T_{\alpha}(\mathbf{y}) \rangle - \langle \mathbf{x}, \mathbf{y} \rangle|$ from randomly sampled 100 pairs of random vectors in $\mathbb{S}^{d-1}$ with various angles and $\alpha$. Since the dimension $d$ of face templates from recent feature extractors is set to 512, according to our intuition, $\alpha = 341$ should give the smallest $\epsilon$. The visualization of the result is given in Fig.~\ref{fig:transf}. We can figure out despite of the fluctuation in the range of $\epsilon$ (green line), the mean value of $\epsilon$ (blue line) decreases as the number of codewords (red line) increases. Furthermore, the mean value attains the minimum when $\alpha = 341$, \textit{i.e.}, when $|\mathcal{Z}_{\alpha}^{d}|$ is maximized. This justifies our choice of $\alpha = 341$ throughout the parameter selection of $\mathsf{IDFace}$.

\subsection{Effect of $d$ and $\theta$ on the Error Term.} 
One may notice that for a fixed $\theta$, the value of $\epsilon_{\alpha, \theta}$ solely depends on the ratio between $\alpha$ and $d$. On the other hand, as shown in Lemma~\ref{lemma1}, the error term $o(1)$ added to the $\epsilon_{\alpha, \theta}$ is about the dimension $d$. Since we already derived that $\alpha = \lfloor \frac{2}{3}d \rfloor$, or $\frac{\alpha}{d} = \frac{2}{3}$, gives the smallest $\epsilon_{\alpha, \theta}$ on average with respect to $\theta$, we now turn our attention to analyze the effect of $d$ and $\theta$ on the exact $\epsilon_{\alpha, \theta}$ and the error term $o(1)$ when $\alpha = \lfloor \frac{2}{3} d \rfloor$.

To this end, we conducted the following experiment: for each angle $\theta \in [0, \pi]$, we sample 100 pairs of $(\mathbf{x}, \mathbf{y})$ satisfying $\langle \mathbf{x}, \mathbf{y} \rangle = \cos{\theta}$ and calculated $\left| \langle T_{\alpha}(\mathbf{x}) , T_{\alpha}(\mathbf{y}) \rangle - \langle \mathbf{x}, \mathbf{y} \rangle \right|$ for a fixed $\alpha = \lfloor \frac{2}{3}d \rfloor$.
In Figure~\ref{fig:kirruk}, we provide our experiment result for $d = 512, 2048, 8192$ and $32768$ with theoretical $\epsilon_{\alpha, \theta}$ calculated from Proposition~\ref{prop_1_full}. This tells us that the estimated $\epsilon_{\alpha, \theta}$ fits well into the empirical result $\left| \langle T_{\alpha}(\mathbf{x}) , T_{\alpha}(\mathbf{y}) \rangle - \langle \mathbf{x}, \mathbf{y} \rangle \right|$, without depending on the dimension $d$. 
As $d$ becomes larger, $\left| \langle T_{\alpha}(\mathbf{x}) , T_{\alpha}(\mathbf{y}) \rangle - \langle \mathbf{x}, \mathbf{y} \rangle \right|$ converges to our theoretical $\epsilon_{\alpha, \theta}$, \textit{i.e.}, $o(1)$ term with respect to $d$ decays to 0, as we predicted in Proposition~\ref{prop_1_full}.
On the effect of $\theta$, we can figure out that $\epsilon_{\alpha, \theta}$ gets the maximum value when $\cos\theta$ is near $\pm0.7$. We note that in this case $\epsilon_{\alpha, \theta}$ is 0.111.

\begin{figure*}[t]
    \centering
    \includegraphics[width=.99\textwidth]{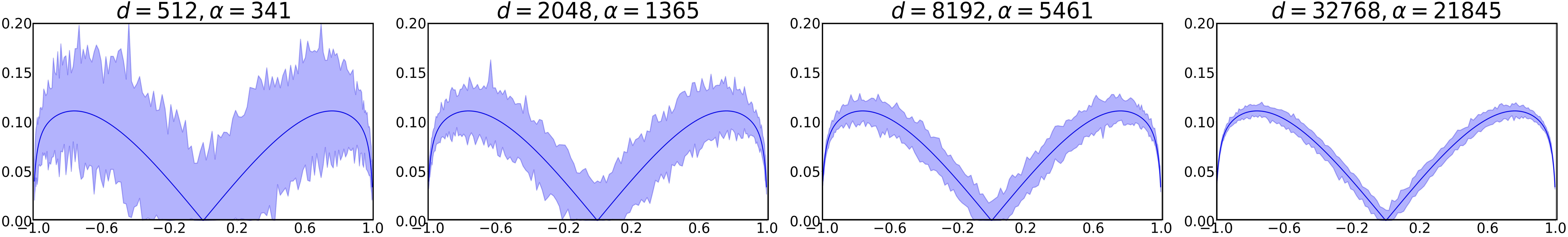}
    \vspace{-3mm}
    \caption{Experimental result on calculating $\left| \langle T_{\alpha}(\mathbf{x}) , T_{\alpha}(\mathbf{y}) \rangle - \langle \mathbf{x}, \mathbf{y} \rangle \right|$ and theoretical $\epsilon_{\alpha, \theta}$ with various $d$ and fixed $\alpha = \lfloor \frac{2}{3}d \rfloor$. The $x$-axis indicates the cosine value between $\mathbf{x}$ and $\mathbf{y}$, and $y$-axis indicates corresponding range of $\left| \langle T_{\alpha}(\mathbf{x}) , T_{\alpha}(\mathbf{y}) \rangle - \langle \mathbf{x}, \mathbf{y} \rangle \right|$. We highlighted the theoretical $\epsilon_{\alpha, \theta}$ value of each angle as a blue solid line.}
    \label{fig:kirruk}
    \vspace{-4mm}
\end{figure*}

\section{Additional Experiments and Discussions}\label{sec:supp_D}

We now provide additional experimental results omitted in the main text. This section includes (1) benchmark results of $\mathsf{IDFace}$ with other various face recognition models, (2) analysis on the effect of the almost-isometric transformation on intra-class compactness and inter-class discrepancy, (3) comparison with the PCA-based dimensionality reduction technique for further optimization, and (4) omitted discussions about the experimental results.

We note that for all experiments conducted in this section, we utilized the \texttt{insightface} library~\cite{insightface} as a baseline and simply modified the loss function for implementing other recognition models. For training each recognition model, we used four NVIDIA V100 GPUs, following the same training parameter settings, including settings for the optimizer, number of total epochs, and (total) batch size, provided by their implementation of ArcFace\footnote{For more information, we refer to the following code in the \texttt{insightface} library:  \url{https://github.com/deepinsight/insightface/blob/master/recognition/arcface_torch/configs/ms1mv3_r100.py}}.

\subsection{Evaluation Metrics}
We provide the precise definitions of evaluation metrics appeared in this paper. First of all, in the face verification scearios where LFW~\cite{huang2008labeled}, CFP-FP~\cite{sengupta2016frontal}, AgeDB~\cite{moschoglou2017agedb}, and IJB-C~\cite{maze2018iarpa} verification benchmark datasets are used for evaluation, we used true accept rate (TAR), false accept rate (FAR), and accuracy as evaluation metrics. For a precise description, we assume that the benchmark dataset $(\mathcal{D})$ consists of pairs of images from the same $(\mathcal{D}_{p})$ or different $(\mathcal{D}_{n})$ identities. 
In this setting, for a feature extract $F$ and a threshold parameter $\tau$, we define the TAR and FAR at $\tau$ as the ratio of accepted image pairs over $\mathcal{D}_{p}$ and $\mathcal{D}_{n}$, respectively, each of which is formally described as follows:
\begin{align}
    \mathsf{TAR}(\tau) = \frac{|\{(I_{x}, I_{y}) \in \mathcal{D}_{p}: s_{\cos}(F(I_{x}), F(I_{y})) > \tau \}|}{|\mathcal{D}_{p}|}, \quad 
    \mathsf{FAR}(\tau) = \frac{|\{(I_{x}, I_{y}) \in \mathcal{D}_{n}: s_{\cos}(F(I_{x}), F(I_{y})) > \tau \}|}{|\mathcal{D}_{n}|}
    \nonumber.
\end{align}
Here, $| \cdot |$ denotes the cardinality of the set and $s_{\cos}(\cdot, \cdot)$ denotes the cosine similarity between two vectors of same length. 

In addition, the accuracy at the threshold $\tau$ is defined as the ratio of correctly decided image pairs, \textit{i.e.}, over the whole dataset. That is,
\begin{align}
    \mathsf{ACC}(\tau) = \frac{|\{(I_{x}, I_{y}) \in \mathcal{D}_{p}: s_{\cos}(F(I_{x}), F(I_{y})) > \tau \}| + |\{(I_{x}, I_{y}) \in \mathcal{D}_{n}: s_{\cos}(F(I_{x}), F(I_{y})) < \tau \}|}{|\mathcal{D}_{p}|+|\mathcal{D}_{n}|} \nonumber.
\end{align}
We note that the accuracy is in fact one of the standard evaluation metrics for the LFW, CFP-FP, and AgeDB datasets. This is the reason why we provided the accuracy on the benchmark results.

On the other hand, for the identification scenario in IJB-C dataset~\cite{maze2018iarpa}, we used a different evaluation metric called true positive identification rate (TPIR) and false positive identification rate (FPIR). For evaluating the accuracy of face identification, we consider two sets of facial images called \textit{galleries}, say $\mathcal{G}_{1}$ and $\mathcal{G}_{2}$, whose identities are non-overlap each other. In addition, we also consider a larger \textit{mixed} set $\mathcal{G}_{\mathrm{mixed}}$ that contains all identities in $\mathcal{G}_{1}$ and $\mathcal{G}_{2}$. 
We divide $\mathcal{G}_{\mathrm{mixed}}$ to two subsets, $\mathcal{G}_{\mathrm{mixed}}^{1}$ and $\mathcal{G}_{\mathrm{mixed}}^{2}$, with respect to the identity.
If identities in $\mathcal{G}_{1}$ are enrolled to the database, the accuracy evaluation were done by measuring (1) the ratio of the images in $\mathcal{G}$ that are well-identified, \textit{i.e.}, the identity corresponindg to the highest similarity score is identical to the queried one, 
and (2) the ratio of the images in $\mathcal{G}_{\mathrm{mixed}}^{2}$ that are recognized as beloning to the enrolled identities.
Each ratio corresponds to the TPIR and FPIR, respectively.

With the identification threshold $\tau$ for determining whether the queried templates belongs to the enrolled identities or not, we formally describe the TPIR and FPIR as follows:
\begin{gather}
    \mathsf{TPIR}(\tau) = \frac{| \{(I, id) \in \mathcal{G}_{\mathrm{mixed}}^{1}: (I^{\ast}, id^{\ast}) := \arg\max_{(I', id') \in \mathcal{G}_{1}} \{s_{\cos}(F(I), F(I')) \}; id = id' \land s_{\cos}(F(I), F(I')) > \tau  \} |}{|\mathcal{G}_{1}|}, \nonumber \\
    \mathsf{FPIR}(\tau) = \frac{| \{I \in \mathcal{G}_{\mathrm{mixed}}^{2}: \max_{(I', id') \in \mathcal{G}_{1}} \{s_{\cos}(F(I), F(I')) \} > \tau \} |}{|\mathcal{G}_{\mathrm{mixed}}^{2}|} \nonumber,
\end{gather}
assuming that the identities in $\mathcal{G}_{1}$ are enrolled. We can evaluate the TPIR and FPIR for $\mathcal{G}_{2}$ in the same manner.
Throughout this paper, we reported the average value of TPIR for both cases when $\mathcal{G}_{1}$ and $\mathcal{G}_{2}$ are enrolled, respectively.

\subsection{Statistics of the Benchmark Datasets}

We remark that all the benchmark datasets we used, including LFW, CFP-FP, AgeDB, and IJB-C datasets, are publicly available and already frequently utilizd in the literature of face recognition. For reproducibility, we provide the detailed statistics of each dataset in Table~\ref{tab:stat_benchmark}.

\begin{table}[h]
\centering
\begin{minipage}{.715\linewidth}
\resizebox{\linewidth}{!}{
\begin{tabular}{c|c|c|c|c}
    \hline
    Dataset & LFW & CFP-FP & AgeDB & IJB-C (V) \\ \hline
    Imgs/IDs & 13,233 / 5,749 & 7,000 / 500 & 16,488 / 568 & 138,836 + 11,779 + 10,040 / 3,531 \\ \hline
    T/F Pairs & 3,000 / 3,000 &	3,500 / 3,500 &	3,000 / 3,000 & 19,557 / 15,638,932
 \\ \hline
\end{tabular}
}
\end{minipage}
\quad
\begin{minipage}{.26\linewidth}
\resizebox{\linewidth}{!}{
\begin{tabular}{c|c}
    \hline
     Dataset & IJB-C (ID)  \\ \hline
     Imgs in G1/G2 & 5,588 / 6,011  \\ \hline
     IDs in G1/G2 & 1,772 / 1,759  \\ \hline \hline
     Mixed IDs / Imgs &  3,531 / 127,152 \\ \hline
\end{tabular}
}
\end{minipage}
\vspace{-3mm}
\caption{The statistics of various benchmark datasets. We note that IJB-C dataset consists of facial images (138,836), videos (11,779), and non-facial images (10,040). For the IJB-C identification dataset, we provide the statistics of each gallery set.}
\label{tab:stat_benchmark}
\end{table}

\subsection{Result on Various Face Recognition Models}
Our theoretical results on Proposition~\ref{prop_1_full} seem independent of the choice of face recognition model because we did not specify it at that moment. However, we note that the definition of $(\epsilon, \delta, \theta)$-isometry assumes the distribution of the unit feature vectors. Since the distribution of face features is quite far from uniform, to be honest, our theoretical analysis may or may not fit with reality.

\begin{figure}
    \centering
    \subfloat[ArcFace]{
    \resizebox{\linewidth}{!}{
    \begin{tabular}{c|c|c|cccccc}
\hline
\multirow{2}{*}{Dataset} & Metric & \multirow{2}{*}{Plain} & \multicolumn{6}{c}{$\mathsf{IDFace}$, Parameters: $(\alpha, \beta)$} \\ \cline{4-9} 
 & (FAR/FPIR) &  & \multicolumn{1}{c|}{(512, 512)} & \multicolumn{1}{c|}{(341, 341)} & \multicolumn{1}{c|}{(127, 127)} & \multicolumn{1}{c|}{(63, 63)} & \multicolumn{1}{c|}{(341, 127)} & (341, 63) \\ \hline
\multirow{2}{*}{LFW} & Accuracy & 99.83\% & \multicolumn{1}{c|}{99.87\%} & \multicolumn{1}{c|}{99.80\%} & \multicolumn{1}{c|}{99.78\%} & \multicolumn{1}{c|}{99.67\%} & \multicolumn{1}{c|}{99.79\%} & 99.72\% \\
 & TAR@FAR & 99.73\%@0.03\% & \multicolumn{1}{c|}{99.77\%@0.03\%} & \multicolumn{1}{c|}{99.77\%@0.07\%} & \multicolumn{1}{c|}{99.63\%@0.03\%} & \multicolumn{1}{c|}{99.57\%@0.13\%} & \multicolumn{1}{c|}{99.67\%@0.02\%} & 99.57\%@0.05\% \\ \hline
\multirow{2}{*}{CFP-FP} & Accuracy & 99.04\% & \multicolumn{1}{c|}{98.57\%} & \multicolumn{1}{c|}{98.83\%} & \multicolumn{1}{c|}{98.76\%} & \multicolumn{1}{c|}{97.66\%} & \multicolumn{1}{c|}{98.79\%} & 98.46\% \\
 & TAR@FAR & 98.40\%@0.14\% & \multicolumn{1}{c|}{97.83\%@0.63\%} & \multicolumn{1}{c|}{98.37\%@0.54\%} & \multicolumn{1}{c|}{97.89\%@0.26\%} & \multicolumn{1}{c|}{96.83\%@1.51\%} & \multicolumn{1}{c|}{98.16\%@0.43\%} & 97.77\%@0.73\%  \\ \hline
\multirow{2}{*}{AGE-DB} & Accuracy & 98.38\% & \multicolumn{1}{c|}{98.02\%} & \multicolumn{1}{c|}{98.35\%} & \multicolumn{1}{c|}{98.02\%} & \multicolumn{1}{c|}{97.07\%} & \multicolumn{1}{c|}{97.98\%} & 97.91\% \\
 & TAR@FAR & 96.97\%@0.20\% & \multicolumn{1}{c|}{96.47\%@0.37\%} & \multicolumn{1}{c|}{97.13\%@0.43\%} & \multicolumn{1}{c|}{96.60\%@0.57\%} & \multicolumn{1}{c|}{95.83\%@1.70\%} & \multicolumn{1}{c|}{96.53\%@0.37\%} & 96.52\%@0.67\% \\ \hline
IJB-C(V) & TAR (1e-3) & 98.02\% & \multicolumn{1}{c|}{97.27\%} & \multicolumn{1}{c|}{97.71\%} & \multicolumn{1}{c|}{97.34\%} & \multicolumn{1}{c|}{95.95\%} & \multicolumn{1}{c|}{97.50\%} & 97.12\% \\ \hline
\multirow{3}{*}{IJB-C(ID)} & TPIR(1e-2) & 95.35\% & \multicolumn{1}{c|}{94.23\%} & \multicolumn{1}{c|}{95.04\%} & \multicolumn{1}{c|}{94.00\%} & \multicolumn{1}{c|}{91.16\%} & \multicolumn{1}{c|}{94.59\%} & 93.71\% \\
 & TPIR(1e-3) & 88.96\% & \multicolumn{1}{c|}{87.80\%} & \multicolumn{1}{c|}{89.18\%} & \multicolumn{1}{c|}{87.58\%} & \multicolumn{1}{c|}{83.85\%} & \multicolumn{1}{c|}{87.12\%} & 87.14\% \\
 & TPIR(1e-4) & 64.71\% & \multicolumn{1}{c|}{70.69\%} & \multicolumn{1}{c|}{66.27\%} & \multicolumn{1}{c|}{54.71\%} & \multicolumn{1}{c|}{52.91\%} & \multicolumn{1}{c|}{61.34\%} & 58.96\% \\ \hline
\end{tabular}}
    }\\
    \subfloat[MagFace]{
    \resizebox{\linewidth}{!}{
    \begin{tabular}{c|c|c|cccccc}
\hline
\multirow{2}{*}{Dataset} & Metric & \multirow{2}{*}{Plain} & \multicolumn{6}{c}{$\mathsf{IDFace}$, Parameters: $(\alpha, \beta)$} \\ \cline{4-9} 
 & (FAR/FPIR) &  & \multicolumn{1}{c|}{(512, 512)} & \multicolumn{1}{c|}{(341, 341)} & \multicolumn{1}{c|}{(127, 127)} & \multicolumn{1}{c|}{(63, 63)} & \multicolumn{1}{c|}{(341, 127)} & (341, 63) \\ \hline
\multirow{2}{*}{LFW} & Accuracy & 99.80\% & \multicolumn{1}{c|}{99.63\%} & \multicolumn{1}{c|}{99.75\%} & \multicolumn{1}{c|}{99.65\%} & \multicolumn{1}{c|}{99.55\%} & \multicolumn{1}{c|}{99.63\%} & 99.68\% \\
 & TAR@FAR & 99.67\%@0.03\% & \multicolumn{1}{c|}{99.43\%@0.07\%} & \multicolumn{1}{c|}{99.63\%@0.10\%} & \multicolumn{1}{c|}{99.40\%@0.10\%} & \multicolumn{1}{c|}{99.30\%@0.20\%} & \multicolumn{1}{c|}{99.43\%@0.02\%} & 99.45\%@0.05\% \\ \hline
\multirow{2}{*}{CFP-FP} & Accuracy & 98.91\% & \multicolumn{1}{c|}{98.27\%} & \multicolumn{1}{c|}{98.43\%} & \multicolumn{1}{c|}{98.37\%} & \multicolumn{1}{c|}{97.21\%} & \multicolumn{1}{c|}{98.53\%} & 98.21\% \\
 & TAR@FAR & 98.26\%@0.26\% & \multicolumn{1}{c|}{97.34\%@0.80\%} & \multicolumn{1}{c|}{97.69\%@0.60\%} & \multicolumn{1}{c|}{97.26\%@0.51\%} & \multicolumn{1}{c|}{95.31\%@0.89\%} & \multicolumn{1}{c|}{97.53\%@0.41\%} & 97.29\%@0.79\% \\ \hline
\multirow{2}{*}{AGE-DB} & Accuracy & 98.15\% & \multicolumn{1}{c|}{97.77\%} & \multicolumn{1}{c|}{98.15\%} & \multicolumn{1}{c|}{97.78\%} & \multicolumn{1}{c|}{96.37\%} & \multicolumn{1}{c|}{97.92\%} & 97.53\% \\
 & TAR@FAR & 97.03\%@0.50\% & \multicolumn{1}{c|}{95.90\%@0.27\%} & \multicolumn{1}{c|}{97.17\%@0.70\%} & \multicolumn{1}{c|}{96.43\%@0.80\%} & \multicolumn{1}{c|}{95.20\%@2.07\%} & \multicolumn{1}{c|}{96.25\%@0.37\%} & 96.12\% @0.90\% \\ \hline
IJB-C(V) & TAR (1e-3) & 97.04\% & \multicolumn{1}{c|}{96.14\%} & \multicolumn{1}{c|}{96.66\%} & \multicolumn{1}{c|}{96.27\%} & \multicolumn{1}{c|}{94.53\%} & \multicolumn{1}{c|}{96.50\%} & 95.91\% \\ \hline
\multirow{3}{*}{IJB-C(ID)} & TPIR(1e-2) & 94.00\% & \multicolumn{1}{c|}{91.94\%} & \multicolumn{1}{c|}{93.34\%} & \multicolumn{1}{c|}{92.19\%} & \multicolumn{1}{c|}{89.01\%} & \multicolumn{1}{c|}{92.95\%} & 91.95\% \\
 & TPIR(1e-3) & 85.63\% & \multicolumn{1}{c|}{85.00\%} & \multicolumn{1}{c|}{86.73\%} & \multicolumn{1}{c|}{85.32\%} & \multicolumn{1}{c|}{81.32\%} & \multicolumn{1}{c|}{84.24\%} & 84.49\% \\
 & TPIR(1e-4) & 58.34\% & \multicolumn{1}{c|}{57.25\%} & \multicolumn{1}{c|}{55.13\%} & \multicolumn{1}{c|}{51.85\%} & \multicolumn{1}{c|}{48.18\%} & \multicolumn{1}{c|}{55.63\%} & 64.60\% \\ \hline
\end{tabular}}
    }\\
    \subfloat[SphereFace2]{
    \resizebox{\linewidth}{!}{
    \begin{tabular}{c|c|c|cccccc}
\hline
\multirow{2}{*}{Dataset} & Metric & \multirow{2}{*}{Plain} & \multicolumn{6}{c}{$\mathsf{IDFace}$, Parameters: $(\alpha, \beta)$} \\ \cline{4-9} 
 & FAR(FPIR) &  & \multicolumn{1}{c|}{(512, 512)} & \multicolumn{1}{c|}{(341, 341)} & \multicolumn{1}{c|}{(127, 127)} & \multicolumn{1}{c|}{(63, 63)} & \multicolumn{1}{c|}{(341, 127)} & (341, 63) \\ \hline
\multirow{2}{*}{LFW} & Accuracy & 99.78\% & \multicolumn{1}{c|}{99.75\%} & \multicolumn{1}{c|}{99.78\%} & \multicolumn{1}{c|}{99.78\%} & \multicolumn{1}{c|}{99.72\%} & \multicolumn{1}{c|}{99.81\%} & 99.79\% \\
 & TAR@FAR & 99.67\%@0.00\% & \multicolumn{1}{c|}{99.67\%@0.10\%} & \multicolumn{1}{c|}{99.73\%@0.07\%} & \multicolumn{1}{c|}{99.67\%@0.03\%} & \multicolumn{1}{c|}{99.47\%@0.03\%} & \multicolumn{1}{c|}{99.70\%@0.00\%} & 99.70\%@0.03\% \\ \hline
\multirow{2}{*}{CFP-FP} & Accuracy & 99.06\% & \multicolumn{1}{c|}{98.67\%} & \multicolumn{1}{c|}{99.07\%} & \multicolumn{1}{c|}{98.70\%} & \multicolumn{1}{c|}{97.83\%} & \multicolumn{1}{c|}{98.96\%} & 98.63\% \\
 & TAR@FAR & 98.49\%@0.29\% & \multicolumn{1}{c|}{97.89\%@0.46\%} & \multicolumn{1}{c|}{98.34\%@0.14\%} & \multicolumn{1}{c|}{98.26\%@0.71\%} & \multicolumn{1}{c|}{96.29\%@0.63\%} & \multicolumn{1}{c|}{98.19\%@0.23\%} & 97.87\%@0.54\% \\ \hline
\multirow{2}{*}{AGE-DB} & Accuracy & 98.55\% & \multicolumn{1}{c|}{98.12\%} & \multicolumn{1}{c|}{98.45\%} & \multicolumn{1}{c|}{98.18\%} & \multicolumn{1}{c|}{97.32\%} & \multicolumn{1}{c|}{98.34\%} & 97.90\% \\
 & TAR@FAR & 97.80\%@0.53\% & \multicolumn{1}{c|}{96.80\%@0.40\%} & \multicolumn{1}{c|}{97.10\%@0.20\%} & \multicolumn{1}{c|}{96.97\%@0.43\%} & \multicolumn{1}{c|}{95.87\%@0.43\%} & \multicolumn{1}{c|}{97.35\%@0.55\%} & 96.65\%@0.68\% \\ \hline
IJB-C(V) & TAR (1e-3) & 97.98\% & \multicolumn{1}{c|}{97.54\%} & \multicolumn{1}{c|}{97.83\%} & \multicolumn{1}{c|}{97.55\%} & \multicolumn{1}{c|}{96.76\%} & \multicolumn{1}{c|}{97.68\%} & 97.41\% \\ \hline
\multirow{3}{*}{IJB-C(ID)} & TPIR(1e-2) & 95.06\% & \multicolumn{1}{c|}{94.08\%} & \multicolumn{1}{c|}{94.73\%} & \multicolumn{1}{c|}{93.92\%} & \multicolumn{1}{c|}{91.16\%} & \multicolumn{1}{c|}{94.35\%} & 93.75\% \\
 & TPIR(1e-3) & 88.40\% & \multicolumn{1}{c|}{85.83\%} & \multicolumn{1}{c|}{86.55\%} & \multicolumn{1}{c|}{87.42\%} & \multicolumn{1}{c|}{86.53\%} & \multicolumn{1}{c|}{87.51\%} & 86.06\% \\
 & TPIR(1e-4) & 52.04\% & \multicolumn{1}{c|}{50.13\%} & \multicolumn{1}{c|}{48.43\%} & \multicolumn{1}{c|}{53.79\%} & \multicolumn{1}{c|}{47.55\%} & \multicolumn{1}{c|}{50.75\%} & 55.13\% \\ \hline
\end{tabular}}
    }\\
    \subfloat[ElasticFace]{
    \resizebox{\linewidth}{!}{
    \begin{tabular}{c|c|c|cccccc}
\hline
\multirow{2}{*}{Dataset} & Metric & \multirow{2}{*}{Plain} & \multicolumn{6}{c}{$\mathsf{IDFace}$, Parameters: $(\alpha, \beta)$} \\ \cline{4-9} 
 & FAR(FPIR) &  & \multicolumn{1}{c|}{(512, 512)} & \multicolumn{1}{c|}{(341, 341)} & \multicolumn{1}{c|}{(127, 127)} & \multicolumn{1}{c|}{(63, 63)} & \multicolumn{1}{c|}{(341, 127)} & (341, 63) \\ \hline
\multirow{2}{*}{LFW} & Accuracy & 99.82\% & \multicolumn{1}{c|}{99.80\%} & \multicolumn{1}{c|}{99.82\%} & \multicolumn{1}{c|}{99.83\%} & \multicolumn{1}{c|}{99.65\%} & \multicolumn{1}{c|}{99.82\%} & 99.80\% \\
 & TAR@FAR & 99.67\%@0.00\% & \multicolumn{1}{c|}{99.63\%@0.00\%} & \multicolumn{1}{c|}{99.67\%@0.00\%} & \multicolumn{1}{c|}{99.67\%@0.00\%} & \multicolumn{1}{c|}{99.67\%@0.17\%} & \multicolumn{1}{c|}{99.70\%@0.03\%} & 99.67\%@0.00\% \\ \hline
\multirow{2}{*}{CFP-FP} & Accuracy & 99.09\% & \multicolumn{1}{c|}{98.67\%} & \multicolumn{1}{c|}{99.01\%} & \multicolumn{1}{c|}{98.81\%} & \multicolumn{1}{c|}{98.20\%} & \multicolumn{1}{c|}{99.03\%} & 98.79\% \\
 & TAR@FAR & 98.49\%@0.17\% & \multicolumn{1}{c|}{97.86\%@0.34\%} & \multicolumn{1}{c|}{98.26\%@0.17\%} & \multicolumn{1}{c|}{98.31\%@0.66\%} & \multicolumn{1}{c|}{97.46\%@1.06\%} & \multicolumn{1}{c|}{98.41\%@0.29\%} & 97.99\%@0.36\% \\ \hline
\multirow{2}{*}{AGE-DB} & Accuracy & 98.45\% & \multicolumn{1}{c|}{98.00\%} & \multicolumn{1}{c|}{98.23\%} & \multicolumn{1}{c|}{98.25\%} & \multicolumn{1}{c|}{97.53\%} & \multicolumn{1}{c|}{98.26\%} & 98.10\% \\
 & TAR@FAR & 97.33\%@0.23\% & \multicolumn{1}{c|}{96.90\%@0.70\%} & \multicolumn{1}{c|}{96.93\%@0.17\%} & \multicolumn{1}{c|}{97.30\%@0.80\%} & \multicolumn{1}{c|}{95.83\%@0.77\%} & \multicolumn{1}{c|}{96.98\%@0.32\%} & 97.12\%@0.85\% \\\hline
IJB-C(V) & TAR (1e-3) & 97.93\% & \multicolumn{1}{c|}{97.31\%} & \multicolumn{1}{c|}{97.68\%} & \multicolumn{1}{c|}{97.48\%} & \multicolumn{1}{c|}{96.36\%} & \multicolumn{1}{c|}{97.56\%} & 97.20\% \\ \hline
\multirow{3}{*}{IJB-C(ID)} & TPIR(1e-2) & 94.99\% & \multicolumn{1}{c|}{93.83\%} & \multicolumn{1}{c|}{94.51\%} & \multicolumn{1}{c|}{93.76\%} & \multicolumn{1}{c|}{91.17\%} & \multicolumn{1}{c|}{94.42\%} & 93.60\% \\
 & TPIR(1e-3) & 89.56\% & \multicolumn{1}{c|}{87.83\%} & \multicolumn{1}{c|}{88.45\%} & \multicolumn{1}{c|}{87.45\%} & \multicolumn{1}{c|}{82.90\%} & \multicolumn{1}{c|}{88.54\%} & 87.54\% \\
 & TPIR(1e-4) & 60.18\% & \multicolumn{1}{c|}{60.45\%} & \multicolumn{1}{c|}{59.88\%} & \multicolumn{1}{c|}{62.10\%} & \multicolumn{1}{c|}{45.43\%} & \multicolumn{1}{c|}{62.55\%} & 56.36\% \\ \hline
\end{tabular}}
    }
    \vspace{-3mm}
    \caption{Various Face Recognition Benchmark results on non-protected template extractor (Plain) and $\mathsf{IDFace}$ with various $(\alpha, \beta)$.}
    \label{fig:total_table1}
    \vspace{-3mm}
\end{figure}

For this reason, we conducted extensive experiments to evaluate the effectiveness of $\mathsf{IDFace}$ on other recently proposed face recognition models. We select ElasticFace~\cite{boutros2022elasticface}, MagFace~\cite{meng2021magface}, SphereFace2~\cite{wen2022sphereface2}, AdaFace~\cite{kim2022adaface}, and AdaFace-KPRPE~\cite{kim2024keypoint}. We implemented the first three FR models by following their official source codes (\cite{magface}, \cite{elasticface}, and \cite{opensphere} for ElasticFace, MagFace, and SphereFace2, respectively). Here, we used ResNet100 as a backbone and trained each model with the MS1MV3 dataset.
In addition, for AdaFace and AdaFace-KPRPE, we exploited the pre-trained models provided in the CVLFace library~\cite{CVLface} that is maintained by the authors of AdaFace. For each model, we used \texttt{IR101} and \texttt{ViT-KPRPE} architectures trained by the WebFace12M dataset~\cite{zhu2021webface260m}, respectively.
The benchmark results are given in Figure~\ref{fig:total_table1},~\ref{fig:total_table2}, which indicates that our $\mathsf{IDFace}$ can harmonize well with various recognition systems without significant accuracy degradation.

\begin{figure}
    \centering
    \subfloat[AdaFace-IR101]{
    \resizebox{\linewidth}{!}{
    \begin{tabular}{c|c|c|cccccc}
\hline
\multirow{2}{*}{Dataset} & Metric & \multirow{2}{*}{Plain} & \multicolumn{6}{c}{$\mathsf{IDFace}$, Parameters: $(\alpha, \beta)$} \\ \cline{4-9} 
 & FAR(FPIR) &  & \multicolumn{1}{c|}{(512, 512)} & \multicolumn{1}{c|}{(341, 341)} & \multicolumn{1}{c|}{(127, 127)} & \multicolumn{1}{c|}{(63, 63)} & \multicolumn{1}{c|}{(341, 127)} & (341, 63) \\ \hline
\multirow{2}{*}{LFW} & Accuracy & 99.82\% & \multicolumn{1}{c|}{99.77\%} & \multicolumn{1}{c|}{99.78\%} & \multicolumn{1}{c|}{99.83\%} & \multicolumn{1}{c|}{99.73\%} & \multicolumn{1}{c|}{99.80\%} & 99.82\% \\
 & TAR@FAR & 99.67\%@0.03\% & \multicolumn{1}{c|}{99.57\%@0.03\%} & \multicolumn{1}{c|}{99.63\%@0.07\%} & \multicolumn{1}{c|}{99.67\%@0.00\%} & \multicolumn{1}{c|}{99.57\%@0.10\%} & \multicolumn{1}{c|}{99.65\%@0.05\%} & 99.68\%@0.03\% \\ \hline
\multirow{2}{*}{CFP-FP} & Accuracy & 99.24\% & \multicolumn{1}{c|}{98.96\%} & \multicolumn{1}{c|}{99.24\%} & \multicolumn{1}{c|}{99.09\%} & \multicolumn{1}{c|}{98.20\%} & \multicolumn{1}{c|}{99.19\%} & 98.99\% \\
 & TAR@FAR & 98.66\%@0.17\% & \multicolumn{1}{c|}{98.14\%@0.23\%} & \multicolumn{1}{c|}{98.71\%@0.23\%} & \multicolumn{1}{c|}{98.51\%@0.34\%} & \multicolumn{1}{c|}{96.94\%@0.54\%} & \multicolumn{1}{c|}{98.53\%@0.16\%} & 98.24\%@0.26\% \\ \hline
\multirow{2}{*}{AGE-DB} & Accuracy & 98.00\% & \multicolumn{1}{c|}{97.67\%} & \multicolumn{1}{c|}{97.97\%} & \multicolumn{1}{c|}{97.55\%} & \multicolumn{1}{c|}{96.48\%} & \multicolumn{1}{c|}{97.80\%} & 97.23\% \\
 & TAR@FAR & 96.73\%@0.73\% & \multicolumn{1}{c|}{96.37\%@1.03\%} & \multicolumn{1}{c|}{96.50\%@0.57\%} & \multicolumn{1}{c|}{96.23\%@1.13\%} & \multicolumn{1}{c|}{94.17\%@1.20\%} & \multicolumn{1}{c|}{96.65\%@1.05\%} & 95.62\%@1.15\% \\\hline
IJB-C(V) & TAR (1e-3) & 98.39\% & \multicolumn{1}{c|}{97.93\%} & \multicolumn{1}{c|}{98.27\%} & \multicolumn{1}{c|}{97.90\%} & \multicolumn{1}{c|}{96.94\%} & \multicolumn{1}{c|}{98.14\%} & 97.78\% \\ \hline
\multirow{3}{*}{IJB-C(ID)} & TPIR(1e-2) & 96.42\% & \multicolumn{1}{c|}{95.15\%} & \multicolumn{1}{c|}{95.95\%} & \multicolumn{1}{c|}{95.32\%} & \multicolumn{1}{c|}{92.87\%} & \multicolumn{1}{c|}{95.72\%} & 95.19\% \\
 & TPIR(1e-3) & 85.81\% & \multicolumn{1}{c|}{85.83\%} & \multicolumn{1}{c|}{85.11\%} & \multicolumn{1}{c|}{85.40\%} & \multicolumn{1}{c|}{86.21\%} & \multicolumn{1}{c|}{86.59\%} & 85.96\% \\
 & TPIR(1e-4) & 65.60\% & \multicolumn{1}{c|}{63.88\%} & \multicolumn{1}{c|}{68.38\%} & \multicolumn{1}{c|}{62.38\%} & \multicolumn{1}{c|}{61.39\%} & \multicolumn{1}{c|}{63.45\%} & 64.66\% \\ \hline
\end{tabular}}
    }\\
    \subfloat[AdaFace-KPRPE]{
    \resizebox{\linewidth}{!}{
    \begin{tabular}{c|c|c|cccccc}
\hline
\multirow{2}{*}{Dataset} & Metric & \multirow{2}{*}{Plain} & \multicolumn{6}{c}{$\mathsf{IDFace}$, Parameters: $(\alpha, \beta)$} \\ \cline{4-9} 
 & FAR(FPIR) &  & \multicolumn{1}{c|}{(512, 512)} & \multicolumn{1}{c|}{(341, 341)} & \multicolumn{1}{c|}{(127, 127)} & \multicolumn{1}{c|}{(63, 63)} & \multicolumn{1}{c|}{(341, 127)} & (341, 63) \\ \hline
\multirow{2}{*}{LFW} & Accuracy & 99.82\% & \multicolumn{1}{c|}{99.70\%} & \multicolumn{1}{c|}{99.82\%} & \multicolumn{1}{c|}{99.78\%} & \multicolumn{1}{c|}{99.70\%} & \multicolumn{1}{c|}{99.81\%} & 99.78\% \\
 & TAR@FAR & 99.63\%@0.00\% & \multicolumn{1}{c|}{99.53\%@0.13\%} & \multicolumn{1}{c|}{99.63\%@0.00\%} & \multicolumn{1}{c|}{99.60\%@0.03\%} & \multicolumn{1}{c|}{99.53\%@0.13\%} & \multicolumn{1}{c|}{99.63\%@0.02\%} & 99.60\%@0.03\% \\ \hline
\multirow{2}{*}{CFP-FP} & Accuracy & 99.30\% & \multicolumn{1}{c|}{98.77\%} & \multicolumn{1}{c|}{99.11\%} & \multicolumn{1}{c|}{99.17\%} & \multicolumn{1}{c|}{98.19\%} & \multicolumn{1}{c|}{99.11\%} & 98.78\% \\
 & TAR@FAR & 98.66\%@0.06\% & \multicolumn{1}{c|}{97.91\%@0.37\%} & \multicolumn{1}{c|}{98.51\%@0.29\%} & \multicolumn{1}{c|}{98.51\%@0.17\%} & \multicolumn{1}{c|}{97.00\%@0.63\%} & \multicolumn{1}{c|}{98.37\%@0.16\%} & 98.04\%@0.49\% \\ \hline
\multirow{2}{*}{AGE-DB} & Accuracy & 98.10\% & \multicolumn{1}{c|}{97.30\%} & \multicolumn{1}{c|}{97.67\%} & \multicolumn{1}{c|}{97.65\%} & \multicolumn{1}{c|}{96.23\%} & \multicolumn{1}{c|}{97.55\%} & 97.32\% \\
 & TAR@FAR & 97.10\%@0.90\% & \multicolumn{1}{c|}{95.67\%@1.07\%} & \multicolumn{1}{c|}{96.77\%@1.43\%} & \multicolumn{1}{c|}{96.17\%@0.87\%} & \multicolumn{1}{c|}{94.97\%@2.50\%} & \multicolumn{1}{c|}{96.22\%@1.12\%} & 95.72\%@1.08\% \\\hline
IJB-C(V) & TAR (1e-3) & 98.40\% & \multicolumn{1}{c|}{97.92\%} & \multicolumn{1}{c|}{98.20\%} & \multicolumn{1}{c|}{97.98\%} & \multicolumn{1}{c|}{96.67\%} & \multicolumn{1}{c|}{98.07\%} & 97.77\% \\ \hline
\multirow{3}{*}{IJB-C(ID)} & TPIR(1e-2) & 96.46\% & \multicolumn{1}{c|}{95.27\%} & \multicolumn{1}{c|}{96.00\%} & \multicolumn{1}{c|}{95.35\%} & \multicolumn{1}{c|}{92.10\%} & \multicolumn{1}{c|}{95.74\%} & 94.80\% \\
 & TPIR(1e-3) & 90.52\% & \multicolumn{1}{c|}{90.55\%} & \multicolumn{1}{c|}{89.79\%} & \multicolumn{1}{c|}{87.45\%} & \multicolumn{1}{c|}{85.06\%} & \multicolumn{1}{c|}{89.54\%} & 86.84\% \\
 & TPIR(1e-4) & 60.87\% & \multicolumn{1}{c|}{57.59\%} & \multicolumn{1}{c|}{60.94\%} & \multicolumn{1}{c|}{62.14\%} & \multicolumn{1}{c|}{64.17\%} & \multicolumn{1}{c|}{65.31\%} & 61.61\% \\ \hline
\end{tabular}}
    }
    \vspace{-3mm}
    \caption{Various Face Recognition Benchmark results on non-protected template extractor (Plain) and $\mathsf{IDFace}$ with various $(\alpha, \beta)$.}
    \label{fig:total_table2}
    \vspace{-3mm}
\end{figure}

In particular, to visualize the trade-off relationship between TAR and FAR, we also evaluated the receiver operating characteristic (ROC) and detection error trade-off (DET) curves for IJB-C verification and identification benchmarks, respectively. We tested the $\mathsf{IDFace}$ with the same feature extractors in Figure~\ref{fig:total_table1},~\ref{fig:total_table2} on the parameter settings in $\alpha =341$ and $\beta = 341, 127, 63$.
The results are given in Figure~\ref{fig:roc_det1},~\ref{fig:roc_det2}. For ROC curves in the verification task, one may observe that a slight accuracy degradation when $\mathsf{IDFace}$ is applied. Nevertheless, in terms of area under the curve (AUC) score, the amount of accuracy drop seems insignificant. On the other hand, in the identification task which is our main interest, we can observe that the differences between DET curves from $\mathsf{IDFace}$ and the plain model seems negligible. These results indicate that our $\mathsf{IDFace}$ can be applied to various face recognition models and various levels of FAR, without significant accuracy degradation.

\begin{figure}
    \begin{subfigure}[b]{1\textwidth}
        \centering
        \includegraphics[width = \textwidth]{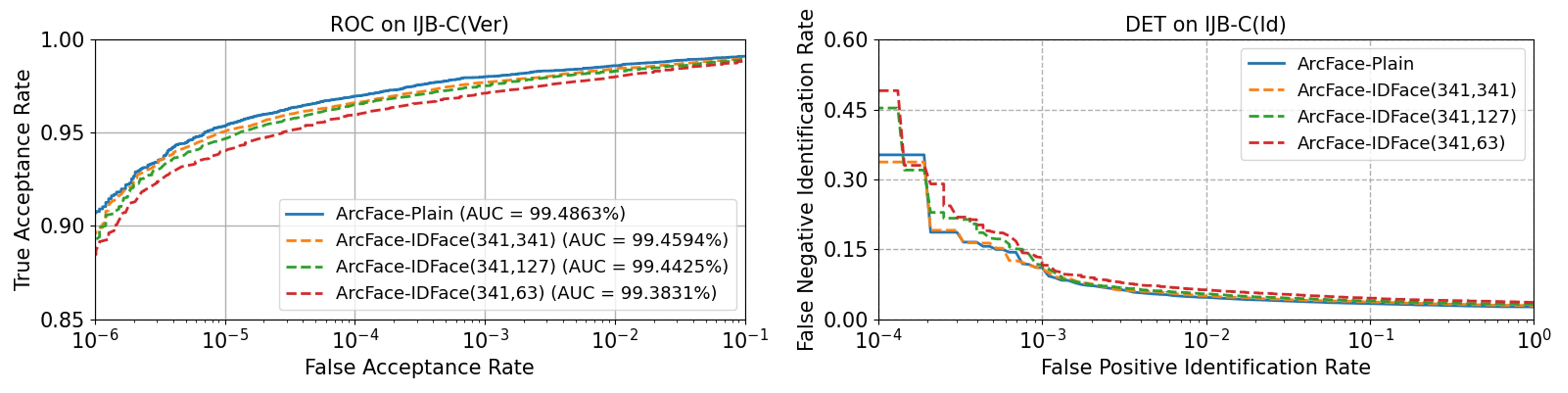}
        \caption{ArcFace}
    \end{subfigure}
        \begin{subfigure}[b]{1\textwidth}
        \centering
        \includegraphics[width = \textwidth]{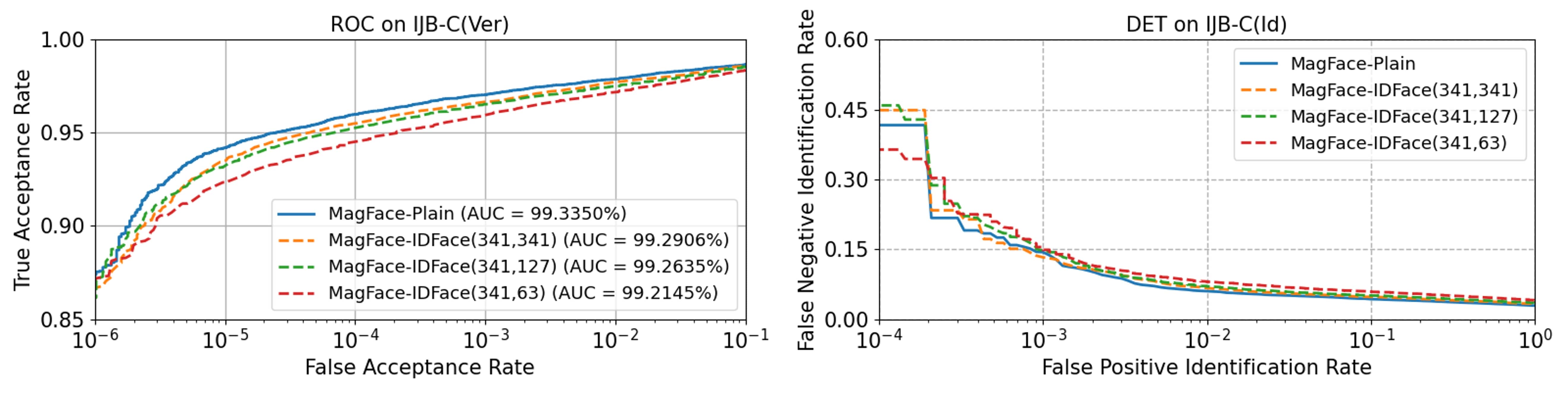}
        \caption{MagFace}
    \end{subfigure}
        \begin{subfigure}[b]{1\textwidth}
        \centering
        \includegraphics[width = \textwidth]{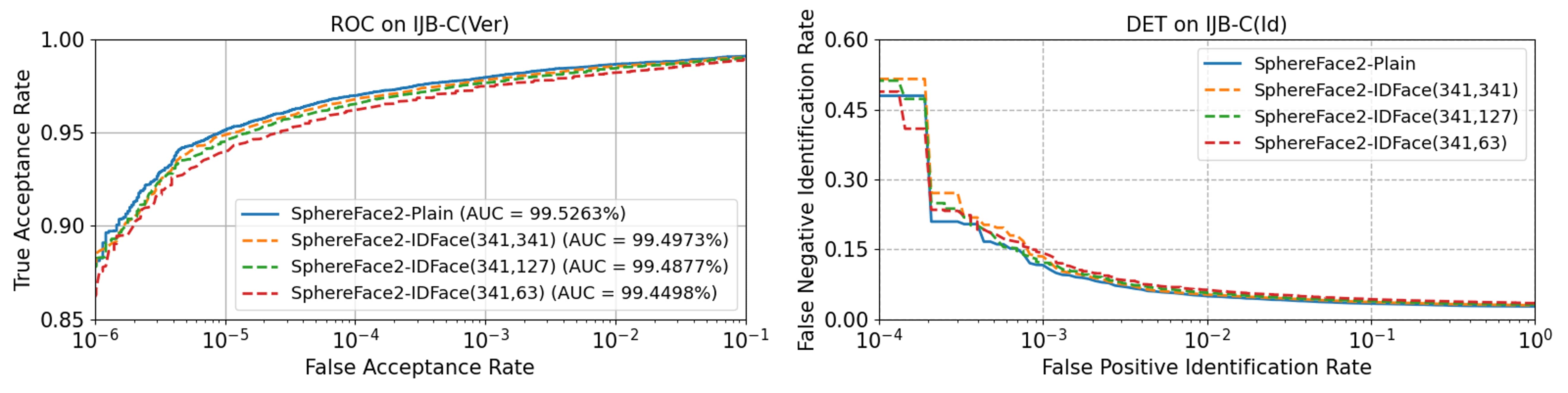}
        \caption{SphereFace2}
    \end{subfigure}
        \begin{subfigure}[b]{1\textwidth}
        \centering
        \includegraphics[width = \textwidth]{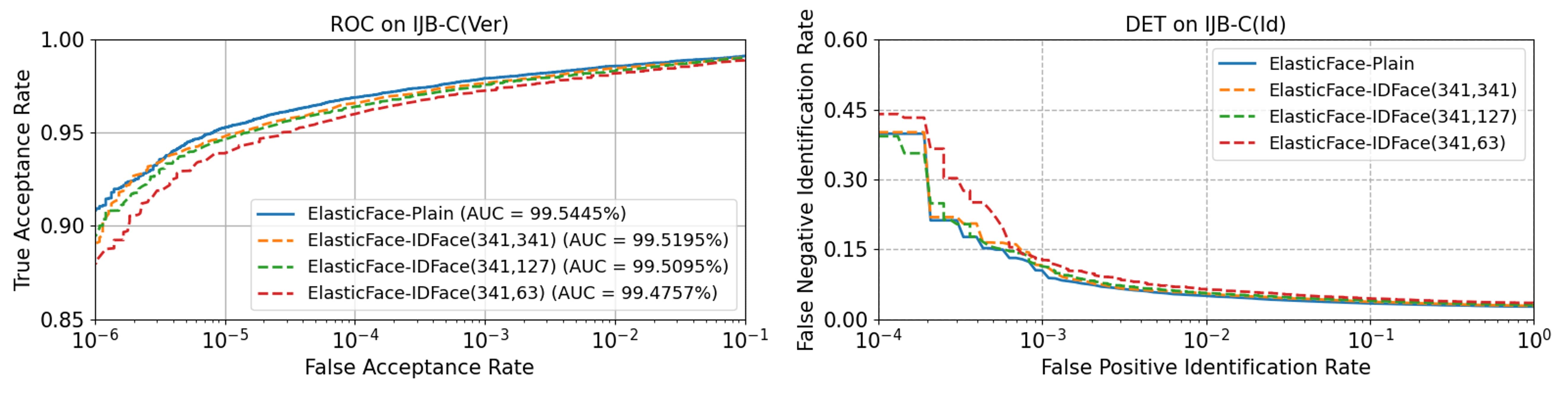}
        \caption{ElasticFace}
    \end{subfigure}
    \vspace{-3mm}    
    \caption{IJB-C verification/identification benchmark results for non-protected feature extractor (Plain) and $\mathsf{IDFace}$ with various $(\alpha, \beta)$. We also reported AUC-ROC curves and DET curves for each task, respectively.}
    \label{fig:roc_det1}
\end{figure}

\begin{figure}
    \begin{subfigure}[b]{1\textwidth}
        \centering
        \includegraphics[width = \textwidth]{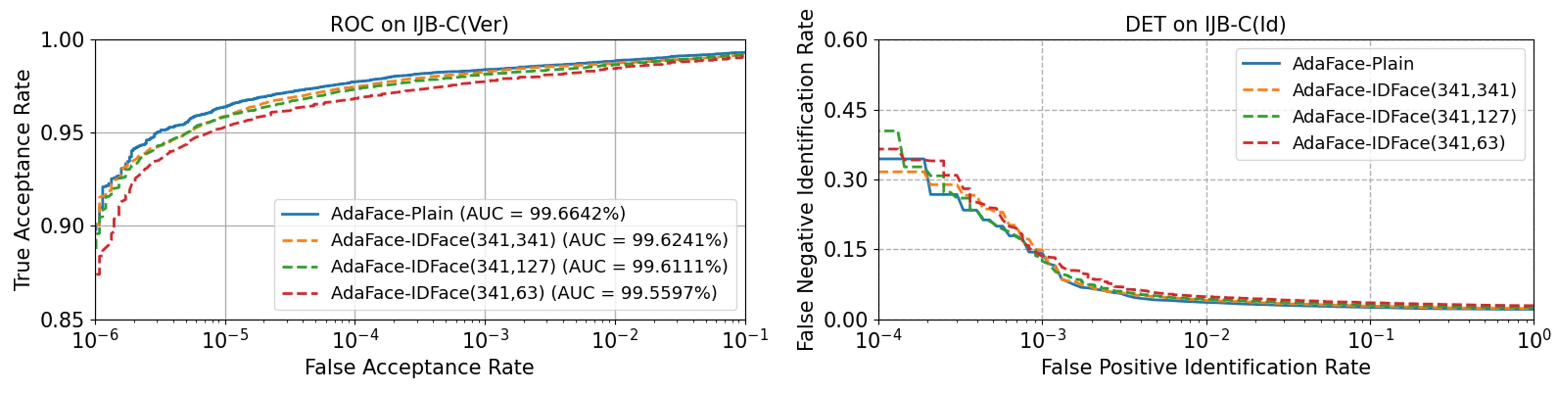}
        \caption{AdaFace-IR101}
    \end{subfigure}
        \begin{subfigure}[b]{1\textwidth}
        \centering
        \includegraphics[width = \textwidth]{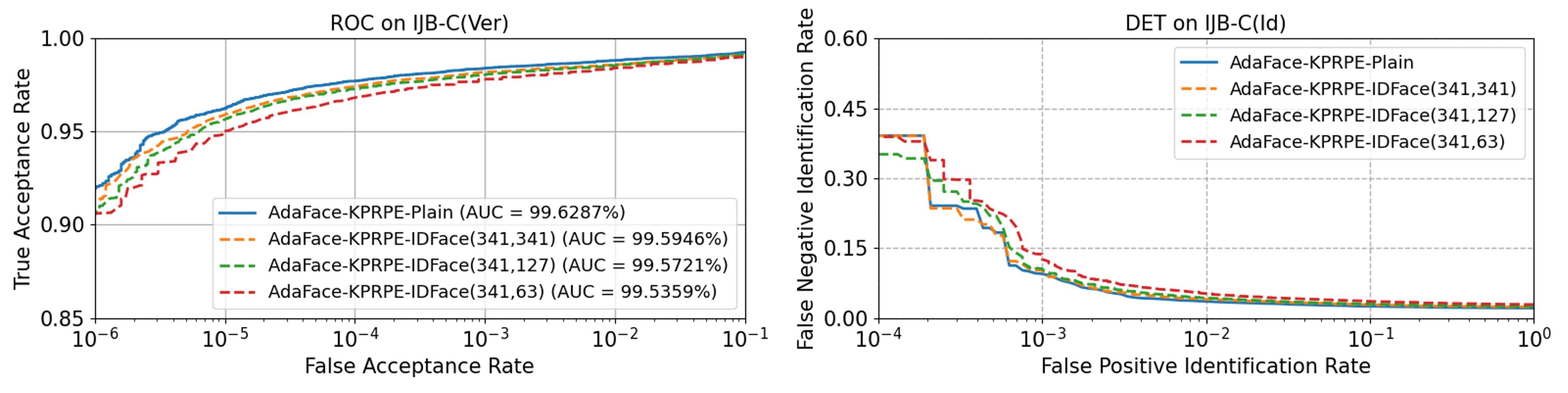}
        \caption{AdaFace-KPRPE}
    \end{subfigure}
    \caption{IJB-C verification/identification benchmark results for non-protected feature extractor (Plain) and $\mathsf{IDFace}$ with various $(\alpha, \beta)$. We also reported AUC-ROC curves and DET curves for each task, respectively.}
    \label{fig:roc_det2}
\end{figure}

\subsection{Effect of our Transformation on Intra and Inter Class Variations}

\begin{figure*}[t]
    \centering
    \begin{subfigure}[b]{1\textwidth}
        \centering
        \includegraphics[width = \textwidth]{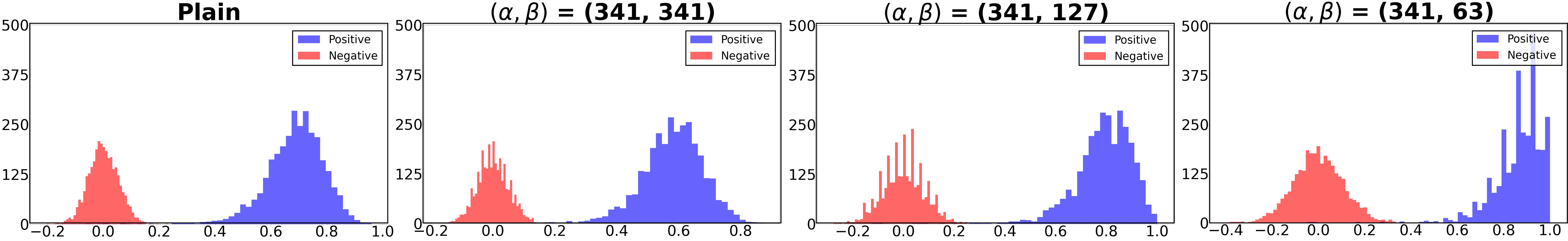}
        \caption{LFW}
    \end{subfigure}
    \vspace{1mm}
    \hfill
    \begin{subfigure}[b]{1\textwidth}
        \centering
        \includegraphics[width=\textwidth]{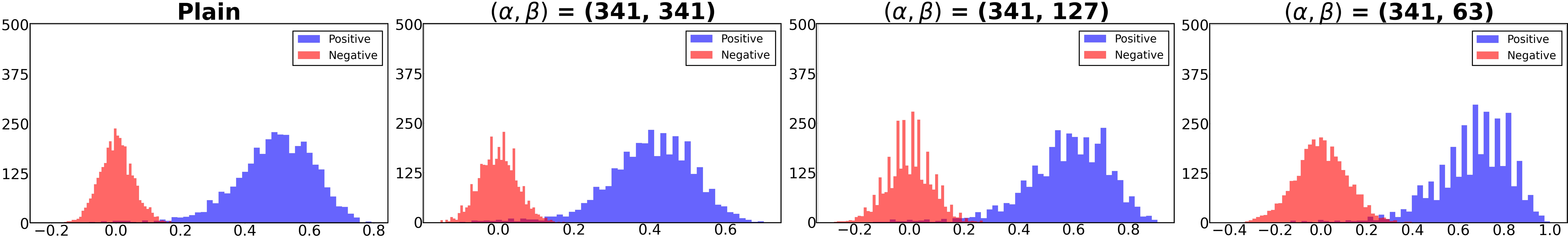}
        \caption{CFP-FP}
    \end{subfigure}
    \vspace{1mm}
    \hfill
    \begin{subfigure}[b]{1\textwidth}
        \centering
        \includegraphics[width=\textwidth]{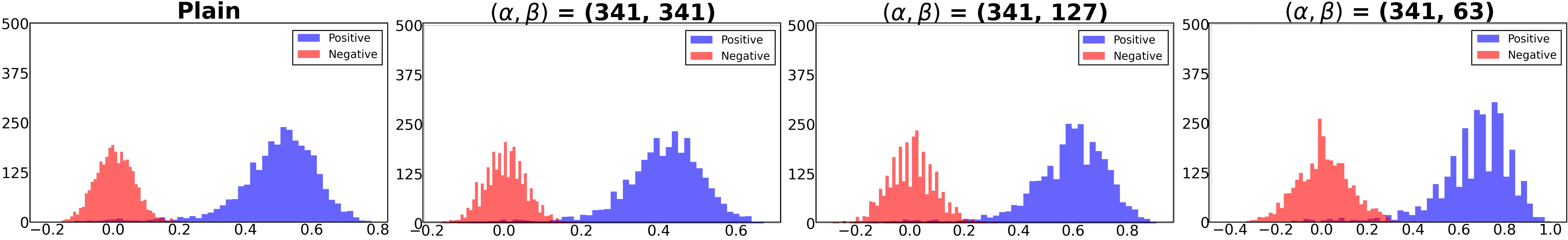}
        \caption{AgeDB}
    \end{subfigure}
    \vspace{-8mm}  
    \caption{The distribution of the rescaled cosine value from each positive, negative pairs on LFW, CFP-FP, AgeDB benchmark datasets. We applied our proposed transform with parameter $\alpha, \beta$, by varying $\beta = 63, 127, 341$ while $\alpha$ is fixed to $341$. We used ArcFace as a feature extractor.}
    \label{fig:alphabeta}
    \vspace{-4mm}
\end{figure*}

We also analyzed the effect of our transformation in terms of intra-class compactness and inter-class discrepancy on various transformation parameters. To this end, we measured the distribution of the cosine distance between positive and negative pairs on the LFW dataset, respectively. We note that when $\beta < \alpha$, the range of the inner product value between transformed vectors becomes narrower. Concretely, when we choose $(\alpha, \beta) = (341, 63)$, then the range of all possible cosine values is $\left[-\sqrt{\frac{63}{341}}, \sqrt{\frac{63}{341}} \approx 0.4298\right]$. 

To compensate for such a range mismatch, we multiply an appropriate constant to fit the range of the cosine value into $[-1,1]$. We note that such re-scaling itself does not affect the accuracy of the recognition system. The result of each distribution after re-scaling is provided in Figure~\ref{fig:alphabeta}.
We can figure out that the distribution of negative pairs is more widespread than the result from Plain due to the effect of re-scaling. However, the distributions of cosine values from positive pairs and negative pairs, respectively, are still well distinguished. This result explains the reason for the small accuracy degradation of $\mathsf{IDFace}$ even when $\beta < \alpha$.

\subsection{Comparison with PCA-based Dimensionality Reduction}\label{sec:supp_D_3}
Recall that lots of previous HE-based face template protections~\cite{boddeti2018secure, engelsma2022hers, Bausp2022improved} employed dimensionality reduction techniques for a good trade-off between efficiency and accuracy. However, as reported in their previous works, the amount of accuracy degradation from PCA~\cite{boddeti2018secure, Bausp2022improved} or neural network-based approach~\cite{engelsma2022hers} is at most $13.5\%$, which is quite large compared to that from $\mathsf{IDFace}$. Of course, the amount of degradation depends on the experimental setting, such as the accuracy of an unprotected face recognition model or the dimension of face templates before dimensionality reduction. Hence, for a fair comparison between $\mathsf{IDFace}$ and previous works in terms of accuracy degradation, we provide the result of accuracy degradation when PCA is applied to the templates extracted by the recognition model in our experimental setting. 

\begin{wraptable}{O}{.5\linewidth}
\centering
\vspace{-2mm}
\resizebox{\linewidth}{!}{
\begin{tabular}{c|c|c||ccc}
\hline
Dataset	& Metric (FAR) &	Plain&	128D&	64D&	32D \\ \hline
LFW& \multirow{3}{*}{Accuracy} & 	99.83\%&   	99.66\%&   	99.16\%&   	99.31\%
\\ \cline{1-1}
CFP-FP	&  &	99.04\%&	98.02\%&	96.15\%&	96.11\%
\\ \cline{1-1}
AgeDB&  &	98.38\%&	97.46\%&	95.90\%&	95.71\%
\\ \hline
IJB-C(V)&	TAR (1e-3)&	98.01\%&	96.43\%&	93.29\%&	93.27\%
\\ \hline
\multirow{5}{*}{IJB-C(ID)} & TPIR (1e-0)&	97.38\%&	95.77\%&	91.75\%&	91.72\%
\\ 
 &	TPIR (1e-1)&	96.68\%&	94.83\%&	89.32\%&	89.27\%
\\ 
&TPIR (1e-2)	&95.35\%&	93.40\%&	85.61\%&	85.48\%
\\ 
&	TPIR (1e-3)&	88.96\%&	86.45\%&	78.66\%&	79.23\%
\\ 
&	TPIR (1e-4)&	64.71\%&	60.04\%&	53.41\%&	53.17\%
\\ \hline
\end{tabular}
}
\vspace{-3mm}
\caption{Benchmark results of ArcFace on evaluation datasets with PCA on various dimensions.}
\vspace{-3mm}
\label{tab:pca}
\end{wraptable}

For training the PCA matrix, we exploited the same dataset (MS1MV3) that was used for training ArcFace. We selected the output dimensions as 32, 64, and 128 for comparison with the previous methods in Table \textcolor{iccvblue}{1} in main text. After training the PCA matrix, we evaluated the accuracy by running benchmarks as if the PCA matrix were added to the final layer of the feature extractor. The result is given in Table~\ref{tab:pca}, which tells us that the accuracy degradation from PCA is around $0.17\%$--$4.67\%$, $0.67\%$--$11.30\%$, and $0.52\% $--$11.54\%$ for 128D, 64D and 32D, respectively. We note that even in the fastest parameter setting in $\mathsf{IDFace}$ ($\beta=63$), the accuracy loss from $\mathsf{IDFace}$ is smaller than that from PCA with almost all benchmarks we tested. This result indicates that (1) the proposed $\mathsf{IDFace}$ is even better than previous methods with PCA of 128D in terms of not only efficiency but also accuracy, and (2) applying PCA to our $\mathsf{IDFace}$ would show a larger accuracy degradation than this result.

\subsection{Discussion on the Accuracy Drop}

Through experiments, we showed that the tendency of accuracy degradation with respect to the transformation parameters, giving intuition that the reason for accuracy degradation is strongly tied to the extent how the transformation preserves the distance relationship. However, we found that such an interpretation is still insufficient for fully explaining the accuracy degradation of $\mathsf{IDFace}$. For example, we can observe an odd phenomenon that was commonly observed regardless of the choice of the feature extractor: the accuracy increases after employing the $\mathsf{IDFace}$. To clarify these aspects, we further investigate how $\mathsf{IDFace}$, especially our almost isometric transformation, yields accuracy degradation.

\;

\noindent \textbf{On the Distance Preservation Property With Respect to Inputs.}
According to our analyses, the change in inner product value from ours is at most 0.111 under the suggested parameter. However, the upper bound 0.111 is too high to explain the small accuracy drop of $\mathsf{IDFace}$, so more analysis for our transformation is necessary.

In fact, the bound 0.111 is about the worst case, and the change of the inner product value is also affected by that value before transformation. As we can observe in Figure~\ref{fig:kirruk}, the difference in the inner product value becomes the largest when the inner product before the transformation is near 0.7 and becomes narrower when the inner product value of inputs tends to 0. We note that the best threshold for each recognition model we evaluated is near $0.15$--$0.25$ for all datasets. This implies that the difference in inner product value near the threshold is relatively small (less than 0.05) compared to the upper bound we derived. We can expect that the inner product values nearby the threshold are relatively less changed than the upper bound, where pairs of faces whose inner product value is nearby here are subject to changing the decision of the model.

It is also worthwhile to mention that the distribution of feature vectors is quite different from uniform, which is in fact a gap between theory and reality. More precisely, in the definition of $(\epsilon, \delta, \theta)$-isometry, we calculated $\epsilon$ by considering a uniformly sampled unit vector $x$ and another unit vector $y$ uniformly sampled from vectors whose angles from $x$ are $\theta$. Along with the ease of proof, we intended to construct a data-agnostic distance-preserving transformation via this definition; we note that if some part of the BTP relies on the specific distribution derived from the input dataset, then this BTP cannot be deployed in an open-set face recognition scenario. In fact, our experimental results show that (1) the accuracy degradation carried by $\mathsf{IDFace}$ is admissible in various face recognition models, and (2) the almost-isometric transformation would not harm both intra-class compactness and inter-class discrepancy. Nevertheless, from a theoretical perspective, there is still a gap that needs to be further investigated. We leave this as an interesting future work.

\;

\begin{figure}
    \begin{subfigure}[b]{1\textwidth}
        \centering
        \includegraphics[width = \textwidth]{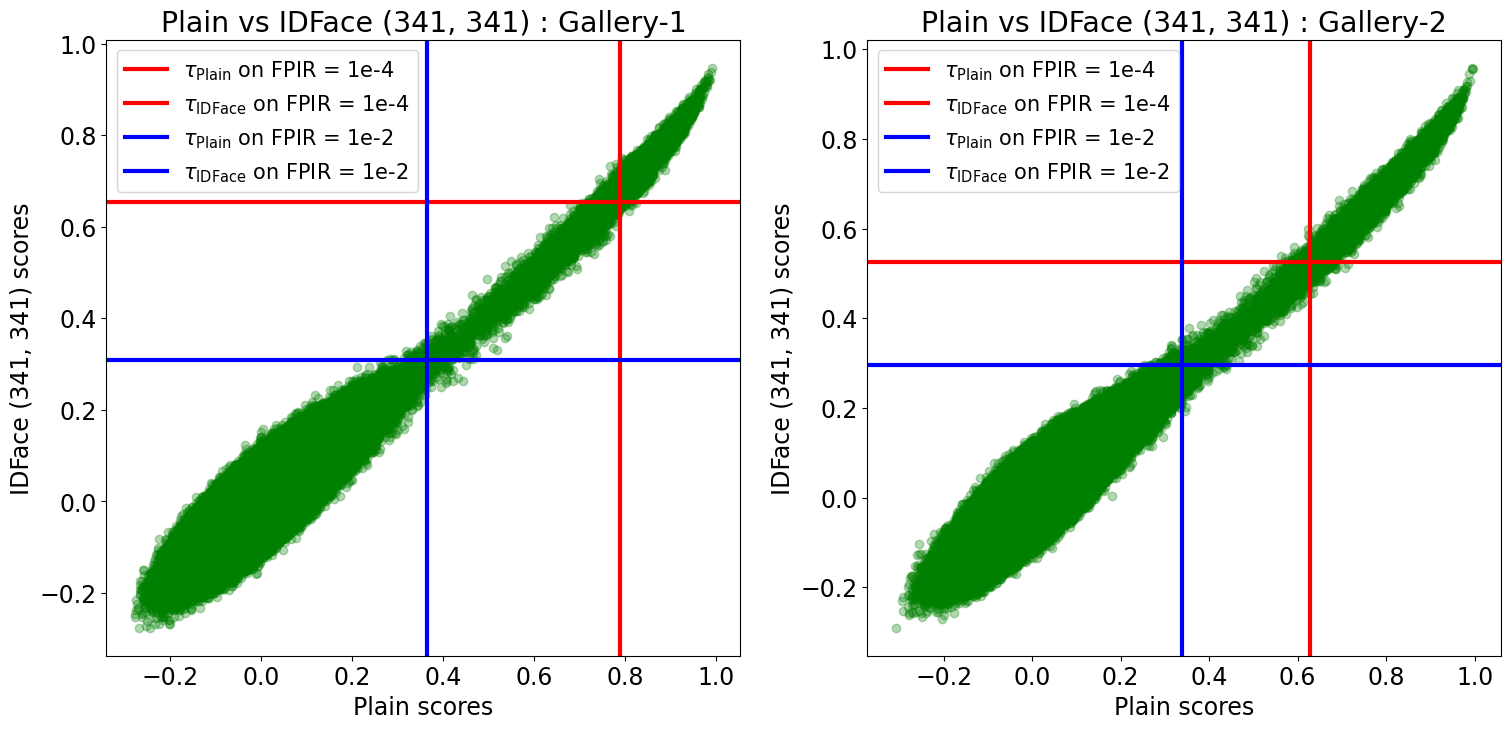}
        \caption{Plain vs. $\mathsf{IDFace}$ with $(\alpha,\beta) = (341,341)$.}\label{fig:odd_phenomenon_a}
    \end{subfigure}
        \begin{subfigure}[b]{1\textwidth}
        \centering
        \includegraphics[width = \textwidth]{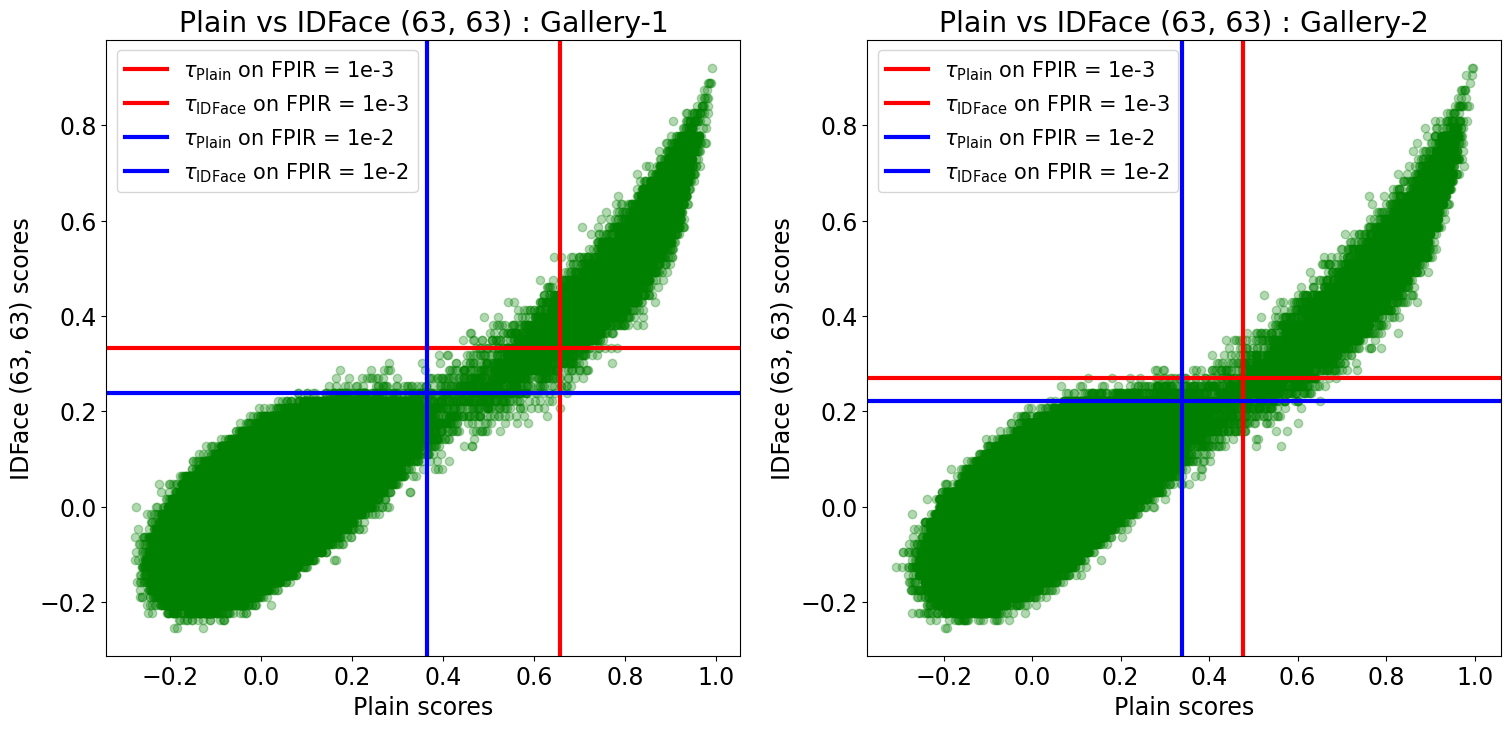}
        \caption{Plain vs. $\mathsf{IDFace}$ with $(\alpha,\beta) = (63,63)$.}
    \end{subfigure}
    \vspace{-3mm}    
    \caption{Score distribution in the IJB-C identification task on AdaFace-IResNet101~\cite{kim2022adaface} with non-protected feature extractor (Plain) and $\mathsf{IDFace}$. Each point represents an image pair, with scores from Plain (x-axis) and $\mathsf{IDFace}$. Vertical and horizontal lines indicate the thresholds for each setting.}
    \label{fig:odd_phenomenon}
\end{figure}

\noindent \textbf{Investigating an Odd Phenomenon: Observing Changes on the Decision of the FR Model.}
Although the above-mentioned analysis gives an insight about the effect of the transformation nearby the threshold, still this does not give a direct explanation on the accuracy drop and is insufficient to explain an unexpected and rather odd phenomenon across Tab.~\ref{fig:total_table1},\ref{fig:total_table2}, and Tab.~\textcolor{iccvblue}{3} in the main text: $\mathsf{IDFace}$ achieved slightly better accuracy than the plain model for some settings, and the amount of accuracy change becomes larger for low FPIR settings, e.g., FPIR-1e-3 or 1e-4. To complement this, we investigated the occurrence of the ``bad" events that would change the decision of the FR model after transformation. More precisely, we can consider two cases, (1) the pairs of faces from the same identity are no longer recognized as the same person, or (2) those pairs from different identities become recognized as the same person. 

To this end, we tracked the cosine similarity scores from each pair of faces before and after transformation. We considered all the pairs of facial images that appeared during IJB-C identification benchmark procedure for each gallery (G1 and G2). The total numbers of pairs made from each gallery are 34,718,796 and 34,464,087, respectively. We used the AdaFace-IResNet101~\cite{kim2022adaface} as a feature extractor and tested $\mathsf{IDFace}$ with parameters $(\alpha, \beta) = (341,341)$ and $(63,63)$, where we can observe that both the increase and decrease of accuracy at the same parameter but different levels of FPIRs.

We visualized the data through the scatter plot about the scores of pairs after transformation ($y$-axis) over those before transformation ($x$-axis), and the result is provided in Fig.~\ref{fig:odd_phenomenon}. We also depict the threshold from the benchmark before/after transformation by drawing horizontal and vertical solid lines. From this figure, we can observe that the points in the 2nd and 4th quadrants correspond to the case that the FR model gives a different decision because of the transformation. More precisely, points in the 2nd quadrant were recognized as the different person but not after transformation, and vice versa. Since FPIR is fixed for each threshold, we can ensure that points in the 2nd quadrant contribute to increasing the accuracy after transformation, and vice versa.

\begin{table}[t]
    \centering
    \begin{subtable}{.38\textwidth}
    \resizebox{\textwidth}{!}{
    \begin{tabular}{c|c|c|c|c}
        \hline
            & FPIR & Settings & TPIR(\%) & Thres. \\ \hline
         \multirow{4}{*}{G1} & \multirow{2}{*}{1e-2} & Plain & 97.05 & 0.3649  \\ \cline{3-5}
         & & $\mathsf{IDFace}$ & 96.69 & 0.3079 \\ \cline{2-5}
         & \multirow{2}{*}{1e-4} & Plain & 51.68 & 0.7888 \\ \cline{3-5}
         & & $\mathsf{IDFace}$ & 58.13 & 0.6539 \\ \hline
         \multirow{4}{*}{G2} & \multirow{2}{*}{1e-2} & Plain & 95.78 & 0.3389 \\ \cline{3-5}
         & & $\mathsf{IDFace}$ & 95.22 & 0.2962 \\ \cline{2-5}
         & \multirow{2}{*}{1e-4} & Plain & 79.52 & 0.6282 \\ \cline{3-5}
         & & $\mathsf{IDFace}$ & 78.63 & 0.5249 \\ \hline
         \multirow{4}{*}{Avg.} & \multirow{2}{*}{1e-2} & Plain & 96.41 & \multirow{4}{*}{N/A} \\ \cline{3-4}
         & & $\mathsf{IDFace}$ & 95.95 & \\ \cline{2-4}
         & \multirow{2}{*}{1e-4} & Plain & 65.60 & \\ \cline{3-4}
         & & $\mathsf{IDFace}$ & 68.38 &  \\ \hline
    \end{tabular}    
    }
    \caption{Plain vs. IDFace, $(\alpha, \beta) = (341,341)$.}
    \end{subtable}
    \qquad\quad
    \begin{subtable}{.38\textwidth}
    \resizebox{\textwidth}{!}{
    \begin{tabular}{c|c|c|c|c}
        \hline
            & FPIR & Settings & TPIR(\%) & Thres. \\ \hline
         \multirow{4}{*}{G1} & \multirow{2}{*}{1e-2} & Plain & 97.05 & 0.3649  \\ \cline{3-5}
         & & $\mathsf{IDFace}$ & 93.80 & 0.2381 \\ \cline{2-5}
         & \multirow{2}{*}{1e-3} & Plain & 80.00 & 0.6578 \\ \cline{3-5}
         & & $\mathsf{IDFace}$ & 84.56 & 0.3333 \\ \hline
         \multirow{4}{*}{G2} & \multirow{2}{*}{1e-2} & Plain & 95.78 & 0.3389 \\ \cline{3-5}
         & & $\mathsf{IDFace}$ & 91.95 & 0.2222 \\ \cline{2-5}
         & \multirow{2}{*}{1e-3} & Plain & 91.63 & 0.4756 \\ \cline{3-5}
         & & $\mathsf{IDFace}$ & 87.86 & 0.2698 \\ \hline
         \multirow{4}{*}{Avg.} & \multirow{2}{*}{1e-2} & Plain & 96.41 & \multirow{4}{*}{N/A} \\ \cline{3-4}
         & & $\mathsf{IDFace}$ & 92.87 & \\ \cline{2-4}
         & \multirow{2}{*}{1e-3} & Plain & 85.81 & \\ \cline{3-4}
         & & $\mathsf{IDFace}$ & 86.21 &  \\ \hline
    \end{tabular}
    }
    \caption{Plain vs. IDFace, $(\alpha, \beta) = (63,63)$.}
    \end{subtable}    
    \caption{Evaluation results of IJB-C identification benchmark with respect to galleries. We reported results from the same setting as in Fig.~\ref{fig:odd_phenomenon}. G1, G2, and Avg. indicate gallery 1, gallery 2, and average value of TPIR from each gallery, respectively.}
    \vspace{-3mm}
    \label{tab:odd_phenomenon_data}
\end{table}

From these figures, first we can observe that at the left hand side (Gallery 1) of Fig.~\ref{fig:odd_phenomenon_a}, we can observe that more samples lie in the 2nd quadrant than the 4th quadrant when FPIR=1e-4, whereas these quantities are almost the same in FPIR=1e-2. On the other hand, on the right hand side (Gallery 2), the number of samples in the 2nd and 4th quadrants is similar when FPIR=1e-4, whereas there are more samples in 4th quadrant when FPIR=1e-2. We figured out that this corresponds to the benchmark result provided in Tab.~\ref{tab:odd_phenomenon_data}. We can observe that in gallery 1, there is a huge increase on TPIR (6.45\%) after transformation when FPIR=1e-4, whereas there is a slight degradation on TPIR (0.36\%) when FPIR=1e-2. Both results were expected from Fig.~\ref{fig:odd_phenomenon_a}. We can interpret the result in $(\alpha,\beta)=(63,63)$ in a similar way.

Along with the above-mentioned analysis based on ``bad" events, we can also observe that larger variations of the accuracy appear when FPIR is low. We guess that this coincides with our analysis on the distance preservation property of the proposed transformation. More precisely, the acceptance threshold in low FPIR settings becomes tighter, \textit{i.e.}, becomes larger in our setting compared to relatively high FPIR settings, and as we observed in Figure~\ref{fig:kirruk}, $\mathsf{IDFace}$ would affect the distance relationship more at this region. Concretely, in the IJB-C benchmark for identification, the threshold for FRIR=1e-4 is set to 0.7888, whereas that for FPIR=1e-2 is set to 0.3649 in gallery 1 without transformation. Thus, we can expect that the accuracy more fluctuates in a stochastic way as the threshold becomes tighter, and this coincides with our analysis results.

We remark that although our analysis provides details about how the proposed transformation affects the accuracy of the FR model, this does not tell us why such a phenomenon \textit{should} occur. At this moment, we guess that our results are merely (un)fortunate coincidences on ``bad" events. We leave detailed investigations about the phenomenon, e.g., clarifying what conditions make ``bad" events happen, as interesting future work.

\section{Viewing Almost-Isometric Transformation as a Ternary Quantization Method}
As mentioned in Section~\textcolor{iccvblue}{4}, one may regard our almost-isometric transformation as a ternary quantization tailored for real-valued unit vectors. In fact, the quantization methods themselves are frequently appearing in several realms of deep learning literature, especially for reducing the computational cost of doing expensive operations~\cite{zhu2017trained, li2021trq, ma2024era, hubara2018quantized}, \textit{e.g.}, matrix multiplication. For this reason, employing quantization methods for improving efficiency would seem to be rather natural. Nevertheless, we found that many previous HE-based BTP constructions~\cite{engelsma2019learning, osorio2022stable, boddeti2018secure, Bausp2022improved, engelsma2022hers, huang2023efficient, yukun2017secure, meng2022towards} regarded quantization methods as a way of fitting the real-valued templates to discrete spaces where the cryptographic tools are defined, \textit{e.g.}, finite fields. In addition, analysis of quantization methods in terms of distance preservation has not been considered in their works. To our knowledge, ours is the first attempt to exploit the ternary quantization method for designing an efficient HE-based BTP, along with a theoretical guarantee on the distance preservation.

Of course, the proposed almost-isometric transformation may be replaced to other ternary quantization methods. However, when considering our application scenario, we found that the quantization method would satisfy some desirable properties tailored for $\mathsf{IDFace}$. 
First, it would be beneficial if the quantization method is independent of the distribution of the input data. If not the quantization method does depend on the distribution of the face templates, which is determined by the choice of feature extractor, then the resulting BTP scheme becomes no longer applicable in a plug-and-play manner; additional processes such as training would be required for the quantization method.
In fact, several quantization methods, \textit{e.g.}, product quantization~\cite{jegou2010product}, entropy-based quantization~\cite{kohavi1996error}, or quantization methods for the weight of neural networks~\cite{zhu2017trained, li2021trq, ma2024era, hubara2018quantized}, exploit the distribution of input data or require additional training; so they do not satisfy this property.

In addition, to ensure the correct computation of the cosine similarity, we recommend that the nonzero elements of each quantized vector would not depend on the input vector. Recall that when we compute the cosine similarity between two quantized vectors $T_{\alpha}(\mathbf{x})$ and $T_{\beta}(\mathbf{y})$, we omitted the normalization of each vector. In our almost-isometric transformation, this does not affect the correctness of the cosine similarity value because the $\ell_{2}$ norm of $T_{\alpha}(\mathbf{x})$ and $T_{\beta}(\mathbf{y})$ are always $\sqrt{\alpha}$ and $\sqrt{\beta}$, respectively, regardless of the input vectors $\mathbf{x}$ and $\mathbf{y}$. That is, the cosine similarity between $T_{\alpha}(\mathbf{x})$ and $T_{\beta}(\mathbf{y})$ can be viewed as the inner product $\langle T_{\alpha}(\mathbf{x}), T_{\beta}(\mathbf{y}) \rangle$ with a constant factor $\frac{1}{\sqrt{\alpha \beta}}$. 
Of course, one may mitigate this issue by post-processing the output of the quantization method or slightly tweak the proposed $\mathsf{IDFace}$. Nevertheless, ternary quantization methods without this property, \textit{e.g.}, setting intervals for assigning the corresponding value for each component such as equal-width or equal-probability quantization methods~\cite{lim2011analysis, drozdowski2018bench}, would not be employed for $\mathsf{IDFace}$ \textit{as is}.

By reflecting these properties, few quantization methods remain. Especially, these methods can be classified via the number of nonzero elements that appeared in the quantized vector. At this moment, we now introduce our rationale to select our almost-isometric transformation. Recall that to minimize the change of cosine similarity carried by the quantization, as seen in Section~\ref{sec:supp_B}, it is beneficial to make the number of possible quantization results as large as possible. Since the proposed transformation fully utilizes the space of $\mathcal{Z}_{\alpha}$, along with a mathematical proof that it is indeed almost-isometric regardless of the input data, we selected it for our $\mathsf{IDFace}$. We leave investigations and analyses on other quantization methods satisfying those two requirements as future work.

\section{Omitted Algorithms and Full Description of \textsf{IDFace}}\label{sec:supp_C}

In this section, we provide full descriptions of omitted algorithms and the proposed \textsf{IDFace} for the sake of completeness.

\subsection{Omitted Algorithms}\label{sec:supp_C_1}

We first provide the full algorithms of the almost-isometric transformation~($T_{\alpha}$ in Section~\textcolor{iccvblue}{4.1}) and the space-efficient encoding~(\textsf{Encode} and \textsf{Decode} in Section~\textcolor{iccvblue}{4.2}), each of which is described in Algorithm~\ref{alg:transf}, \ref{alg:encode} and \ref{alg:decode}, respectively. Recall that $\mathcal{Z}_{\alpha}^{d}$ is a set of ternary vectors of length $d$ with $\alpha$ nonzero components whose components are one of $\{0, \pm 1\}$.

\begin{figure}[h]
\vspace{-5mm}
\begin{algorithm}[H]
\renewcommand{\algorithmicensure}{\textbf{Functionality:}}
\caption{$\mathsf{Transformation}$ $T_{\alpha}$}\label{alg:transf}
\begin{algorithmic}[1]
\ENSURE $T_{\alpha}:\mathbb{S}^{d-1} \rightarrow \mathcal{Z}^{d}_{\alpha}$
\REQUIRE $\mathbf{x} = (x_{1}, \dots, x_{d}) \in \mathbb{S}^{d-1}$ and $\alpha \in [d]$, 
\STATE Find a set $J\subset[d]$ of $\alpha$ indices such that for any $j\in J$ and $k\not\in J$, $|x_j|\geq|x_k|$.
\STATE For all $j\in[d]$, set $z_j=\left\{
\begin{array}{cl}
\frac{x_{j}}{ |x_{j}|  }&\text{if } j\in J\\
0&\text{if } j\not\in J
\end{array}\right.$
\RETURN $\mathbf{z}=(z_{1},\dots,z_{d})\in \mathcal{Z}^{d}_{\alpha}$
\end{algorithmic}
\end{algorithm}
\vspace{-5mm}
\end{figure}

\begin{figure}[h]
\vspace{-5mm}
\begin{algorithm}[H]
\caption{$\mathsf{Encode}$}\label{alg:encode}
\begin{algorithmic}[1]
\REQUIRE $\mathbf{x}_{1},\dots,\mathbf{x}_{m} \in \mathcal{Z}_{\alpha}^{d}$ and $p \in \mathbb{N}$ such that $ p \ge \alpha$
\STATE For $i\in[m]$, compute $\mathbf{x}_i^{+} = (|\mathbf{x}_i|+\mathbf{x}_i)/2$ and $\mathbf{x}_i^{-} = (|\mathbf{x}_i|-\mathbf{x}_i)/2$
\STATE For $*\in\{+,-\}$, compute $\mathbf{x}^{*}=\sum_{i=1}^{m=1}p^{i-1} \cdot \mathbf{x}_{i}^*$
\RETURN ($\mathbf{x}^+,\mathbf{x}^-$).
\end{algorithmic}
\end{algorithm}
\vspace{-5mm}
\end{figure}

\begin{figure}[h]
\vspace{-5mm}
\begin{algorithm}[H]
\caption{$\mathsf{Decode}$}\label{alg:decode}
\begin{algorithmic}[1]
\REQUIRE $(\mathbf{x}^{+}, \mathbf{x}^{-}) \in \mathbb{N}^{d} \times \mathbb{N}^{d}$ and $p \in \mathbb{N}$
\STATE For $i$ from $1$ to $m$ do:
\STATE \qquad For $* \in \{+, -\}$, compute $\mathbf{s}_{i}^{\ast} \gets \mathbf{x}^{*} \pmod p$
\STATE \qquad Compute $\mathbf{s}_{i} \gets \mathbf{s}_{i}^{+} - \mathbf{s}_{i}^{-}$
\STATE \qquad For $* \in \{+, -\}$, update $\mathbf{x}^{*} \gets \frac{\mathbf{x}^{*} - \mathbf{s}_{i}^{*}}{p}$
\RETURN $(\mathbf{s}_{1}, \dots, \mathbf{s}_{m})$
\end{algorithmic}
\end{algorithm}
\vspace{-5mm}
\end{figure}

In addition, we provide the full algorithms of our improved database encryption scheme $(\mathsf{IDFace.ENC}_{\mathtt{DB}}, \mathsf{IDFace.IP}_{\mathtt{DB}})$ in Algorithm~\ref{alg7:enrolltotal} and~\ref{alg7:idsub}, respectively, which were introduced in Section~\textcolor{iccvblue}{4.3}.

\begin{figure*}[h]
\begin{minipage}[t]{.49\linewidth}
\begin{algorithm}[H]
\caption{$\mathsf{IDFace.ENC}_{\mathtt{DB}}$ }\label{alg7:enrolltotal}
\begin{algorithmic}[1]
\REQUIRE $\mathbf{X} \in \mathbb{R}^{mN \times d}$, $\alpha \in [d]$, and $\mathsf{pk}$
\STATE Parse $\mathbf{X}$ as $[ \mathbf{x}_{1},\dots,\mathbf{x}_{mN} ]^{T}$
\STATE For $i\in [mN]$, compute $\mathbf{z}_{i} =T_{\alpha}(\mathbf{x}_{i})$
\STATE For $i\in [N]$, compute \\
$(\mathbf{x}^{+}_{i},\mathbf{x}^{-}_{i})\hspace{-0.1cm} \leftarrow \mathsf{Encode}(\mathbf{z}_{(i-1)m+1}\ldots,\mathbf{z}_{im})$
\STATE For $*\in\{+,-\}$, set $\mathbf{X}^{*} = [\mathbf{x}^{*}_{1},\dots,\mathbf{x}^{*}_{N}]^{T}$.
\STATE For $*\in\{+,-\}$, set $\mathcal{C}^{*} \gets \mathsf{ENC}_{\mathtt{DB}}(\mathbf{X}^{*},\mathsf{pk})$
\RETURN $\mathcal{C}$ = ($\mathcal{C}^{+}, \mathcal{C}^{-}$)
\end{algorithmic}
\end{algorithm}
\end{minipage}
\hfill
\begin{minipage}[t]{.49\linewidth}
\begin{algorithm}[H]
\caption{$\mathsf{IDFace.IP}_{\mathtt{DB}}$}\label{alg7:idsub}
\begin{algorithmic}[1]
\REQUIRE $\mathbf{y} \in \mathbb{R}^{d}$, a set $\mathcal{C}$, and $\beta \in [d]$
\STATE $\mathbf{z}\in \{-1,0,1 \}^{d} \gets T_{\beta}(\mathbf{y})$
\STATE Compute $\mathbf{z}^{+} = (|\mathbf{z}|+\mathbf{z})/2$ 
\\ $\hspace{0.5cm}$and\ \ $\mathbf{z}^{-} = (|\mathbf{z}|-\mathbf{z})/2$. 
\STATE Parse $(\mathcal{C}^{+}, \mathcal{C}^{-})$ as $\mathcal{C}$ and compute \\
$\hspace{0.5cm}$ $\mathbf{ct}^{+} =\mathsf{IP}_{\mathtt{DB}}(\mathbf{z}^{+},\mathcal{C}^{+}) \oplus \mathsf{IP}_{\mathtt{DB}}(\mathbf{z}^{-},\mathcal{C}^{-})$, \\
$\hspace{0.5cm}$ $\mathbf{ct}^{-} =\mathsf{IP}_{\mathtt{DB}}(\mathbf{z}^{+},\mathcal{C}^{-}) \oplus \mathsf{IP}_{\mathtt{DB}}(\mathbf{z}^{-},\mathcal{C}^{+})$ 
\RETURN $(\mathbf{ct}^{+}$ , $\mathbf{ct}^{-})$
\end{algorithmic}
\end{algorithm}
\end{minipage}
\vspace{-4mm}
\end{figure*}

\subsection{Full Description of \textsf{IDFace}}\label{sec:supp_C_2}
We now provide the full description of \textsf{IDFace}. Recall that \textsf{IDFace} is an instantiation of the biometric identification scheme from the proposed database encryption scheme $(\mathsf{IDFace.ENC}_{\db}, \mathsf{IDFace.IP_{\db}})$. Although its generic construction was already given in Section~\textcolor{iccvblue}{4.3}, we give a detailed description for the sake of completeness. Here, $\textsf{DEC}$ denotes the decryption algorithm of the underlying AHE, and $(\mathsf{pk}, \mathsf{sk})$ is a pair of the public key and secret key of it. In addition, we consider the database $\db$ as a list, which is initialized as an emptyset. The description of \textsf{IDFace} is given in Fig.~\ref{fig:idface_full}.

\begin{figure}[h]
\begin{center}
\fbox{
\parbox[h]{0.9\linewidth}
{
\centering
\begin{description}
\item \begin{center}$\mathsf{Enroll}$\end{center}
\vspace*{1mm}
\item \hspace*{0.0cm}Parameter: $\alpha \in [d]$ and a bound $p\ge \alpha$.
\item \hspace*{0.0cm}$\mathcal{S}_{local}$'s input: $\mathbf{X} \in \mathbb{R}^{mN \times d}$, $\mathtt{ID}$, $\db$, and $\mathsf{pk}$\vspace{-1mm}

\hspace*{-1cm}
\rule{.9\textwidth}{0.1mm}

\item \hspace*{-0.15cm}\fbox{$\mathcal{S}_{local}$} : $\mathcal{C} \gets \mathsf{IDFace.ENC}_{\mathtt{DB}}(X, \alpha, \mathsf{pk})$

\vspace*{1mm}

\item \hspace*{-0.15cm}\fbox{$\mathcal{S}_{local}$} :
Add ($\mathtt{ID}, \mathcal{C}$) on the bottom row of $\db$\vspace{-1mm}

\vspace*{1mm}

\hspace*{-1.15cm}
\rule{.923\textwidth}{0.2mm}

%
%
\item \begin{center}$\mathsf{Identify}$\end{center}
\vspace*{1mm}
\item \hspace*{0.0cm}Parameter: $\beta \in [d]$, a threshold $\tau$, and a bound $p$.
\item \hspace*{0.0cm}$\mathcal{S}_{local}$'s input: $\mathbf{y} \in \mathbb{R}^{d}$ and $\db$ of size $D$
\item \hspace*{0.0cm}$\mathcal{S}_{key}$'s input: secret key $\mathsf{sk}$\vspace{-1mm}

\hspace*{-1cm}
\rule{.9\textwidth}{0.1mm}
%
\item \hspace*{-0.15cm}\fbox{$\mathcal{S}_{local}$} : For $(\mathtt{ID}_{i}, \mathcal{C}_{i}) \in \db$, compute $(\mathbf{ct}_{i}^{+}, \mathbf{ct}_{i}^{-}) \leftarrow \mathsf{IDFace.IP}_{\mathtt{DB}}(\mathbf{y}, \mathcal{C}_{i}, \beta)$
\vspace*{1mm}
\item \hspace*{-0.15cm}\fbox{$\mathcal{S}_{local}\rightarrow\mathcal{S}_{key}$} : Send $(\mathbf{ct}_{i}^{+}, \mathbf{ct}_{i}^{-})_{i=1}^D$.
\vspace*{1mm}
\item \hspace*{-0.15cm}\fbox{$\mathcal{S}_{key}$} : For $i\in[D]$ and $*\in\{+,-\}$ do:
\\
\hspace*{.75cm}$\mathbf{pt}_{i}^{*}\leftarrow \mathsf{DEC_{sk}}(\mathbf{ct}_{i}^{*})_{i=1}^D$
\vspace*{1mm}
\item \hspace*{-0.15cm}\fbox{$\mathcal{S}_{local}$} : For $i\in[D]$ do:\\
\hspace*{0.75cm}For $*\in\{+,-\}$, parse $\mathbf{pt}_{i}^{*}$ as $(s^{*}_{ij})_{j=1}^{N}$.\\
\hspace*{0.75cm}For $j\in [N]$, compute $({s}_{ijk})_{k=1}^{m} \gets \mathsf{Decode}(s^{+}_{ij}, s^{-}_{ij}, p)$. 
\\
\hspace*{0.15cm}If $\max_{i,j,k}\{s_{ijk}\}>\tau$: \\
\vspace{1mm}
\hspace*{0.75cm}Set $idx \gets (i, mj + k - 1)$. \\
\vspace{1mm}
\hspace*{0.15cm}Otherwise, Set $idx \gets \bot$.
\vspace*{1mm}
\item \hspace*{-0.15cm}\fbox{$\mathcal{S}_{key}\rightarrow\mathcal{S}_{local}$} : Send $idx$.
\\
\item \hspace*{-0.15cm}\fbox{$\mathcal{S}_{local}$} : If $idx \neq \bot$: \\
\vspace{1mm}
\hspace*{0.75cm}Parse $idx \rightarrow (r, s)$ and \textbf{return} $\mathtt{ID}_{r}[s]$. \\
\vspace{1mm}
\hspace*{0.15cm}Otherwise, \textbf{return} "reject".
\end{description}
}
}
\vspace{-5mm}
\end{center}\caption{Full description of $\mathsf{IDFace}$}\label{fig:idface_full}
\vspace{-3mm}
\end{figure}

\section{Secure Two-Party Computation-based Variant of \textsf{IDFace}}\label{sec:supp_E}

Recall that the proposed transformation is independent of the HE, so that it can be utilized to speed up secure face identification protocols from other cryptographic tools. As an example of this, we provide another face identification protocol called $\mathsf{2PCFace}$ from a two-party computation-based approach.

The basic idea of designing $\mathsf{2PCFace}$ is as follows: For two binary vectors $\mathbf{x}, \mathbf{y} \in \{0, 1\}^{d}$, their inner product can be computed as $\langle \mathbf{x}, \mathbf{y} \rangle = \mathcal{HW}(\mathbf{x} \land \mathbf{y})$, where $\mathcal{HW}$ denotes the Hamming weight of the given binary vector, and $\land$ is a component-wise AND operation. If we consider an additive share $(\mathbf{x}_{1}, \mathbf{x}_{2})$ of $x$, which means $\mathbf{x}_{1} \oplus \mathbf{x}_{2} = \mathbf{x}$ for component-wise XOR operator $\oplus$, we have that $\langle \mathbf{x}, \mathbf{y} \rangle = \mathcal{HW}((\mathbf{x}_{1} \land \mathbf{y}) \oplus (\mathbf{x}_{2} \land \mathbf{y}))$. From this, we can construct an identification protocol with two servers, $\mathcal{S}_{1}$ and $\mathcal{S}_{2}$, where each server stores the enrolled templates after transformation in the form of an additive share.
After receiving the transformed template $\mathbf{y}$, the matching score between the enrolled templates can be obtained by first calculating $\mathbf{x}_{1} \land \mathbf{y}$ and $\mathbf{x}_{2} \land \mathbf{y}$ locally for each server, sharing the result between servers, and finally calculating $\mathcal{HW}((\mathbf{x}_{1} \land \mathbf{y}) \oplus (\mathbf{x}_{2} \land \mathbf{y}))$ using the shared value. 
Although the output of almost-isometric transformation is a ternary vector comprised of $\{-1, 0, 1\}$, we can decompose it into the subtraction of two binary vectors by the sign of each component, as we did in the space-efficient encoding.

For a precise description, we introduce a helper algorithm called $\mathsf{GenShare}$, which will be a subroutine of $\mathsf{2PCFace}$. $\mathsf{GenShare}$ takes a template $\mathbf{x} \in \mathbb{R}^{mN \times d}$ and a transformation parameter $\alpha$ as inputs, returning additive shares of the positive components and negative components of $T_{\alpha}(\mathbf{x})$, respectively. The formal description of $\mathsf{Share}$ is provided in Algorithm~\ref{alg:genshare}.
Using $\mathsf{GenShare}$ with the above idea to compute matching scores, we describe the full protocol of $\mathsf{2PCFace}$ in Figure~\ref{fig:2pc_id}.

\begin{algorithm}[h]
\caption{$\mathsf{GenShare}$}
\label{alg:genshare}
\begin{algorithmic}[1]
\REQUIRE A template $\mathbf{x} \in \mathbb{R}^{d}$, and a parameter $\alpha$
\ENSURE $\mathbf{z}_{1}^{+}, \mathbf{z}_{1}^{-1}, \mathbf{z}_{2}^{+}, \mathbf{z}_{2}^{-} \in \{0, 1\}^{d}$ such that $(\mathbf{z}_{1}^{+} \oplus \mathbf{z}_{2}^{+}) - (\mathbf{z}_{1}^{-} \oplus \mathbf{z}_{2}^{-}) = T_{\alpha}(\mathbf{x})$
\STATE Compute $\mathbf{z} \gets T_{\alpha}(\mathbf{x})$ and set $\mathbf{z}^{+} \gets (|\mathbf{z}| + \mathbf{z})/2$ and $\mathbf{z}^{-} \gets (|\mathbf{z}| - \mathbf{z})/2$.
\STATE Sample $\mathbf{z}_{1}^{+}, \mathbf{z}_{1}^{-} \xleftarrow{\$}\{0, 1\}^{d}$ and set $\mathbf{z}_{2}^{+} \gets \mathbf{z}_{1}^{+} \oplus \mathbf{z}^{+}$ and $\mathbf{z}_{2}^{-} \gets \mathbf{z}_{2}^{+} \oplus \mathbf{z}^{-}$.
\RETURN $(\mathbf{z}_{1}^{+}, \mathbf{z}_{2}^{+}, \mathbf{z}_{1}^{-}, \mathbf{z}_{2}^{-})$.
\end{algorithmic}
\end{algorithm}

\begin{figure}[t]
\begin{center}
\fbox{
\parbox[h]{0.9\linewidth}
{
\centering
\begin{description}
\item \begin{center}$\mathsf{Enroll}$\end{center}

\item \hspace*{0.0cm}Parameter: $\alpha \in [d]$
\item \hspace*{0.0cm}$\mathcal{S}_{1}$'s input: $\mathbf{x} \in \mathbb{R}^{d}$, $id$, $\db_{1}$
\item \hspace*{0.0cm}$\mathcal{S}_{2}$'s input: $\db_{2}$\vspace{-1mm}

\hspace*{-1cm}
\rule{.9\textwidth}{0.1mm}

\item \hspace*{-0.15cm}\fbox{$\mathcal{S}_{1}$} : $(\mathbf{z}_{1}^{+}, \mathbf{z}_{2}^{+}, \mathbf{z}_{1}^{-}, \mathbf{z}_{2}^{-}) \gets \mathsf{GenShare}(\mathbf{x}, \alpha)$.

\vspace*{1mm}

\item \hspace*{-0.15cm}\fbox{$\mathcal{S}_{1} \rightarrow \mathcal{S}_{2}$} : Send $(\mathbf{z}_{2}^{+}, \mathbf{z}_{2}^{-})$

\vspace*{1mm}

\item \hspace*{-0.15cm}\fbox{$\mathcal{S}_{1}, \mathcal{S}_{2}$} :
Add ($id, \mathbf{z}_{i}^{+}, \mathbf{z}_{i}^{-}$) on the bottom row of $\db_{i}$ for $i \in \{1, 2 \}$\vspace{-1mm}

\vspace*{1mm}

\hspace*{-1.15cm}
\rule{.923\textwidth}{0.2mm}

%
%
\item \begin{center}$\mathsf{Identify}$\end{center}
\item \hspace*{0.0cm}Parameter: $\beta \in [d]$, threshold $\tau$.
\item \hspace*{0.0cm}$\mathcal{S}_{1}$'s input: $\mathbf{y} \in \mathbb{R}^{d}$ and $\db_{1}$ of size $D$
\item \hspace*{0.0cm}$\mathcal{S}_{2}$'s input: $\db_{2}$ of size $D$\vspace{-1mm}

\hspace*{-1cm}
\rule{.9\textwidth}{0.1mm}

\item \hspace*{-0.15cm}\fbox{$\mathcal{S}_{1}$} : Set $\tilde{\mathbf{y}} \gets T_{\beta}(\mathbf{y})$, $\mathbf{y}^{+} \gets (|\tilde{\mathbf{y}}| + \tilde{\mathbf{y}})/2$ and $\mathbf{y}^{-} \gets (|\tilde{\mathbf{y}}| - \tilde{\mathbf{y}})/2$.
\vspace*{1mm}
\item \hspace*{-0.15cm}\fbox{$\mathcal{S}_{1} \rightarrow \mathcal{S}_{2}$} : Send $(\mathbf{y}^{+}, \mathbf{y}^{-})$
\vspace*{1mm}
\item \hspace*{-0.15cm}\fbox{$\mathcal{S}_{1}, \mathcal{S}_{2}$} : For $(id_{i}, \mathbf{z}_{i,j}^{+}, \mathbf{z}_{i,j}^{-}) \in \db_{j}$ do ($j \in \{1, 2\}$):
\\
\hspace*{.5cm} $\mathbf{p}^{(1)}_{i,j} \gets \mathbf{z}_{i,j}^{+} \land \mathbf{y}^{+}$, $\mathbf{p}^{(2)}_{i, j} \gets \mathbf{z}_{i, j}^{-} \land \mathbf{y}^{-}$, \\
\hspace*{.5cm} $\mathbf{m}^{(1)}_{i,j} \gets \mathbf{z}_{i,j}^{+} \land \mathbf{y}^{-}$, and $\mathbf{m}^{(2)}_{i, j} \gets \mathbf{z}_{i, j}^{-} \land \mathbf{y}^{+}$
\vspace*{1mm}
\item \hspace*{-0.15cm}\fbox{$\mathcal{S}_{2}\rightarrow\mathsf{S}_{1}$} : Send $(\mathbf{p}^{(1)}_{i,2}, \mathbf{p}^{(2)}_{i,2}, \mathbf{m}^{(1)}_{i,2}, \mathbf{m}^{(2)}_{i,2})_{i=1}^D$
\vspace*{1mm}
\item \hspace*{-0.15cm}\fbox{$\mathcal{S}_{1}$} : For $i\in[D]$ do:\\
\hspace*{.15cm} Set $s_{i} \gets (\mathcal{HW}(\mathbf{p}^{(1)}_{i, 1} \oplus \mathbf{p}^{(1)}_{i, 2}) + \mathcal{HW}(\mathbf{p}^{(2)}_{i, 1} \oplus \mathbf{p}^{(2)}_{i, 2}))$ \\
\hspace*{1.25cm} $- (\mathcal{HW}(\mathbf{m}^{(1)}_{i, 1} \oplus \mathbf{m}^{(1)}_{i, 2}) + \mathcal{HW}(\mathbf{m}^{(2)}_{i, 1} \oplus \mathbf{m}^{(2)}_{i, 2}))$ \\
\hspace*{.25cm}If $\max_{i}\{s_{i}\}>\tau$:\\
\vspace{1mm}
\hspace*{1cm} \textbf{return} the corresponding $id_{i}$.
\\
\vspace{1mm}
\hspace*{.25cm}Otherwise, \textbf{return} $\bot$
\end{description}
}
}
\end{center}
\vspace{-3mm}
\caption{Full description of $\mathsf{2PCFace}$.}
\label{fig:2pc_id}
\vspace{-5mm}
\end{figure}

Since component-wise AND operations can be seen as look-up operations, we can utilize the same efficiency-accuracy trade-off trick in $\mathsf{2PCFace}$ by employing different transformation parameters $(\alpha, \beta)$ in $\mathsf{Enroll}$ and $\mathsf{Identify}$, respectively. More precisely, during identification, we can treat $\mathbf{z}^{\dagger}_{i,j} \land \mathbf{y}^{\ast}$ for $\dagger, \ast \in \{+, - \}$ as look-up components of $\mathbf{z}^{\dagger}_{i,j}$ on the positions where corresponding components in $\mathbf{y}^{\ast}$ are nonzero. Thus, if we denote $\mathbf{z}^{\dagger}_{i,j}[\mathbf{y}^{\ast}]$ as a subvector of $\mathbf{z}^{\dagger}_{i,j}$ by collecting the positions where corresponding $\mathbf{y}^{\ast}$'s components are nonzero, then we can optimize the $\mathsf{Identify}$ in terms of both communication cost and computational cost by replacing each $\mathbf{p}_{i,j}^{(1)}$, $\mathbf{p}_{i,j}^{(2)}$, $\mathbf{m}_{i,j}^{(1)}$, $\mathbf{m}_{i,j}^{(2)}$ to $\mathbf{z}_{i,j}^{+}[\mathbf{y}^{+}]$, $\mathbf{z}_{i,j}^{-}[\mathbf{y}^{-}]$, $\mathbf{z}_{i,j}^{+}[\mathbf{y}^{-}]$, $\mathbf{z}_{i,j}^{-}[\mathbf{y}^{+}]$, respectively. Note that the remaining operations are still well-defined because each corresponding binary vector computed by each server would be the same length.

Compared to HE-based approaches, including $\mathsf{IDFace}$, $\mathsf{2PCFace}$ has some advantages in computational efficiency and the underlying security assumption.
For the former, we note that each server in $\mathsf{2PCFace}$ stores bit strings of the same length as the transformed template, thus preventing the storage overhead from the encryption. In addition, the unit operation conducted by each server becomes bit-wise operations, which are considerably cheaper than homomorphic addition and (scalar) multiplication. 
For the latter, $\mathsf{2PCFace}$ does not require an assumption for securing the secret key, which is an essential requirement in HE-based approaches. In fact, the attacker cannot distinguish between the shares of enrolled templates and a random string of equal length unless the attacker steals the whole databases of both servers. Moreover, under the same attacker, one can easily check that $\mathsf{2PCFace}$ satisfies all standard security requirements for BTP: \textit{irreversibility}, \textit{revocability}, and \textit{unlinkability}.
We also remark that the amount of accuracy degradation in $\mathsf{2PCFace}$ is identical to that of $\mathsf{IDFace}$, under the same transformation parameters $(\alpha, \beta)$.

However, compared to $\mathsf{IDFace}$, $\mathsf{2PCFace}$ has some downsides on (1) its communication cost and (2) extension to the scenario with mulitple devices and servers, which is presented in Appendix~\ref{sec:supp_F}.

\;

\noindent \textbf{Comparison on the Communication Cost.}
In $\mathsf{IDFace}$, the ciphertext of whole inner product values can be computed locally, but this is impossible in $\mathsf{2PCFace}$; In this case, each server calculates only a partial part of inner product values and communicates with each other to obtain the full inner product value. That is, for one execution of $\mathsf{Identify}$, the server $\mathcal{S}_{2}$ in $\mathsf{2PCFace}$ should send the bit string of the same length as the entire database in one server, whereas $\mathsf{IDFace}$ suffices to send the encryption/decryption of the inner product value corresponding to each stored identity between servers $\mathcal{S}_{local}$ and $\mathcal{S}_{key}$. Although the cost of the former can be reduced by sending subvectors rather than the entire one, we figure out that the communication cost of $\mathsf{IDFace}$ is cheaper than $\mathsf{2PCFace}$ under the same transformation parameters.

For a concrete comparison, we provide the formula for the measuring communication cost of each protocol as follows: For simplicity, we assume that the transformation parameters are $(\alpha, \beta)$, the dimension of the template is $d$, and the number of enrolled identities is $D$.

\;

\noindent \textbf{Communication Cost of $\mathsf{IDFace}$.} We can figure out that the communication occurs when (1) $\mathcal{S}_{local}$ sends $(\mathbf{ct}^{+}_{i}, \mathbf{ct}^{-}_{i})_{i=1}^{\hat{D}}$ to $\mathcal{S}_{key}$ and (2) $\mathcal{S}_{key}$ sends $idx$ to $\mathcal{S}_{local}$, where $\hat{D} = \lceil \frac{D}{N_{s} \cdot N_{enc}} \rceil$ for the number of slots $N_{s}$ in underlying AHE and the number of templates $N_{enc}$ encoded at once. More precisely, $2\hat{D}$ ciphertexts and $1$ string representing the index are communicated. Since $\lceil \log D \rceil$ bits suffice for representing all $D$ identities, the total communication cost is $2\hat{D} \cdot S_{ct} + \log D$, where $S_{ct}$ stands for the size of ciphertext.

\;

\noindent \textbf{Communication Cost of $\mathsf{2PCFace}$.} We can figure out that the communication occurs when (1) $\mathcal{S}_{1}$ sends $(\mathbf{y}^{+}, \mathbf{y}^{-})$ to $\mathcal{S}_{2}$ and (2) $\mathcal{S}_{2}$ sends subvectors $(\mathbf{z}_{i,2}^{+}[\mathbf{y}^{+}]$, $\mathbf{z}_{i,2}^{-}[\mathbf{y}^{-}]$, $\mathbf{z}_{i,2}^{+}[\mathbf{y}^{-}]$, $\mathbf{z}_{i,2}^{-}[\mathbf{y}^{+}])_{i=1}^{D}$ to $\mathcal{S}_{1}$. The former takes $2d$ bits of communication. For the latter, we can observe that since the number of nonzero entries in $\mathbf{y}$ is $\beta$, the concatenation of these 4 vectors in the index $i$ has length $2 \beta$. Hence, the latter requires $2N\beta$ bits of communication, and the total cost is $2D\beta + 2d$ bits.

By following these formulae, we compute the exact communication cost of $\mathsf{IDFace}$ instantiated by CKKS and Paillier Cryptosystem (PC), and $\mathsf{2PCFace}$. For simplicity, we denote $\mathsf{IDFace_{CKKS}}$ and $\mathsf{IDFace_{PC}}$ for the instantiation of $\mathsf{IDFace}$ using CKKS and PC, respectively. The number of enrolled identities $D$ is set to 1 M. The corresponding $N_{s}, N_{enc}$ and $S_{ct}$ in the parameter settings of CKKS and PC used in $\mathsf{IDFace}$\footnote{The detailed parameter settings of each scheme were already provided in Section~\textcolor{iccvblue}{5}.}, along with the comparison result on the communication cost, are provided in Table~\ref{tab:comm_cost}. According to the table, the communication cost of $\mathsf{IDFace_{CKKS}}$ ($\mathsf{IDFace_{PC}}$, resp.) is about 1.84X $\sim$ 6.28X (5.13X $\sim$ 18.44X, resp.) cheaper than $\mathsf{2PCFace}$. Thus, for circumstances where the bandwidth of the communication channel is limited (\textit{e.g.,} 100MB/s or 1GB/s), deploying $\mathsf{IDFace}$ would be more favorable than $\mathsf{2PCFace}$.

\begin{table}[h]
\centering
\begin{minipage}{.4\linewidth}
    \resizebox{\linewidth}{!}{
    \begin{tabular}{c|ccc}
        \hline
        Protocols & $N_{s}$ & $N_{enc}$ & $S_{ct}$ \\ \hline
        $\mathsf{IDFace_{CKKS}}$ & 4096 & (8, 7, 5) & 132KB \\
        $\mathsf{IDFace_{PC}}$ & 1 & (341, 292, 227) & 0.5KB\\ \hline
    \end{tabular}
    }
\end{minipage}
\qquad\quad
\begin{minipage}{.4\linewidth}
    \resizebox{\linewidth}{!}{
    \begin{tabular}{c|cc|c}
        \hline
         $(\alpha, \beta)$ & $\mathsf{IDFace_{CKKS}}$  & $\mathsf{IDFace_{PC}}$ & $\mathsf{2PCFace}$ \\ \hline
         (341, 341)& 12.94MB & 4.41MB & 81.30MB  \\ 
         (341, 127)& 9.24MB & 3.43MB & 30.28MB \\ 
         (341, 63)& 8.18MB & 2.93MB & 15.02MB  \\ \hline
    \end{tabular}
    }
\end{minipage}
\vspace{-2mm}
\caption{$N_{s}$, $N_{enc}$, and $S_{ct}$ for $\mathsf{IDFace_{CKKS}}$ and $\mathsf{IDFace_{PC}}$ (Left), and comparison of communication cost required for one identification in each protocol (Right). In the column of $N_{enc}$, the tuple indicates $N_{enc}$ values for $\beta = 63, 127, 341$, respectively. The number of identities is set to 1M.}\label{tab:comm_cost}
\vspace{-5mm}
\end{table}

\section{Application of \textsf{IDFace} for Scenarios with Multiple Devices and Servers}\label{sec:supp_F}

Recall that both $\mathsf{IDFace}$ and $\mathsf{2PCFace}$ regard the scenario in which there is only a single device for recognition. We now consider an extension for this: The large-scale identification scenario where multiple recognition devices are involved. 
Although it is possible for a single server to deal with multiple devices, its computational burden becomes extremely severe if we consider lots of devices, \textit{e.g.} more than a hundred, or 1K.
As shown in some real-world applications~\cite{su2012extremely,mbss}, the computational burden for the server in this circumstance can be alleviated by employing multiple servers to distribute the \textsf{Identify} requests from the whole device. We assume that these multiple servers are well-synchronized, holding the same enrolled templates in the form of encrypted ones. Note that the stored data on each server is not necessarily identical. In fact, it is desirable to ensure that they are (computationally) indistinguishable from the random string for the sake of unlinkability.
In this paragraph, we will show that $\mathsf{IDFace}$ can be naturally extended with a negligible overhead on communication cost, whereas the extension of $\mathsf{2PCFace}$ requires relatively larger communication cost overhead proportional to the number of servers.

For a precise comparison, let us consider the case where there are $\mu$ servers, denoted by $\mathcal{S}_{1},\dots,\mathcal{S}_{\mu}$, that store enrolled templates and assume that each server handles the requests from $\nu$ devices, denoted by $\mathfrak{d}_{i, 1}, \dots, \mathfrak{d}_{i, \nu}$ for the server $\mathcal{S}_{i}$. We denote $\db_{i}$ as the database of each server, which is initialized to an empty list. In particular, for extending $\mathsf{IDFace}$, we assume that there is only one key server $\mathsf{S}_{key}$ in order to minimize the threat of the leakage of the secret key.
We will call the extension of $\mathsf{IDFace}$ ($\mathsf{2PCFace}$, resp.) as $(\mu,\nu)\text{-}\mathsf{IDFace}$ ($(\mu,\nu)\text{-}\mathsf{MPCFace}$, resp.). 
For ease of description, we assume that the device $\mathfrak{d}_{i, j}$ initially provides the input $(\mathbf{x}, id, \text{``enroll"})$ or $(\mathbf{y}, \text{"identify"})$ to the associated server $\mathcal{S}_{i}$ for $\mathsf{Enroll}$ and $\mathsf{Identify}$, respectively. In addition, we omit the threshold parameter $\tau$ for $\mathsf{Identify}$ and transformation parameters $\alpha, \beta$, which are used for $\mathsf{Enroll}$ and $\mathsf{Identify}$, respectively.

\;

\noindent \textbf{Description of $(\mu,\nu)\text{-}\mathsf{IDFace}$.} We denote $(\mathsf{pk}, \mathsf{sk})$ as the public key and secret key pair of the underlying AHE scheme. In addition, we will utilize $\mathsf{IDFace.Enroll}$ and $\mathsf{IDFace.Identify}$ as subroutines. 
Using them, we formally describe $(\mu,\nu)\text{-}\mathsf{IDFace}$ as Figure~\ref{fig:uvidface}:

\begin{figure}[t]
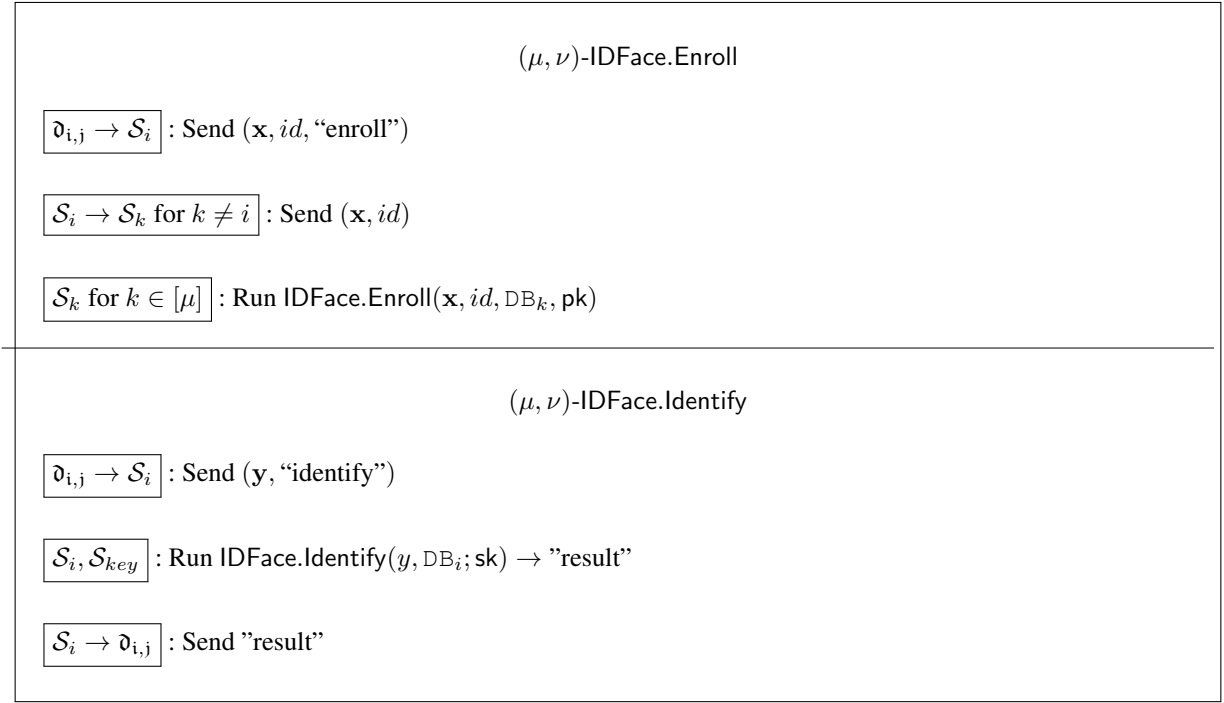

\centering
\fbox{
\parbox{.9\linewidth}{
\vspace{1mm}
\begin{description}
\item \begin{center} $(\mu,\nu)\text{-}\mathsf{IDFace.Enroll}$\end{center}
\vspace{2mm}
    \item \fbox{$\mathfrak{d_{i, j}} \rightarrow \mathcal{S}_{i}$} : Send $(\mathbf{x}, id, \text{``enroll"})$ \\
    \item \fbox{$\mathcal{S}_{i} \rightarrow \mathcal{S}_{k}$ for $ k\neq i$} : Send $(\mathbf{x}, id)$ \\
    \item \fbox{$\mathcal{S}_{k}$ for $k \in [\mu]$} : Run $\mathsf{IDFace.Enroll}(\mathbf{x}, id, \db_{k}, \mathsf{pk})$

\hspace*{-1.15cm}
\rule{.923\textwidth}{.1mm}

\vspace{1mm}

\item \begin{center}$(\mu,\nu)\text{-}\mathsf{IDFace.Identify}$\end{center}
\vspace{2mm}
    \item \fbox{$\mathfrak{d_{i, j}} \rightarrow \mathcal{S}_{i}$} : Send $(\mathbf{y}, \text{``identify"})$ \\
    \item \fbox{$\mathcal{S}_{i}, \mathcal{S}_{key}$} : Run $\mathsf{IDFace.Identify}(y, \db_{i}; \mathsf{sk}) \rightarrow \text{"result"}$ \\
    \item \fbox{$\mathcal{S}_{i} \rightarrow \mathfrak{d_{i,j}}$} : Send $\text{"result"}$
\end{description}
}}
\caption{Description of $(\mu,\nu)\text{-}\mathsf{IDFace}$}
\label{fig:uvidface}
\vspace{-3mm}
\end{figure}

\;

\noindent \textbf{Description of $(\mu, \nu)\text{-}\mathsf{MPCFace}$.}
For designing the extension, a naive approach is to divide each sever $\mathcal{S}_{i}$ into sub-servers $(\mathsf{S}_{i1}, \mathcal{S}_{i2})$ and follow the construction of $(\mu,\nu)\text{-}\mathsf{IDFace}$, using $\mathsf{2PCFace}$ as a subroutine. We will not consider this approach because, in this case, the adversary can recover the whole enrolled template by compromising only two sub-servers $\mathcal{S}_{i1}, \mathcal{S}_{i2}$. Rather, we extend $\mathsf{2PCFace}$ by the computation with $\mu$-parties by the following observation: For an additive share $(\mathbf{x}_{1}, \dots, \mathbf{x}_{\mu})$ of $\mathbf{x}$, \textit{i.e.}, $\oplus_{i=1}^{\mu}\mathbf{x}_{i} = \mathbf{x}$, we have that $\langle \mathbf{x}, \mathbf{y} \rangle = \mathcal{HW}(\oplus_{i=1}^{\mu} (\mathbf{x}_{i} \land \mathbf{y}))$. From this, we extend the phase of broadcasting $\mathbf{x}_{i} \land \mathbf{y}$ in $\mathsf{2PCFace.Identify}$ to $\mu$ parties. 

We will denote $\mathsf{GenShare}(\mathbf{x}; \mu)$ as an extension of the original $\mathsf{GenShare}$ algorithm to make an additive share $(\mathbf{z}_{1}^{\dagger}, \dots, \mathbf{z}_{\mu}^{\dagger})_{\dagger \in \{+,-\}}$ of transformed $\mathbf{x}$ for $\mu$ parties. In addition, we denote $\mathsf{DBAND}$ as an algorithm that takes a database $\db = (id_{i}, \mathbf{z}_{i}^{+},\mathbf{z}_{i}^{-})_{i=1}^{D}$ and two binary strings $\mathbf{y}^{+}, \mathbf{y}^{-} \in \{0, 1\}^{d}$ as inputs and returning tuples of sub-vectors of each $\mathbf{z}_{i}^{+}$ and $\mathbf{z}_{i}^{-}$: $(\mathbf{z}_{i}^{+}[\mathbf{y}^{+}]$, $\mathbf{z}_{i}^{-}[\mathbf{y}^{-}]$, $\mathbf{z}_{i}^{+}[\mathbf{y}^{-}]$, $\mathbf{z}_{i}^{-}[\mathbf{y}^{+}])_{i=1}^{D}$. Using them as subroutines, we formally describe $(\mu,\nu)\text{-}\mathsf{MPCFace}$ as Figure~\ref{fig:uvmpcface}:

\begin{figure}[t]
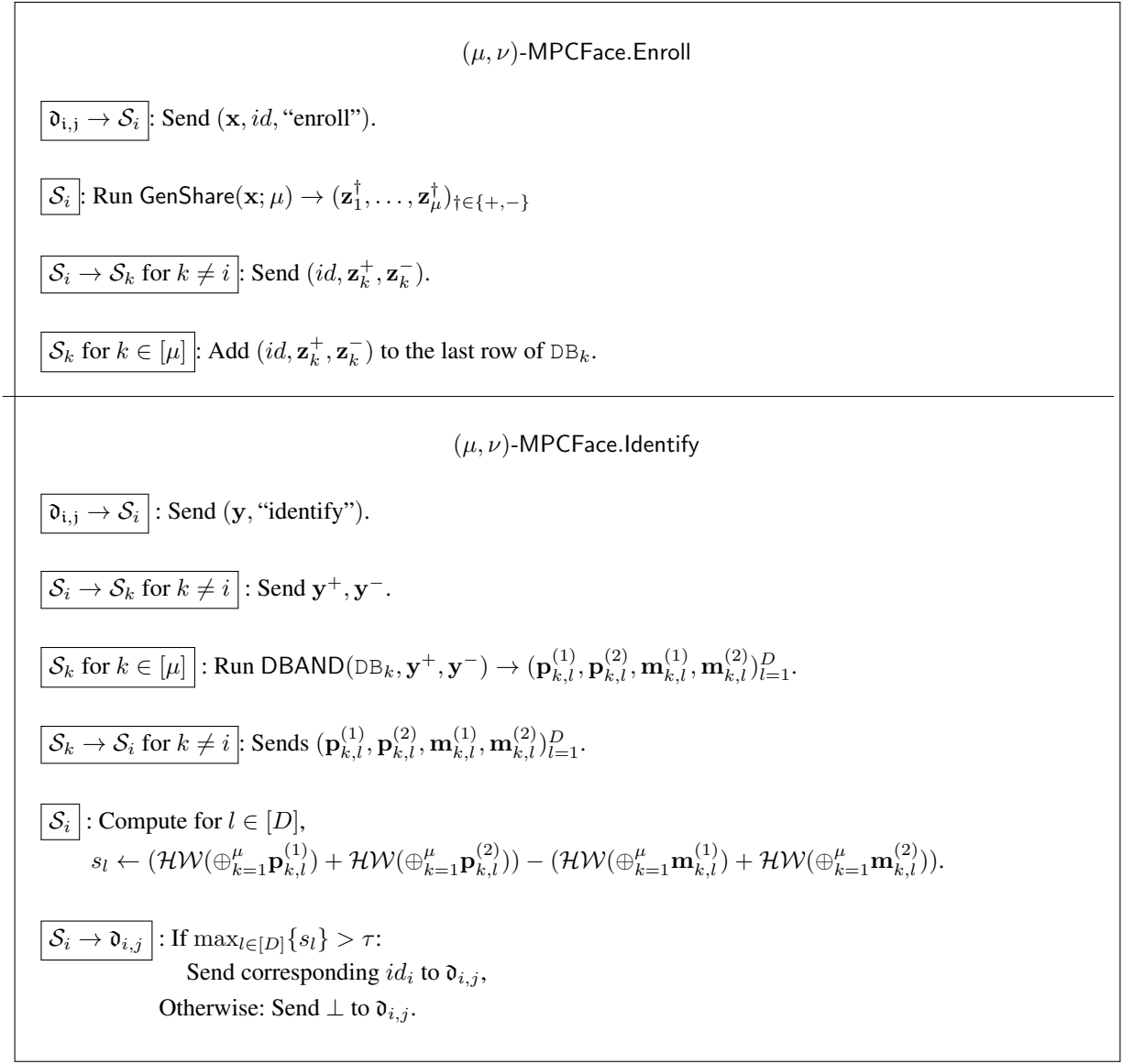

\centering
\fbox{
\parbox{.9\linewidth}{
\begin{description}
\vspace{1mm}
\item \begin{center}$(\mu,\nu)\text{-}\mathsf{MPCFace.Enroll}$\end{center}
\vspace{2mm}
    \item \fbox{$\mathfrak{d_{i, j}} \rightarrow \mathcal{S}_{i}$}: Send $(\mathbf{x}, id, \text{``enroll"})$. \\
    \item \fbox{$\mathcal{S}_{i}$}: Run $\mathsf{GenShare}(\mathbf{x}; \mu) \rightarrow (\mathbf{z}_{1}^{\dagger}, \dots, \mathbf{z}_{\mu}^{\dagger})_{\dagger \in \{+,-\}}$ \\
    \item \fbox{$\mathcal{S}_{i} \rightarrow \mathcal{S}_{k}$ for $ k\neq i$}: Send $(id, \mathbf{z}_{k}^{+}, \mathbf{z}_{k}^{-})$. \\
    \item \fbox{$\mathcal{S}_{k}$ for $k \in [\mu]$}: Add $(id, \mathbf{z}_{k}^{+}, \mathbf{z}_{k}^{-})$ to the last row of $\db_{k}$.


\hspace*{-1.15cm}
\rule{.923\textwidth}{.1mm}

\vspace{1mm}
\item \begin{center}{$(\mu,\nu)\text{-}\mathsf{MPCFace.Identify}$}\end{center}
    \vspace{2mm}
    \item \fbox{$\mathfrak{d_{i, j}} \rightarrow \mathcal{S}_{i}$} : Send $(\mathbf{y}, \text{``identify"})$. \\ 
    \item \fbox{$\mathcal{S}_{i} \rightarrow \mathcal{S}_{k}$ for $k\neq i$} : Send $\mathbf{y}^{+},\mathbf{y}^{-}$. \\
    \item \fbox{$\mathcal{S}_{k}$ for $k \in [\mu]$} : Run $\mathsf{DBAND}(\db_{k}, \mathbf{y}^{+}, \mathbf{y}^{-}) \rightarrow (\mathbf{p}_{k,l}^{(1)},\mathbf{p}_{k,l}^{(2)},\mathbf{m}_{k,l}^{(1)},\mathbf{m}_{k,l}^{(2)})_{l=1}^{D}$. \\
    \item \fbox{$\mathcal{S}_{k} \rightarrow \mathcal{S}_{i}$ for $k\neq i$}: Sends $(\mathbf{p}_{k,l}^{(1)},\mathbf{p}_{k,l}^{(2)},\mathbf{m}_{k,l}^{(1)},\mathbf{m}_{k,l}^{(2)})_{l=1}^{D}$. \\
    \item \fbox{$\mathcal{S}_{i}$} : Compute for $l \in [D]$, \\
    \hspace*{2mm}$s_{l} \gets (\mathcal{HW}(\oplus_{k=1}^{\mu}\mathbf{p}_{k,l}^{(1)}) + \mathcal{HW}(\oplus_{k=1}^{\mu}\mathbf{p}_{k,l}^{(2)}))-(\mathcal{HW}(\oplus_{k=1}^{\mu} \mathbf{m}_{k,l}^{(1)}) + \mathcal{HW}( \oplus_{k=1}^{\mu}\mathbf{m}_{k,l}^{(2)}))$. \\
    \item \fbox{$\mathcal{S}_{i}\rightarrow\mathfrak{d}_{i,j}$} : If $\max_{l \in [D]}\{s_{l} \} > \tau$: \\
    \vspace{1mm}
    \hspace*{15mm}Send corresponding $id_{i}$ to $\mathfrak{d}_{i,j}$, \\
    \vspace{1mm}
    \hspace*{11mm}Otherwise: Send $\bot$ to $\mathfrak{d}_{i,j}$.
\end{description}    
}}
\caption{Description of $(\mu, \nu)\text{-}\mathsf{MPCFace}$.}
\label{fig:uvmpcface}
\vspace{-3mm}
\end{figure}
%

\;

\noindent \textbf{Communication Overheads From Extensions.} For each construction, the advantage of $(\mu,\nu)\text{-}\mathsf{IDFace}$ over $(\mu,\nu)\text{-}\mathsf{MPCFace}$ is that the former can run $\mathsf{IDFace.Identify}$ ``locally". That is, the remaining servers do nothing during the protocol. 
For this reason, additional communication on the former occurs only for broadcasting $(\mathbf{x}, id)$ during $\mathsf{Enroll}$. Concretely, with transformation before sending, the communication cost for this is $(\mu-1)(2d + |id|)$, where $|id|$ is the bit length of $id$. We note that $|id| =\lceil \log_{2}D \rceil$ suffices to distinguish all enrolled identities, so that the asymptotic communication overhead with respect to $\mu$ and $D$ is $O(\mu\log D)$. For our experimental setting where $d=512$ and the database of size $D=1$M, $|id| = 20 > \log D$ suffices, so $(2d + |id|) \approx 0.26$KB.

\begin{table}[t]
    \centering
    \resizebox{.85\linewidth}{!}{
    \begin{tabular}{c|c|ccc|ccc}
        \hline
         \multirow{2}{*}{$\mu$} & \multirow{2}{*}{Parameters $(\alpha, \beta)$} & \multicolumn{3}{c|}{\textsf{Enroll}} & \multicolumn{3}{c}{\textsf{Identify}} \\ \cline{3-8}
            &      &   $\mathsf{2PCFace}$   & $\mathsf{MPCFace}$ & Overhead & $\mathsf{2PCFace}$ & $\mathsf{MPCFace}$ & Overhead \\ \hline
        \multirow{3}{*}{3} & (341,341)  & \multirow{3}{*}{0.13KB}  & \multirow{3}{*}{0.26KB} & \multirow{3}{*}{+0.13KB} & 81.30MB & 162.60MB & +81.30MB   \\
                            & (341,127) & & &  & 30.28MB & 60.56MB & +30.28MB  \\
                            & (341,63) & & & & 15.02MB & 30.04MB & +15.02MB  \\ \hline
        \multirow{3}{*}{6} & (341,341)  & \multirow{3}{*}{0.13KB}  & \multirow{3}{*}{0.64KB} & \multirow{3}{*}{+0.51KB} & 81.30MB & 406.50MB & +325.20MB  \\
                            & (341,127) & & & & 30.28MB & 151.39MB & +121.11MB  \\
                            & (341,63) & & & & 15.02MB & 75.10MB & +60.08MB   \\ \hline
        \multirow{3}{*}{11} & (341,341)  & \multirow{3}{*}{0.13KB}  & \multirow{3}{*}{1.28KB} & \multirow{3}{*}{+1.15KB} & 81.30MB & 813.01MB & +731.71MB  \\
                            & (341,127) & & &  & 30.28MB & 302.79MB & +272.51MB  \\
                            & (341,63) & & & & 15.02MB & 150.20MB & +135.18MB  \\ \hline
        \multirow{3}{*}{21} & (341,341)  & \multirow{3}{*}{0.13KB}  & \multirow{3}{*}{2.55KB} & \multirow{3}{*}{+2.42KB} & 81.30MB & 1626.02MB & +1544.72MB  \\
                            & (341,127) & & & & 30.28MB & 605.59MB & +575.31MB  \\
                            & (341,63) & & & & 15.02MB & 300.41MB & +285.39MB  \\ \hline
            
    \end{tabular}
    }
    \vspace{-3mm}
    \caption{Communication cost of $(\mu,\nu)\text{-}\mathsf{MPCFace}$ with various number of servers and parameters, along with the overhead compared to $\mathsf{2PCFace}$ under the same parameters. The number of identities is set to 1M. We note that the communication cost overhead of $(\mu,\nu)\text{-}\mathsf{MPCFace}$ in \textsf{Enroll} is identical to $(\mu,\nu)\text{-}\mathsf{IDFace}$}
    \label{tab:comm_cost_multi}
    \vspace{-4mm}
\end{table}

But in the latter, communication occurs between servers in both $\mathsf{Enroll}$ and $\mathsf{Identify}$: In $\mathsf{Enroll}$, the additive share of input $\mathbf{x}$ with $id$ is sent to other servers, and in $\mathsf{Identify}$ the output of $\mathsf{DBAND}$ is merged to a single server. Although the size of the former is identical to that from $(\mu,\nu)\text{-}\mathsf{IDFace}$, the latter is quite large. According to the communication cost formula of $\mathsf{2PCFace}$, we have that the latter carries $(\mu-1)(2D\beta + 2d)$, or asymptotically $O(\mu D)$, communication overhead. As we can infer from the result in Table~\ref{tab:comm_cost}, for the dataset with 1M enrolled identities, this overhead becomes extremely large (from 285MB to 1544MB) even when $\mu = 21$. The communication cost overhead with respect to various parameters and the number of servers $\mu$ are summarized in Table~\ref{tab:comm_cost_multi}.

\section{Extending \textsf{IDFace} to Other Types of Biometrics}\label{sec:supp_domain}
Although we focused on protecting face templates in $\mathsf{IDFace}$, we note that the core contribution of our work is to accelerate the computation of inner products in an encrypted domain while preserving the inner product value as much as possible. That is, our work can be adapted to other biometric authentication systems using different types of biometrics where cosine similarity is used for measuring the matching score, such as fingerprint recognition or speaker recognition.
In this section, we investigate the extensibility of $\mathsf{IDFace}$ to these tasks.

\subsection{Extension 1: Speaker Recognition}

We conducted experiments on evaluating the accuracy of three publicly available speaker recognition models with $\mathsf{IDFace}$, including RawNet3~\cite{jung22rawnet3}, MR-RawNet~\cite{kim24jmr_rawnet}, and ReDimNet~\cite{yakovlev24redimnet}. The source codes and pre-trained parameters of all these models are disclosed by the authors.
In particular, for ReDimNet, we utilized the (\texttt{b6, finetuned}) version in their paper, which showed the best performance among the versions disclosed by the authors. We used the cleaned version of VoxCeleb1 test dataset~\cite{nagrani2020voxceleb}, which is widely used for evaluation in speaker recognition tasks. 
We used full utterances of each audio sample in the benchmark dataset for feature extraction.
For the evaluation metrics, we used Error Equal Rate (EER) and minimum Detection Cost Function (MinDCF).
We note that the latter is used for the annual speaker recognition challenge held by NIST~\cite{NISTSRE24}. According to the official evaluation plan, we selected the parameter setting \texttt{id:1} in the audio track, which is $P_{target}=0.01, C_{FalseAlarm}=C_{Miss}=1$. EER and MinDCF can be represented in terms of FAR and FRR, where $\mathsf{FRR}(\tau) = 1 - \mathsf{TAR}(\tau)$, as follows:
\begin{gather}
\mathsf{EER} = \frac{\mathsf{FRR}(\hat{\tau}) + \mathsf{FAR}(\hat{\tau})}{2}, \ \text{where} \ \hat{\tau} = \arg \min_{\tau} \left| \mathsf{FAR}(\tau) - \mathsf{FRR}(\tau) \right|. \nonumber \\ 
\mathsf{MinDCF} = \min_{\tau} \left( C_{\mathsf{miss}} \cdot \mathsf{FRR}(\tau) \cdot (1 - P_{\text{target}}) + C_{\mathsf{FalseAlarm}} \cdot \mathsf{FAR}(\tau) \cdot P_{\mathsf{target}} \right). \nonumber
\end{gather}
As can be inferred from the definitions, smaller EER and MinDCF imply means the model has more discrimination power, \textit{i.e.}, having a better performance.

For parameter selection of $\mathsf{IDFace}$, we followed the same strategy as our analysis in face recognition task. We checked that the RawNet3 and MR-RawNet models output 256-dimensional templates, whereas 192-dimensinoal for ReDimNet. By following our analysis, we select $\alpha = \left\lfloor\frac{2}{3}d\right\rfloor$, which is 170 and 127 for the former two feature extractors and and the latter one, respectively.
In addition, we also select $\beta$ in accordance with $\alpha$ and our accuracy-efficiency trade-off trick: $\beta = 170, 127, 63$ for the former, and $\beta = 127, 63, 31$ for the latter.

The evaluation results can be found in Tab.~\ref{tab:speak}. From this table, we can observe that the accuracy drops in EER for all cases are less than 1\%, so does in MinDCF except for the case of $\beta = 31$ on ReDimNet. This demonstrates that our $\mathsf{IDFace}$ can be employed for protecting voice templates in large-scale identification tasks without significant accuracy degradation, allowing faster identification even than the case of protecting face templates.

\begin{table}[h]
\begin{minipage}{.49\linewidth}
    \resizebox{\linewidth}{!}{
\begin{tabular}{c|c|c|ccc}
\hline
\multirow{2}{*}{Model} & \multirow{2}{*}{Metric} & \multirow{2}{*}{Plain} & \multicolumn{3}{c}{$\mathsf{IDFace}$, Parameters: ($\alpha=170$)} \\ \cline{4-6} 
 &  &  & \multicolumn{1}{c|}{$\beta=170$} & \multicolumn{1}{c|}{$\beta=127$} & $\beta=63$ \\ \hline
\multirow{2}{*}{RawNet3} & EER(\%) & 0.83 & \multicolumn{1}{c|}{1.19} & \multicolumn{1}{c|}{1.22} & 1.30 \\
 & MinDCF & 0.105 & \multicolumn{1}{c|}{0.151} & \multicolumn{1}{c|}{0.135} & 0.158 \\ \hline
\multirow{2}{*}{MR-RawNet} & EER(\%) & 1.38 & \multicolumn{1}{c|}{1.56} & \multicolumn{1}{c|}{1.59} & 1.89 \\
 & MinDCF & 0.131 & \multicolumn{1}{c|}{0.159} & \multicolumn{1}{c|}{0.175} & 0.183 \\ \hline
\end{tabular}
    }
\end{minipage}
\;\;
\begin{minipage}{.49\linewidth}
    \resizebox{\linewidth}{!}{
   \begin{tabular}{c|c|c|ccc}
\hline
\multirow{2}{*}{Model} & \multirow{2}{*}{Metric} & \multirow{2}{*}{Plain} & \multicolumn{3}{c}{$\mathsf{IDFace}$, Parameters: ($\alpha = 127$)} \\ \cline{4-6} 
 &  &  & \multicolumn{1}{c|}{$\beta=127$} & \multicolumn{1}{c|}{$\beta=63$} & $\beta=31$ \\ \hline
\multirow{2}{*}{ReDimNet} & EER(\%) & 0.40 & \multicolumn{1}{c|}{0.69} & \multicolumn{1}{c|}{0.75} & 1.15 \\
 & MinDCF & 0.033 & \multicolumn{1}{c|}{0.089} & \multicolumn{1}{c|}{0.107} & 0.146 \\ \hline
\end{tabular} 
    }
\end{minipage}
\vspace{-3mm}
\caption{Various speaker verification benchmark results of non-protected template extractor (Plain) and IDFace for various ($\alpha$, $\beta$).}\label{tab:speak}
\vspace{-5mm}
\end{table}

\subsection{Extension 2: Fingerprint Recognition}

 We also conducted experiments on evaluating the accuracy of fingerprint recognition models, DeepPrint~\cite{engelsma2019learning}, for $\mathsf{IDFace}$ using FVC2004 dataset~\cite{maio2004fvc2004}. The pre-trained model is publicly available in \href{https://github.com/tim-rohwedder/fixed-length-fingerprint-extractors}{github} by the authors of \cite{rohwedder2023benchmarking}. The evaluation result can be found in Tab~\ref{tab:finger}. Although the baseline performance (Plain) is significantly lower compared to face and speaker recognition models, it is noteworthy that the accuracy drops in EER are still less than 1\%. Note that there are no publicly available large-scale datasets on fingerprint recognition, unlike face recognition or speaker recognition. Therefore, our pre-trained model is trained by synthetic fingerprints data generated by SFinGE~\cite{cappelli2004sfinge}.

\begin{table}[h]
\centering
\begin{tabular}{c|c|c|c|c|c}
\hline
Model & Metric & Plain & $(\alpha,\beta)=(341,63)$ & $(\alpha,\beta)=(341,127)$ & $(\alpha,\beta)=(341,341)$ \\ \hline
DeepPrint~\cite{engelsma2019learning} & EER(\%) & 5.295 & 5.919 & 5.621 & 5.617 \\ \hline
\end{tabular}
\vspace{-3mm}
\caption{Performance comparison between non-protected fingerprint template extractor (Plain) and IDFace for various $\beta$ using DeepPrint.}\label{tab:finger}

\end{table}

\subsection{Discussion on Accuracy Drop}

We briefly share our discussions about the tendency of accuracy drops appeared in each task.
First of all, in contrast of face recognition and fingerprint recognition tasks, we can observe that the speaker recognition models used in our analysis produces smaller dimension of templates than 512. 
Since the error term in $\epsilon_{\alpha, \theta}$ becomes larger as $d$ becomes smaller according to Proposition~\ref{prop_1_full}, one may expect that the accuracy drop in speaker recognition tasks would be larger than that from the other tasks.
However, we observed that the accuracy drop in speaker recognition models is comparable with that of both face and fingerprint recognition tasks. We suspect that this is because the amount of the change of cosine similarity value from our transformation is solely determined by the ratio between the number of nonzeros $(\alpha)$ and the dimension $d$. Here, if we assume that the error term \textit{randomly} contributes to the inner product value in terms of whether the decision of the recognition model is changed or not due to the transformation, then for a large number of sample pairs, we can expect that the effect of our transformation on the accuracy would be determined by the average error $\epsilon_{\alpha, \theta}$ in a macroscopic viewpoint.
We leave a more thorough analysis on this phenomenon as future work.

We also address another non-trivial phenomenon that appeared in Tab.~\ref{tab:speak}: the amount of accuracy drop in MinDCF is slightly larger than that of EER. To this end, we compared the thresholds corresponding to EER and MinDCF for each speaker recognition model, including RawNet3, MR-RawNet, and ReDimNet, obtaining (0.3376, 0.3383, 0.3519) and (0.4720, 0.4944, 0.4339) for each model in EER and MinDCF, respectively. Since the variation of the cosine similarity at the latter is larger than the former according to Fig.~\ref{fig:kirruk}, we can conclude that this phenomenon does in fact coincide with our theoretical analysis.

\section{Additional Figures}\label{sec:supp_J}
In this section, we provide additional figures that were omitted in the main text due to space constraints. In Figure~\ref{fig:encode}, we visualize how the proposed space-efficient encoding technique works, as introduced in Section~\textcolor{iccvblue}{4.2}. The left hand side shows the procedure $\mathsf{Encode}$ to encode $m$ transformed templates $x_{1}, \ldots, x_{m}$ into two vectors $x^{+}$ and $x^{-}$. The right side shows how we can compute the inner product value between each $m$ template and query with $\mathsf{Encode}$ at once, along with the decoding procedure $\mathsf{Decode}$ to retrieve the inner product results. With additive homomorphic encryption, we can process the above computation with the encoded vector $x^{\dagger}$ encrypted for $\dagger \in \{+, - \}$, as described in $\mathsf{IDFace.IP}_{\mathtt{DB}}$. For more detailed information, we recommend the reader refer to Section~\textcolor{iccvblue}{4.2}.

We also provide the visualization of the application scenario of $\mathsf{IDFace}$. Recall that $\mathsf{IDFace}$ enables securing a face identification system through encrypting the template database, without accompanying significant accuracy or efficiency degradation compared to the identification system without protection. For this reason, we expect that our $\mathsf{IDFace}$ can be used for any face identification system where the user physically presents his/her biometrics into the system, such as face identification systems for airports or entrances of the building. We remark that there already have been several real-world use cases, such as several airports in Europe participating in Star Alliance Biometrics\footnote{https://www.staralliance.com/en/biometrics} and securing building entrances in several facilities, e.g., public schools in the United States\footnote{https://wvpublic.org/four-counties-to-implement-facial-recognition-for-school-safety/}$^{, }$\footnote{https://www.nytimes.com/2020/02/06/business/facial-recognition-schools.html} and casinos\footnote{https://www.econnectglobal.com/blog/how-integrated-resorts-and-casinos-are-leveraging-facial-recognition-software-for-increased-security}$^{, }$\footnote{https://journalrecord.com/2022/10/25/casino-uses-facial-recognition-technology-to-supplement-security/}. For more details about the application scenario and the threat model, we recommend the reader refer to Section~\textcolor{iccvblue}{2.1} (biometric identification system) and Section~\textcolor{iccvblue}{3.2} (biometric identification system with database encryption). In addition, the threat model we considered in $\mathsf{IDFace}$ is also described in Section~\textcolor{iccvblue}{5.3}.

\begin{figure*}[t]
    \centering
    \includegraphics[width=.99\textwidth]{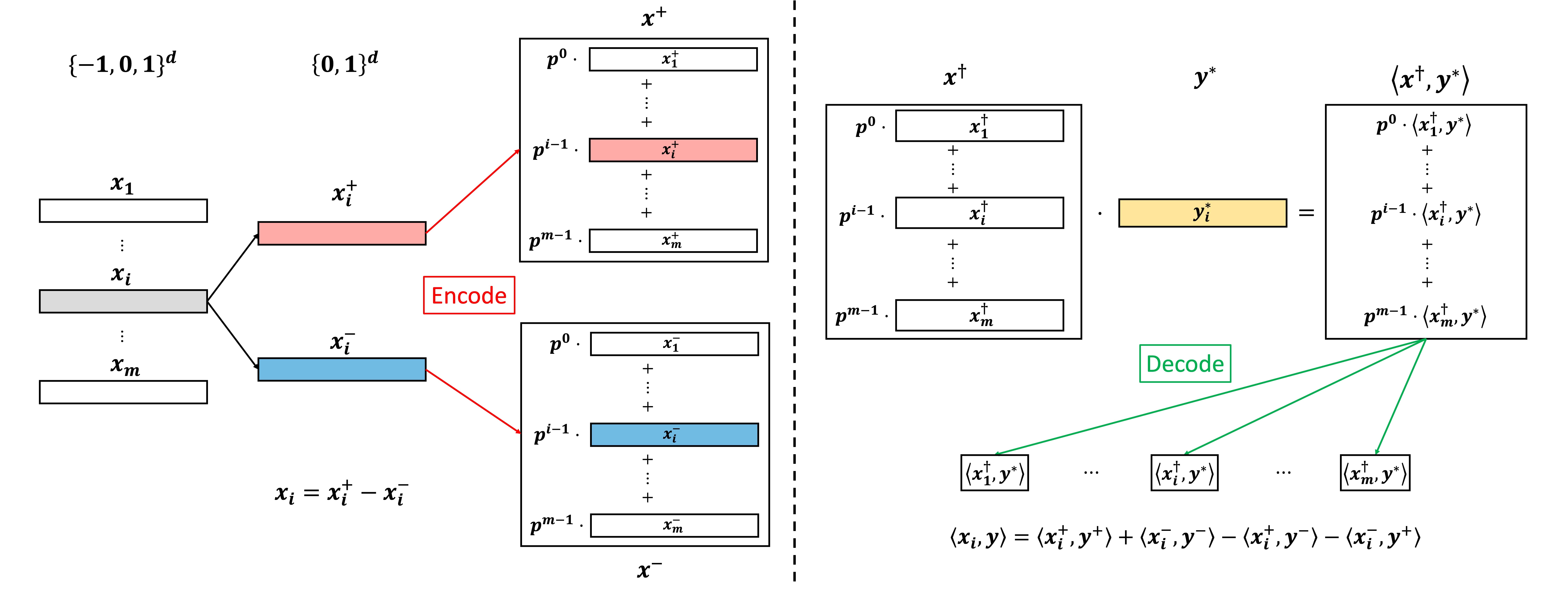}
    \caption{The overview of space efficient encoding technique. More details are provided in Section~\textcolor{iccvblue}{4.2}.}
    \label{fig:encode}
\end{figure*}

\begin{figure*}[t]
    \centering
    \includegraphics[width=.99\textwidth]{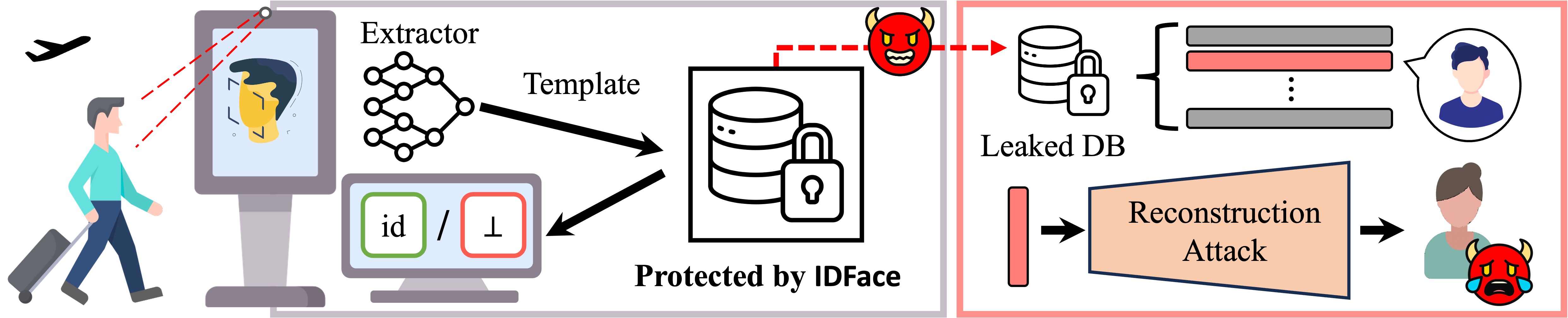}
    \caption{Target application scenario of $\mathsf{IDFace}$.}
    \label{fig:scenario}
\end{figure*}

\end{document}